\newif\ifprocs
\newif\ifarxiv
\newif\ifcomments
\newtheorem{theorem}{Theorem}[section]
\newtheorem{lemma}[theorem]{Lemma}
\newtheorem{definition}[theorem]{Definition}
\newtheorem{corollary}[theorem]{Corollary}
\newtheorem{infthm}[theorem]{Informal Theorem}
\theoremstyle{plain}
\newtheorem{claim}[theorem]{Claim}
\newtheorem*{rep@theorem}{\rep@title}
\newcommand{\newreptheorem}[2]{
\newenvironment{rep#1}[1]{
 \def\rep@title{#2 \ref{##1}}
 \begin{rep@theorem}}
 {\end{rep@theorem}}}
\newtheorem*{rep@corollary}{\rep@title}
\newcommand{\newrepcorollary}[2]{
\newenvironment{rep#1}[1]{
 \def\rep@title{#2 \ref{##1}}
 \begin{rep@corollary}}
 {\end{rep@corollary}}}
\def\compactify{\itemsep=0pt \topsep=0pt \partopsep=0pt \parsep=0pt}
\newcommand{\ProblemName}[1]{\textsf{#1}}
\newcommand{\MF}{\ProblemName{Max-Flow}\xspace}
\newcommand{\APMF}{\ProblemName{All-Pairs Max-Flow}\xspace}
\newcommand{\STMF}{\ProblemName{ST-Max-Flow}\xspace}
\newcommand{\MFV}{\ProblemName{Max-Flow}\xspace}
\newcommand{\SETH}{\ProblemName{SETH}\xspace}
\newcommand{\NSETH}{\ProblemName{NSETH}\xspace}
\newcommand{\NUNSETH}{\ProblemName{NUNSETH}\xspace}
\newcommand{\TOV}{\ProblemName{$3$OV}\xspace}
\newcommand{\MCDS}{\ProblemName{Min-Cut data structure}\xspace} 
\newcommand{\MCDSs}{\ProblemName{Min-Cut data structures}\xspace} 
\newcommand{\kSAT}{\ProblemName{$k$-SAT}\xspace}
\newcommand{\MC}{\ProblemName{Min-Cut}\xspace}
\DeclareMathOperator{\SOL}{SOL}
\DeclareMathOperator{\vol}{\textbf{volume}}
\DeclareMathOperator{\val}{\textbf{value}}
\newcommand\eps{\varepsilon}
\renewcommand\epsilon{\varepsilon}
\newcommand\tO{\ensuremath{\tilde O}}
\newcommand\hO{\ensuremath{\hat O}}
\newcommand{\Q}{\emph{\textbf{Q1}}\xspace}
\newcommand{\QQ}{\emph{\textbf{Q2}}\xspace}
\newcommand{\TPM}{\tilde{t}_{p}(m)\xspace}
\newcommand{\TMCM}{\tilde{t}_{mc}(m)\xspace}
\newcommand{\TMC}{\tilde{t}_{mc}\xspace}
\providecommand{\set}[1]{{\{#1\}}}
\providecommand{\card}[1]{\lvert#1\rvert}
\begin{document}

\title{
(Almost) Ruling Out SETH Lower Bounds for All-Pairs Max-Flow}

\author[1]{Ohad Trabelsi\thanks{Partially supported by the  NWO VICI grant 639.023.812.}
}
\affil[1]{Toyota Technological Institute at Chicago. Email: \texttt{ohadt@ttic.edu}
}

\maketitle

\begin{abstract}

The All-Pairs Max-Flow problem has gained significant popularity in the last two decades, and many results are known regarding its fine-grained complexity. Despite this, wide gaps remain in our understanding of the time complexity for several basic variants of the problem, including for directed or undirected input graphs that are edge- or node-capacitated, and where the capacities are unit or arbitrary.
In this paper, we aim to bridge these gaps by providing algorithms, conditional lower bounds, and non-reducibility results.
Notably, we show that for most problem settings, deterministic reductions based on the Strong Exponential Time Hypothesis (SETH) cannot rule out $O(n^{4-\varepsilon})$ time algorithms for some small constant $\varepsilon>0$, under a hypothesis called NSETH.

To obtain our results for undirected graphs with unit node-capacities (aka All-Pairs Vertex Connectivity), we design a new \emph{randomized Las Vegas} $O\big(m^{2+o(1)}\big)$ time combinatorial algorithm. This is our main technical result, improving over the recent $O\big(m^{11/5+o(1)}\big)$ time \textit{Monte Carlo} algorithm [Huang et al., STOC 2023] and matching their $m^{2-o(1)}$ lower bound (up to subpolynomial factors), thus essentially settling the time complexity for this setting of the problem.
\end{abstract}

\newpage
\section{Introduction}
\label{sec:intro}

The maximum $st$-flow problem, denoted \MF, has been extensively studied in theoretical computer science for almost 70 years.
This problem has become a key tool in algorithm design, 
with numerous applications of the problem and its techniques~\cite{AMO93,FF56,AHK12,WL93}. 
In a recent breakthrough,~\cite{CKLP22,BCG+23} gave an almost linear $\hO(m)$ time algorithm for the problem~\footnote{Henceforth, $n$ denotes the number of nodes, and $m$ the number of edges in an input graph.}\textsuperscript{,}\footnote{$\hO(\cdot)$ hides $n^{o(1)}$ factors and $\tO(\cdot)$ hides $\polylog(n)$ factors.}\textsuperscript{,}\footnote{We say that a time bound $T(m)$ is almost linear if it is bounded by $m^{1+o(1)}$}.
But what if we desire to find the \MF value for all pairs? This problem is named \APMF, and the naive approach for it always works: simply apply a \MF routine for every pair of nodes, thus resulting in an $n^2 \cdot m^{1+o(1)}$ time algorithm. 
Then the following question arises:
\begin{center}
\emph{\textbf{Q1}: Can All-Pairs Max-Flow be solved faster than the naive approach? \\Alternatively, can we show conditional lower bounds?}
\par\end{center}
In the main variants of \APMF considered in the literature, the input graphs may be:
\begin{itemize}
\item directed or undirected,
\item edge-capacitated or node-capacitated, and 
\item with unit-capacities, also called \emph{connectivities}, or with arbitrary (usually polynomially-bounded) capacities.
\end{itemize}
It turns out that undirected graphs with edge capacities are particularly easy to solve, as a recent exciting line of work~\cite{LP20,LP21,AKT21_stoc,AKT21,LPS21} on Gomory-Hu tree\footnote{This is a data structure that succinctly represents all min $st$-cuts in an input undirected graph with edge capacities.} construction culminated with an almost linear $\hO(m)$ time algorithm~\cite{AKLPST22,ALPS23}.
However, the remaining variants are unlikely to be solved in almost linear time since hardness results are known for them (see Table~\ref{table:bounds}), and so henceforth we call them \emph{the hard variants}.
In this context, the Strong Exponential Time Hypothesis (\SETH)\footnote{\SETH postulates that for every $\varepsilon>0$ there exists an integer $k>0$ such that \kSAT requires $\Omega(2^{(1-\varepsilon)n})$ time.}~\cite{Impa01spar} has been particularly useful, proving $n^{3-o(1)}$ time conditional lower bounds for hard variants in sparse $m=\tO(n)$ graphs with (node or edge) capacities~\cite{KT18,AKT20}, and an $n^{3-o(1)}$ conditional lower bound for dense $m=O(n^2)$ graphs with unit edge-capacities.

Despite the lower bounds mentioned, the answer to \Q is positive in some restricted cases. For unit capacities in hard variants where either the input graph is sparse or the desired flow value is bounded, nontrivial algorithms are known. These include a network-coding based $\tO(m^{\omega})$ time algorithm for directed graphs with unit edge-capacities~\cite{CLL13}, an $(nk)^{\omega}$ time adaptation of the latter to the $k$-bounded version of unit node-capacities (i.e., where the flow value is bounded by $k$), and an $\hO(mk^3)$ time algorithm for the $k$-bounded version of undirected graphs with unit node-capacities. The latter was obtained through an adaptation of an approximate Gomory-Hu tree algorithm~\cite{LP21} and employing an isolating cut lemma~\cite{LP20,AKT21_stoc,LNPSY21}. Very recently, via a high-degree low-degree method previously used in the context of Gomory-Hu trees~\cite{AKT20}, the $\hO(mk^3)$ algorithm above was adapted to an $\hO(m^{11/5})$ time algorithm for the unbounded version~\cite{HLSW23}.
Still, in spite of extensive research, the naive approach remains the fastest algorithm for general dense graphs in all hard variants, while no matching conditional lower bounds are known.
A relaxed phrasing of \Q restricts attention to combinatorial algorithms.\footnote{\emph{combinatorial algorithms} commonly refers to algorithms avoiding fast matrix multiplication as a subroutine.}
\newpage
\begin{center}
\emph{\textbf{Q2}: What is the time complexity of All-Pairs Max-Flow when restricted to combinatorial algorithms?}
\par\end{center}

A typical hope is that this restriction would facilitate better conditional lower bounds.
Indeed, at least for dense graphs, \QQ is settled. By a reduction from the problem of detecting a $4$-clique in a graph, it was shown~\cite{A+18} that all directed variants are unlikely to admit $O(n^{4-\varepsilon})$ time combinatorial algorithms.
Later, this technique was extended to undirected graphs with unit node-capacities~\cite{HLSW23}, implying an $n^{4-o(1)}$ conditional lower bound against combinatorial algorithms for them as well.
This result finally established that \emph{no} hard variant can be solved faster than the naive approach by a combinatorial algorithm, at least when the input graph is dense.
Notably, all these $n^{4-o(1)}$ conditional lower bounds can also be expressed as $m^{2-o(1)}$ conditional lower bounds for all $m$ (accomplished by simply adding isolated nodes).
This still leaves a gap for less dense graphs in various settings, including undirected graphs with unit node-capacities, for which the current upper bound is $\min\{\hO(n^2\cdot m), \hO(m^{11/5})\}$ (see Figure~\ref{Figs:Algs}).
\subsection{A New Algorithm for Graphs with Unit Node-Capacities}

In our main result we resolve \textbf{\emph{Q2}} for this setting across all densities by closing the gap with the known conditional lower bound against combinatorial algorithms (see also Figure~\ref{Figs:Algs}), settling the time complexity of \APMF in these settings.

\begin{theorem}\label{thm:algorithm:main}
There is a randomized Las Vegas algorithm that given an undirected graph $G=(V,E)$ with unit node-capacities, makes $\tO(m)$ calls to \MF in $G$ and spends $\tO(mn)$ time outside of these calls, and can:
\begin{itemize}
\item construct a data structure of size $\tO\left(m^{3/2}\right)$ that stores a set $\mathcal{C}$ of $\tO(m)$ vertex-cut separators\footnote{For a pair $u,v\in V$, a vertex-cut $uv$-separator is a set of nodes separating $u$ and $v$.}, such that given a queried pair $u,v\in V$ outputs in $\tO(1)$ time a pointer to an optimal $uv$-separator in $\mathcal{C}$.
\item
construct a data structure of size $\tO(m)$ that given a queried pair $u,v\in V$ outputs in \\$\tO(\MF(u,v))$ time\footnote{\MF values greater than $\sqrt{m}$ can be output in $\tO(1)$ time using high-degree considerations.} the value $\MF(u,v)$
(or simply in $\tO(1)$ time in an alternative Monte Carlo version).
\end{itemize}
\end{theorem}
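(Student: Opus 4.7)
The plan is to combine the high-degree/low-degree split used by HLSW23 with the isolating-cut-based Gomory--Hu recursion of AKLPST22 and AKT21\_stoc, adapted to unit node-capacities, pushing the total max-flow call count down to $\tO(m)$. Fix the threshold $\tau=\sqrt{m}$ and partition $V$ into $H=\{v:\deg_G(v)>\tau\}$ (so $|H|\le 2\sqrt{m}$) and $L=V\setminus H$. Since capacities are unit, $\MF(u,v)\le\min(\deg_G(u),\deg_G(v))$, so any pair with $\MF>\sqrt{m}$ must lie in $H\times H$ (and there are only $O(m)$ such pairs), while any pair touching $L$ has $\MF\le\sqrt{m}$ and a minimum vertex-cut separator of size at most $\sqrt{m}$.

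For the $H\times H$ pairs I would run a Gomory--Hu-style recursion restricted to $H$: iteratively pick an unsplit pair in $H$, compute a minimum unit-node separator in $G$ via one $\hO(m)$-time max-flow call on the split-node auxiliary graph, and recurse on the two sides restricted to $H$. This yields a tree of $O(\sqrt{m})$ separators using $O(\sqrt{m})$ max-flow calls; each $H$-pair query is answered in $\tO(1)$ by returning the path-minimum separator via LCA preprocessing. The total storage for this class is $O(\sqrt{m}\cdot n)=\tO(m^{3/2})$.

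For pairs that touch $L$ I would apply the isolating-cut lemma recursively in the AKLPST22 style: at each scale $k\in\{1,2,4,\ldots,\sqrt{m}\}$, maintain terminal subsets whose per-terminal isolating cuts witness all pairwise minimum separators of value $k$, and invoke the $O(\log n)$-max-flow-call isolating-cut subroutine (LP20, AKT21\_stoc, LNPSY21) after a node-capacity adaptation. A doubling argument bounds the number of terminals at scale $k$ by $\tO(m/k)$, yielding $\sum_k \tO(m/k)\cdot O(\log n)=\tO(m)$ max-flow calls and $\tO(m)$ produced separators of size $O(k)$ each, with aggregate storage $\sum_k \tO(m/k)\cdot O(k)=\tO(m)$. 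Combined with the $H\times H$ contribution, the total is the claimed $\tO(m^{3/2})$, and queries resolve via a pointer table in $\tO(1)$. The $\tO(m)$-size value-only data structure then stores the $H\times H$ values directly in a hash table (only $O(m)$ entries) and, for $L$-touching pairs, recovers the value in $\tO(\MF(u,v))$ time by walking a separator-contribution tree of depth $O(\MF(u,v))$; the Monte Carlo $\tO(1)$ variant uses perfect hashing on pair identifiers.

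The main obstacle is executing the scale-doubling isolating-cut recursion in the node-capacitated setting while covering every pair with only $\tO(m)$ total max-flow calls. In node-capacitated graphs the separator vertices belong to \emph{neither} side of the partition, which complicates the AKLPST22 expander-assisted decomposition originally designed for edges; I expect this to require a quotient-graph construction where separator nodes act as virtual unit-capacity boundaries and are re-attached to both recursive subproblems without double-counting the flow crossing them. Verifying the Las Vegas guarantee (rather than Monte Carlo) will additionally rely on a final round of max-flow verifications that certify each produced separator is indeed optimal for every pair routed to it.
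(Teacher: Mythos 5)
Your proposal takes a fundamentally different route from the paper, and it has two genuine gaps that you only partially acknowledge.

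First, for the $H\times H$ pairs you propose a Gomory--Hu-style split-off recursion that yields a tree of $O(\sqrt{m})$ separators, answering queries by path-minimum. This does not work: vertex connectivity does not admit a Gomory--Hu (cut-equivalent) tree. The uncrossing lemma that justifies the Gomory--Hu recursion for edge cuts fails for vertex separators (the well-known Benczúr counterexample), and the ``path-minimum'' claim is exactly the cut-equivalence property that is false here. Since $|H|=O(\sqrt{m})$ means there are only $O(m)$ such pairs, the fix is trivial (just call \MF on all $O(m)$ pairs directly, which fits the budget and storage), but the proposal as written relies on a false structural theorem. Related work (HLSW23, PSY22) avoids this by using \emph{element-connectivity} Gomory--Hu trees, a strictly weaker object that only certifies pairs of terminals with small degree; you never invoke element connectivity.

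Second, and more seriously, the part you yourself flag as the ``main obstacle''---executing a scale-doubling isolating-cut recursion in the node-capacitated setting with $\tO(m)$ total \MF calls---is precisely the step that the paper gives up on. The separator nodes belonging to neither side of the partition is exactly what breaks the AKLPST22 expander/quotient-graph recursion, and the hinted quotient construction with ``virtual unit-capacity boundaries'' is not a proof. Moreover, the paper explicitly argues (Section~1.2.1) that the element-connectivity/isolating-cut route, which is what your $L$-pair handling is, is stuck at roughly $m^{11/5}$ due to a set-sampling blow-up inherent to that framework; that is the approach of HLSW23 which this paper is improving on. The paper instead abandons isolating cuts entirely and adapts the single-pivot expansion framework of AKT20 (the $(1+\varepsilon)$-approximate Gomory--Hu algorithm for edge-capacitated graphs): pick a uniformly random pivot $p$, query $\MF(p,u)$ for all $u$ in the current instance, reassign nodes into parts via a carefully ordered ``small-side'' process, reinforce with extra direct queries both between high-degree pairs and between each $u$ and the nodes in its assigned separator $C_{p,f(u)}$, and recurse, with a tournament argument (Lemma~\ref{Lemma:Tournament}) bounding the recursion depth to $O(\log n)$ and a separate stage/repetition scheme giving the Las Vegas guarantee. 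None of your proposed machinery (isolating cuts, scale doubling of terminals, quotient minors) appears there, and the key difficulties the paper has to solve---the failure of the triangle inequality $\MF(u,v)\ge\min\{\MF(u,w),\MF(v,w)\}$ for vertex cuts, and the danger of the pivot itself lying in $C_{u,v}$---are not addressed by your plan. As it stands, the proposal leaves the central step unproved and leans on a structure (vertex-cut Gomory--Hu trees) that does not exist.
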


To obtain this bound, we depart from the recent practice of using element-connectivity\footnote{Given a graph $G=(V,E)$ and a terminal subset $T\subseteq V$, the element-connectivity between $u,v\in T$ is the maximum number of paths between $u$ and $v$ in $G$ that are disjoint with respect to $E$ and also to $V\setminus T$.}
for constructing vertex-connectivity data structures~\cite{PSY22,HLSW23}, and adapt an algorithm previously employed for $(1+\varepsilon)$-approximate \APMF on undirected, edge-capacitated graphs~\cite{AKT20}.
Note that our algorithm is Las Vegas, unlike many of the recent algorithms for \APMF which are Monte Carlo~\cite{PSY22,AKLPST22,HLSW23,ALPS23}.
This algorithm is also combinatorial, and runs in $\hO(m^2)$ time using the recent near linear $\hO(m)$ time \MF algorithm~\cite{CKLP22,BCG+23}, improving on the recent $\hO(m^{11/5})$~\cite{HLSW23} time algorithm for the problem.

\begin{figure}[t]
       \includegraphics[width=0.8\textwidth,left]{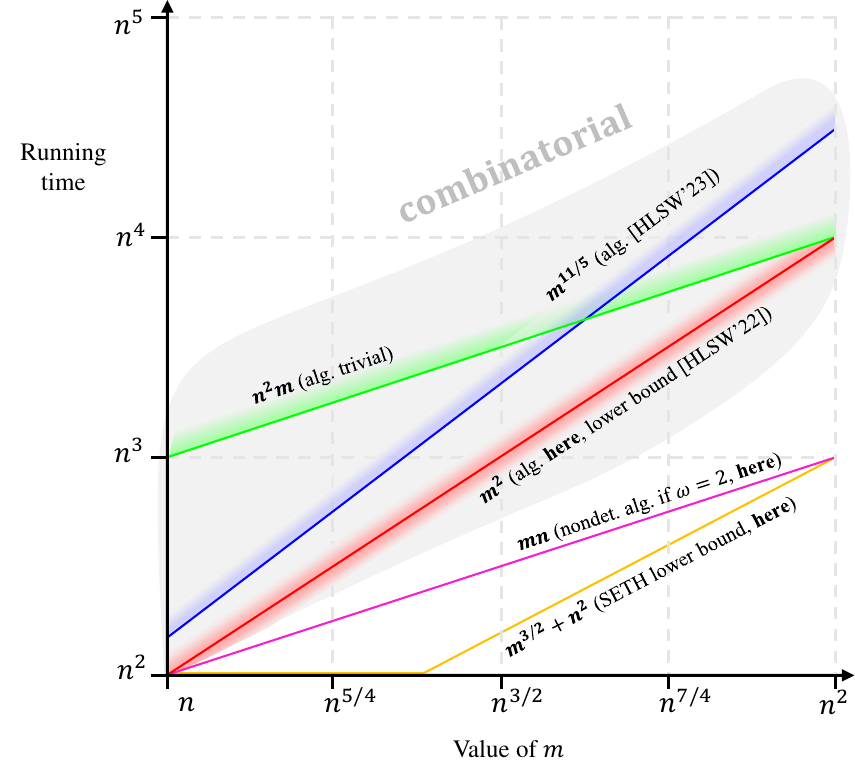}
   \caption[-]{
New and old bounds for \APMF on undirected graphs with unit node-capacities.
   }
   \label{Figs:Algs}
\end{figure}

\subsection{Other Results}
Following our resolution of \QQ for unit node-capacities, the main open question remains: can we answer \Q by proving $n^{4-o(1)}$ conditional lower bounds for general algorithms too? 
Perhaps surprisingly, we establish a strong barrier to the possibility of such a result.
We address \textbf{\emph{Q1}} by showing that for most settings of \APMF, deterministic \SETH-based reductions proving $n^{4-o(1)}$ conditional lower bounds can be ruled out assuming a hypothesis called \NSETH\footnote{The Nondeterministic Strong Exponential Time Hypothesis (\NSETH) asserts that \SETH holds against co-nondeterministic algorithms.}~\cite{CGIMPS16}.
We do so by designing subquartic nondeterministic algorithms\footnote{A nondeterministic algorithm can make guesses (i.e. nondeterministic choices) and eventually produce either a correct output, or ``don't know'' (i.e. aborts). Importantly, every input has at least one sequence of guesses leading to a correct output.} suited for each setting, and then applying a non-reducibility framework by Carmosino et al.~\cite{CGIMPS16}.
First, we present a general method by which we obtain our nondeterministic algorithms.
In what follows, the volume of an $st$-flow $f_{st}$, denoted $\vol(f_{st})$, is the number of edges in the support of $f_{st}$ in the input graph.

\begin{infthm}\label{thm:informal_nondet}
Suppose there is a randomized Las Vegas algorithm for \APMF that makes $q$ \MF calls on input graphs $G=(V,E)$ and spends $\hO(t)$ time outside of these calls, and let $\mu$ be a volume upper bound on the maximum $st$-flow in $G$, for all pairs $s,t\in V$. Then there is a nondeterministic algorithm for \APMF with time $\hO(q\cdot\mu+n^{\omega(\log_{n}q,1,1)}+t)$\footnote{$\omega(a,b,c)$ is the time it takes to multiply an $n^a\times n^b$ matrix by an $n^b\times n^c$ matrix.}.
\end{infthm}  
Note that for $\omega=2$, this time becomes $\hO(q\cdot\mu+qn+t)$.
In the remaining of this section we present our non-reducibility results for each variant.

\subsubsection{Unit Node-Capacities}

For undirected graphs with unit node-capacities, we obtain the following non-reducibility bound by combining Theorem~\ref{thm:informal_nondet} with the natural $O(n)$ flow volume bound and the algorithm in Theorem~\ref{thm:algorithm:main}.

\begin{theorem}\label{thm:undirected_unit_node}
Under \NSETH, for any fixed $\varepsilon>0$ there is no deterministic fine-grained reduction~\footnote{Similar to existing literature, we rule out black-box Turing reductions. See~\cite{CGIMPS16} and Section~\ref{prems} in this paper for more formal definitions.} proving an $\Omega(n^{\omega(\log_n m,1,1)+\varepsilon})$ \SETH-based lower bound for \APMF on undirected graphs with unit node-capacities.
\end{theorem}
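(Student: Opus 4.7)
The plan is to instantiate the non-reducibility framework of Carmosino et al.~\cite{CGIMPS16}, which states that under \NSETH no deterministic fine-grained \SETH-based reduction can prove a lower bound $L$ for a problem whose nondeterministic \emph{and} co-nondeterministic time complexities are a polynomial factor below $L$. It therefore suffices to exhibit a nondeterministic algorithm for \APMF on undirected unit-node-capacity graphs running in time $\hat O\!\left(n^{\omega(\log_n m,1,1)}\right)$ together with a matching co-nondeterministic counterpart, since then any $\Omega\!\left(n^{\omega(\log_n m,1,1)+\varepsilon}\right)$ \SETH-based lower bound would, after composition with these algorithms, yield $O(2^{(1-\varepsilon')n})$-time nondeterministic and co-nondeterministic $k$-SAT algorithms for some $\varepsilon'>0$, contradicting \NSETH.

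The nondeterministic algorithm is obtained by plugging the Las Vegas algorithm of Theorem~\ref{thm:algorithm:main} into the meta-theorem of Theorem~\ref{thm:informal_nondet} with parameters $q=\tilde O(m)$ (the number of \MF calls) and $t=\tilde O(mn)$ (the time spent outside those calls), together with the flow-volume bound $\vol(f_{st})=O(n)$ valid in this setting: a maximum $st$-flow in a unit-node-capacity undirected graph decomposes into internally vertex-disjoint $st$-paths, whose total edge count is thus at most $n$. Substituting into Theorem~\ref{thm:informal_nondet} yields running time
\[
\hat O\!\left(q\cdot\vol(f_{st}) + n^{\omega(\log_n q,1,1)} + t\right) \;=\; \hat O\!\left(mn + n^{\omega(\log_n m,1,1)}\right) \;=\; \hat O\!\left(n^{\omega(\log_n m,1,1)}\right),
\]
where the last equality uses the trivial bound $\omega(a,1,1)\geq 1+a$ on the rectangular matrix-multiplication exponent (the output matrix has $n^{1+a}$ entries), applied at $a=\log_n m$, so that $mn = n^{1+\log_n m}$ is absorbed.

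The co-nondeterministic counterpart follows the same template but instead ships vertex separators certifying \MF upper bounds: by Theorem~\ref{thm:algorithm:main}, $\tilde O(m)$ optimal $uv$-separators suffice to cover all pairs, together they occupy total size $\tilde O(m^{3/2})$, and each can be verified as a valid $uv$-vertex-cut by a linear-time BFS-style check, so the same arithmetic yields the same time bound. Combined with the nondeterministic algorithm above and the framework of~\cite{CGIMPS16}, this completes the proof.

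The main delicate point will be verifying that the guess-and-check template of Theorem~\ref{thm:informal_nondet} really delivers both an NTIME and a coNTIME algorithm for the all-pairs \emph{functional} problem at the claimed budget, since certifying a maximum flow value needs matching flow and cut witnesses simultaneously per pair. Here the $\vol(f_{st})=O(n)$ bound is what makes the ``ship a flow per pair'' strategy affordable and is specific to the unit-node-capacity regime; extending the same non-reducibility to edge-capacitated settings requires other succinct representations of flows, which is precisely where the remaining variants in the paper diverge.
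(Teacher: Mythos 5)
Your main route---instantiating Theorem~\ref{thm:informal_nondet} with $q=\tO(m)$, $t=\tO(mn)$ from Theorem~\ref{thm:algorithm:main} and the $O(n)$ flow-volume bound for unit node capacities, then invoking the \NSETH non-reducibility framework---is essentially the paper's, and your arithmetic absorbing $mn$ into $n^{\omega(\log_n m,1,1)}$ via $\omega(a,1,1)\geq a+1$ is correct. There is, however, a substantive gap in your ``co-nondeterministic counterpart,'' together with a secondary framework mismatch.

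The gap is the cut-verification step. You propose to ship $\tO(m)$ separators and verify each ``as a valid $uv$-vertex-cut by a linear-time BFS-style check.'' Verifying that a guessed set $C$ is indeed a $uv$-separator requires exploring $G\setminus C$ to confirm $u$ and $v$ are disconnected, which costs $\Theta(m)$ per pair; with $\tO(m)$ separators the total is $\tO(m^2)$, which exceeds the target $n^{\omega(\log_n m,1,1)}$ (for $\omega=2$ the target is $mn$, while $m^2$ reaches $n^4$ on dense graphs). The paper's central technical device, Lemma~\ref{lemma:nondetmaxflows}, is precisely a \emph{batched} cut verification: it stacks all $|Q|$ guessed cut indicator vectors into a $|Q|\times n$ binary matrix and multiplies it by the graph's incidence matrix, so that all cut values are confirmed simultaneously in $n^{\omega(\log_n|Q|,1,1)}$ time, sidestepping the per-pair traversal cost entirely. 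Without this insight, the claimed budget on the cut side is not met.

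Secondarily, the paper does not build separate NTIME and coNTIME algorithms and then invoke the decision-problem form of the CGIMPS framework. It instead treats \MCDS construction as a \emph{total function} and applies Theorem~\ref{thm:nseth_total}: a single nondeterministic algorithm producing a certified output suffices, and to certify each stored max-flow value that algorithm must simultaneously verify a flow lower bound and a cut upper bound---exactly what Lemma~\ref{lemma:nondetmaxflows} does inside Theorem~\ref{thm:nondet}. In other words, Theorem~\ref{thm:informal_nondet} already encapsulates both directions for you; the extra BFS-based co-nondeterministic algorithm you append is both redundant and, as above, too slow.
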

For $\omega=2$, the bound becomes $\Omega(mn^{1+\varepsilon})$.
A question posed by~\cite{HLSW23} is whether conditional lower bounds for general algorithms could match their $m^{2-o(1)}$ lower bounds against combinatorial algorithms.
Theorem~\ref{thm:undirected_unit_node} conditionally answers this question negatively, for \SETH-based results.
That said, we observe a \SETH lower bound for this setting, which is obtained by a straightforward adaptation of the conditional lower bound in~\cite{AKT20_soda} for general node-capacities to the unit node-capacities case, together with a recently developed source-sink isolating gadget~\cite{HLSW23}.
\begin{theorem}\label{thm:CLB}
If for some fixed $\varepsilon>0$ \APMF on undirected graphs with unit node-capacities can be solved in time $O(n^2+m^{3/2-\varepsilon})$ then \SETH is false.
\end{theorem}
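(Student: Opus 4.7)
My plan is to adapt the \SETH-based reduction of~\cite{AKT20_soda}, which establishes an $\Omega(n^2 + m^{3/2-o(1)})$ lower bound for \APMF on undirected graphs with \emph{general} node-capacities, to the unit node-capacity setting. The adaptation realizes the required large capacities by explicit unit-capacity gadgets, invoking in particular the source-sink isolating gadget of~\cite{HLSW23} at the designated source and sink terminals.

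In more detail, the reduction of~\cite{AKT20_soda} starts from an OV-style instance of size $N$ and, for an appropriate choice of internal parameters, produces an undirected node-capacitated graph $G$ with $n = \tilde{\Theta}(N)$ and $m$ edges for which an $O(n^2 + m^{3/2-\varepsilon})$-time \APMF algorithm would yield an OV algorithm faster than the \SETH lower bound. The node-capacities in $G$ are polylogarithmic on all internal nodes; only a bounded number of designated source/sink terminals require larger effective capacity in order to absorb the encoded flows. I would convert $G$ to a unit-node-capacity graph $G'$ in two steps. First, each internal node of polylogarithmic capacity $c$ is replaced by $c$ parallel unit-capacity copies via a standard node-splitting construction, which preserves every max-flow value and inflates $n$ and $m$ by at most polylogarithmic factors. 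Second, each high-capacity terminal is replaced by the source-sink isolating gadget of~\cite{HLSW23}, which faithfully simulates a single high-capacity terminal using only unit-capacity nodes, so that the max-flow between the resulting terminal clusters in $G'$ equals the max-flow between the corresponding terminals in $G$. The resulting graph $G'$ still has $n' = \tilde{\Theta}(n)$ nodes and $m' = \tilde{\Theta}(m)$ edges, and the \APMF values of the queried pairs still decode the OV answer.

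Consequently, an $O((n')^2 + (m')^{3/2-\varepsilon})$-time \APMF algorithm for unit node-capacities would yield an OV algorithm beating the \SETH bound, refuting \SETH via the reduction of~\cite{AKT20_soda}. The main obstacle is to verify that the isolating gadget of~\cite{HLSW23} can be attached to the specific terminals of~\cite{AKT20_soda}'s construction without blowing up the edge count beyond $\tilde{\Theta}(m)$ and without introducing spurious min-cuts that would alter the encoded flow values; since the gadget is designed precisely to convert between high-capacity terminal access and the unit-node-capacity model, I would invoke it as a black box and check these two invariants for the terminals at hand.
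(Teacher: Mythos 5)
Your high-level plan matches the paper's: adapt the node-capacitated reduction of~\cite{AKT20_soda}, convert capacitated nodes to unit-capacity nodes by splitting, and invoke the source-sink isolating gadget of~\cite{HLSW23}. However, your description of how these pieces fit together has two substantive errors that would derail the proof.

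First, you have the capacity profile of the~\cite{AKT20_soda} construction backwards. In the intermediate graph $G_I$ of that reduction, it is the \emph{terminals} $V_1$ and $V_3$ that have \emph{unit} capacity, while the \emph{internal} nodes carry the large capacities — $3d$ nodes of capacity $n$, $n$ nodes of capacity $d-1$, and one node of capacity $n(d-1)$. These are polynomial, not polylogarithmic, so your claim that node-splitting ``inflates $n$ and $m$ by at most polylogarithmic factors'' and that $G'$ has $n'=\tilde{\Theta}(n)$ nodes and $m'=\tilde{\Theta}(m)$ edges is wrong: after splitting, $G'_I$ has $\Theta(nd)$ nodes and $\Theta(n^2 d^2)$ edges (it is dense), and the lower bound must be tracked in these inflated parameters, which is exactly how Lemma~\ref{lemma:CLB} is phrased. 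Your parametric accounting would not yield the stated theorem without this correction.

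Second, and more critically, you assign the source-sink isolating gadget a job it does not do. The gadget does not ``faithfully simulate a single high-capacity terminal using only unit-capacity nodes''; indeed the terminals $V_1,V_3$ already have unit capacity and there is nothing there to simulate. Its actual role is to \emph{isolate} the queried source $s\in V_1$ and sink $t\in V_3$ from the remaining terminals $(V_1\cup V_3)\setminus\{s,t\}$, guaranteeing $\MF_{R\cup Q}(s,t)=\MF_{R_{st}}(s,t)+|V_1|+|V_3|$ with $R_{st}$ the graph with those other terminals deleted. This replaces the original trick of~\cite{AKT20_soda}, which multiplied all internal capacities by $2n$ to make the at most $2(n-1)$ zigzagging flow paths through other terminals negligible — a trick that is unavailable once you insist on unit capacities. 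If you apply the gadget to ``high-capacity terminals'' that do not exist, you will neither eliminate the zigzagging issue nor obtain the offset identity above, and the encoded 3OV decision will be corrupted. You need to apply the gadget once, with $X=V_1$ and $Y=V_3$, after the node-splitting step, precisely to kill the zigzag paths.
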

\subsubsection{General node-capacities.}

For directed graphs with general node-capacities\footnote{The running times of all our algorithms do not depend on the numerical values of the capacities in the input.}, we obtain the following non-reducibility bound by combining Theorem~\ref{thm:informal_nondet} with the naive algorithm and our new flow volume bounds.
\begin{theorem}\label{thm:nodecap}
Under \NSETH, for any fixed $\varepsilon>0$ there is no deterministic fine-grained reduction proving an $\Omega(n^{\omega(2,1,1)+\varepsilon})$ \SETH-based lower bound for \APMF on directed graphs with node-capacities.
\end{theorem}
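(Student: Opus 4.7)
The plan is to instantiate the nondeterministic framework of Theorem~\ref{thm:informal_nondet} on the naive \APMF algorithm together with a new flow-volume bound, and then invoke the non-reducibility machinery of~\cite{CGIMPS16}. First, I would establish the key combinatorial ingredient: in any directed graph with node-capacities, there exists a maximum $s$-$t$ flow whose support contains at most $O(n)$ edges. The cleanest route is a linear-programming argument---the natural max-flow LP has only $O(n)$ non-trivial constraints (one flow-conservation equality per non-terminal and one node-capacity inequality per vertex), so after adding slacks and passing to standard form, every basic feasible optimum has at most $O(n)$ positive coordinates, i.e.\ at most $O(n)$ edges carry positive flow. A matching-value min vertex cut, which by LP duality has at most $n$ vertices, serves as the optimality certificate.

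Next, I would invoke Theorem~\ref{thm:informal_nondet} with $q=n^2$ (one max-flow call per ordered pair), $t = \tO(n^2)$ (naive bookkeeping), and $\vol(f_{st}) = O(n)$, obtaining a nondeterministic algorithm for \APMF of time
\[ \tO\bigl(q \cdot \vol(f_{st}) + n^{\omega(\log_n q,1,1)} + t\bigr) \;=\; \tO\bigl(n^3 + n^{\omega(2,1,1)} + n^2\bigr) \;=\; \tO\bigl(n^{\omega(2,1,1) + o(1)}\bigr), \]
where the matrix-multiplication term dominates since $\omega(2,1,1) \geq 3$. Running the same scheme but guessing a min vertex cut per pair rather than a flow yields a co-nondeterministic algorithm of matching time, placing \APMF in the setting required by the CGIMPS template. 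Finally, if a deterministic fine-grained Turing reduction from a \SETH-hard problem (say $k$-SAT on $N$ variables) to \APMF certified an $\Omega(n^{\omega(2,1,1)+\varepsilon})$ lower bound, composing it with this co-nondeterministic algorithm would give a co-nondeterministic $k$-SAT algorithm of time $O(2^{(1-\delta)N})$ for some $\delta=\delta(\varepsilon)>0$, contradicting \NSETH.

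The main obstacle is the first step: although the LP bound on support size is conceptually clean, for the framework to apply the nondeterministic verifier must read both the $O(n)$-support flow and the matching-value min vertex cut within the $\tO(\vol(f_{st}))$ per-pair budget, and the matrix-multiplication aggregation hidden inside Theorem~\ref{thm:informal_nondet} must be adapted to handle weighted rather than unit node-capacities. Polynomially-bounded capacities mean that each flow coordinate fits in $\poly(\log n)$ bits, so representation is not an issue; the delicate piece is confirming that the aggregation step goes through in the weighted setting, which I expect to be a straightforward adaptation of the unit-capacity verification underlying Theorem~\ref{thm:undirected_unit_node}.
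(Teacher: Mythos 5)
Your proposal is correct, and it reaches the same headline running time and the same non-reducibility conclusion, but the route to the crucial ingredient---the flow-volume bound---is genuinely different from what the paper does, and the difference is worth noting.

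The paper's Lemma~\ref{lemma:flows:nodecap} proves $\vol(f'_{st})\le 2n$ constructively: it identifies a pattern called an \emph{anti-directed cycle} in the flow support and shows how to iteratively reroute flow along such cycles so that an edge drops out of the support at each step, while preserving both flow value and all constraints. When no anti-directed cycle remains, the bipartite ``in/out'' graph of the support is a forest, giving $\vol(f'_{st})\le 2|V(f'_{st})|\le 2n$. Your linear-programming argument reaches the same bound existentially: the node-capacitated max-flow LP has at most $(n-2)$ flow-conservation equalities plus $n$ capacity inequalities, so after adding slacks a basic feasible optimum has at most $2n-2$ positive coordinates, hence at most $2n-2$ edges carry positive flow. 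Both give the needed $O(n)$. Since the nondeterministic verifier only needs a sparse witness to \emph{exist}, the existential LP argument is fully adequate here, and it is arguably cleaner; the paper's anti-directed-cycle argument has the advantage of being a polynomial-time procedure for sparsifying a given flow, which the LP route does not directly provide, but that extra constructivity is not used anywhere in the non-reducibility proof. Once the $O(n)$ bound is in hand, both you and the paper plug it into the general framework (Theorem~\ref{thm:nondet}/\ref{thm:informal_nondet}) with $q=n^2$ queries from the naive algorithm and verify the cut side via rectangular matrix multiplication, getting nondeterministic time $\tO(n^3 + n^{\omega(2,1,1)}) = \tO(n^{\omega(2,1,1)})$, and then apply Theorem~\ref{thm:nseth_total}.

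One small imprecision: you describe running the scheme once for flows (nondeterministic) and once for cuts (co-nondeterministic), ``placing \APMF in the setting required by the CGIMPS template.'' In the total-function formulation the paper uses (adapted from~\cite{AKT20_soda}), there is a single nondeterministic algorithm that simultaneously guesses a flow (certifying a lower bound on each \MF value) and a cut (certifying a matching upper bound), and Theorem~\ref{thm:nondet} already bundles both verifications---the flow-support reading contributes the $\sum\vol(f_{uv})$ term and the cut verification contributes the $n^{\omega(\log_n|Q|,1,1)}$ matrix-multiplication term. So you do not need a separate co-nondeterministic pass; the two witnesses live inside one verifier for the total function. This does not affect the running-time calculation or the conclusion, but it is the cleaner way to invoke the machinery.

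Your concern about whether the matrix-multiplication aggregation handles weighted (rather than unit) node-capacities is resolved in the paper: the cut-verification step in Lemma~\ref{lemma:nondetmaxflows} is carried out for general edge capacities (using the weighted incidence matrix), and general node-capacities reduce to that case by the standard node-splitting reduction, so nothing delicate remains.
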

For $\omega=2$, the bound becomes $\Omega(n^{3+\varepsilon})$.
\begin{table}[t]
\footnotesize
\centering
\begin{tabular}{c c c c c c c}
\hline
\hline
\makecell{Setting} & Dir. & Capacities & \makecell{SETH\\Lower Bounds} & \makecell{Combinatorial\\Lower Bounds}  & Algorithms & \makecell{Nondeterministic\\Algorithm}
 \\ [0.5ex]
\hline
\hline
node-cap & no & unit & $m^{3/2}+n^2$, \textbf{Thm~\ref{thm:CLB}} & $m^2$~\cite{HLSW23} & $m\cdot T(m)$, \textbf{Thm~\ref{thm:algorithm:main}} & $mn$, \textbf{Lma~\ref{lemma:nondet:undirected_unit_node}} \\
 & yes & unit & $mn$~\cite{KT18} & $m^2$~\cite{A+18} & $m^{\omega}$~\cite{CLL13}, $n^2\cdot T(m)$ & $n^3$, \textbf{Lma~\ref{lemma:nondet:nodecap}} \\
 & no & general & $n^3$~\cite{AKT20_soda} & $m^2$ [ '' ] & $n^2\cdot T(m)$ & $n^3$, \textbf{Lma~\ref{lemma:nondet:nodecap}} \\
 & yes & general & $n^3$ [ '' ] & $m^2$ [ '' ] & $n^2\cdot T(m)$ & $n^3$, \textbf{Lma~\ref{lemma:nondet:nodecap}} \\
[0.5ex]\hline
edge-cap & yes & unit & $mn$ & $m^2$ [ '' ] & $m^{\omega}$~\cite{CLL13}, $n^2\cdot T(m)$ & $n^{5/2}\sqrt{m}$, \textbf{Lma~\ref{lemma:nondet:unit_edge}}\\
 & yes & general & $n^3$~\cite{KT18} & $m^2$ [ '' ] & $n^2\cdot T(m)$ &\\
 & no & general & & & $m^{1+o(1)}$~\cite{ALPS23} &\\
[0.5ex]
\hline
\end{tabular}
\caption{Known algorithms and lower bounds for hard variants of \APMF, ignoring subpolynomial factors, and where $T(m)$ represents the time of a \MF query. All presented bounds for nondeterministic algorithms are under the assumption that $\omega=2$.
}
\label{table:bounds}
\end{table}
This result indicates that even when the input graph is dense and directed, the existing $n^{3-o(1)}$ \SETH lower bound~\cite{AKT20_soda}, designed for sparse undirected graphs, is unlikely to be improved.
\subsubsection{Directed edge-capacities.}
Our last non-reducibility result is for directed graphs with unit edge-capacities.
By combining Theorem~\ref{thm:informal_nondet} with the naive algorithm and a flow volume bound obtained by Karger and Levine~\cite{KL98} in the context of (single pair) \MF algorithms, we show that no deterministic \SETH-based reduction is likely to prove an $n^{4-o(1)}$ conditional lower bound for this setting.

\begin{theorem}\label{thm:unit_edge}
Under \NSETH, for any fixed $\varepsilon,\varepsilon'>0$ there is no deterministic fine-grained reduction proving an 
$\Omega(n^{5/2+\varepsilon} \sqrt{m} + n^{\omega(2,1,1)+\varepsilon'})$ \SETH-based lower bound for \APMF on directed graphs with unit edge-capacities.
\end{theorem}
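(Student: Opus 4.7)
First I would take the naive algorithm as the Las~Vegas ingredient to feed into Informal Theorem~\ref{thm:informal_nondet}: one \MF call per ordered pair gives $q = n^{2}$, with $\tilde O(n^{2})$ outside-call time to store and report answers. Plugging $q=n^{2}$ into Informal Theorem~\ref{thm:informal_nondet} (and using $\log_{n} n^{2}=2$) yields a nondeterministic running time of
\[
\hat O\bigl(n^{2}\cdot \vol(f_{st}) \;+\; n^{\omega(2,1,1)} \;+\; n^{2}\bigr).
\]

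Second, I would import the Karger--Levine volume bound~\cite{KL98}: in any directed graph with unit edge-capacities, every pair $s,t$ admits a maximum $st$-flow whose support has size $\vol(f_{st}) = \tilde O(\sqrt{nm})$. This is the structural content of their short-augmenting-path / blocking-flow analysis that drives the faster single-pair \MF algorithm. Substituting gives the claimed nondeterministic bound $\hat O(n^{5/2}\sqrt{m} + n^{\omega(2,1,1)})$. I would then apply the non-reducibility transfer of Carmosino et al.~\cite{CGIMPS16} exactly as in the proofs of Theorems~\ref{thm:undirected_unit_node} and~\ref{thm:nodecap}: any hypothetical deterministic fine-grained Turing reduction witnessing an $\Omega(n^{5/2+\varepsilon}\sqrt{m} + n^{\omega(2,1,1)+\varepsilon'})$ \SETH-based lower bound for this variant of \APMF, when composed with our nondeterministic algorithm, would yield a (co-)nondeterministic $k$-SAT algorithm running strictly below the \SETH threshold, contradicting \NSETH.

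The main obstacle is pinning down the precise form of the Karger--Levine volume bound on a \emph{single} maximum flow (as opposed to the algorithmic running-time statement in which it appears in~\cite{KL98}), and verifying that Informal Theorem~\ref{thm:informal_nondet}'s certificate scheme --- which must certify each \MF value both from below (by guessing a flow) and from above (by guessing a cut) --- transfers verbatim to the directed edge-capacitated regime. Once those two pieces are secured, the rest is a mechanical replay of the template already deployed for Theorems~\ref{thm:undirected_unit_node} and~\ref{thm:nodecap}.
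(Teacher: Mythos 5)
Your high-level approach is the same as the paper's: plug the naive all-pairs algorithm (with $q=n^2$ \MF calls) into the nondeterministic framework, bound the flow supports via Karger--Levine, and invoke the Carmosino et al.\ transfer. But there is a genuine error in the volume-bound step. You claim that \emph{every} pair $s,t$ admits a maximum flow with $\vol(f_{st})=\tilde O(\sqrt{nm})$; this is not what Karger--Levine proves and is false in general. For a concrete counterexample, take a $3$-node multigraph with $c$ parallel edges from $s$ to an intermediate node $w$ and $c$ parallel edges from $w$ to $t$: here $n=3$, $m=2c$, every maximum flow has $\vol(f_{st})=2c$, yet $\sqrt{nm}=\Theta(\sqrt{c})$, a polynomial gap. (Even for simple graphs, the uniform per-pair bound $\sqrt{nm}$ needs $\val(f_{st})\le m/n$, which can fail badly when $s$ or $t$ has degree much above average.)

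What Karger--Levine (Lemma~\ref{lemma:flows:unit_edge}) actually gives is the per-pair bound $\vol(f_{st})\le \tilde O\bigl(n\sqrt{\val(f_{st})}\bigr)$, which depends on the flow value. The real content of the calculation in the paper's proof of Lemma~\ref{lemma:nondet:unit_edge} is the \emph{summation over pairs}: one uses $\val(f_{uv})\le \min\{d_{\mathrm{out}}(u),d_{\mathrm{in}}(v)\}$ together with concavity of the square root (or Cauchy--Schwarz), exploiting that degrees sum to $m$, so that
\[
\sum_{(u,v)\in Q} n\sqrt{\val(f_{uv})} \;\le\; n\sum_{u,v}\sqrt{d_{\mathrm{out}}(u)} \;\le\; n^2\cdot n\cdot\sqrt{m/n}\;=\;n^{5/2}\sqrt{m}.
\]
You land on the correct final expression, but by asserting a per-pair bound that does not hold; the averaging over the $n^2$ pairs is where the $\sqrt{m/n}$ really comes from. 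Your other flagged concern --- whether the dual (cut) verification of Lemma~\ref{lemma:nondetmaxflows} extends to directed edge-capacities --- is handled directly in the paper: the incidence-matrix-multiplication verification is proved for directed graphs with general edge capacities, which subsumes the present unit-edge-capacity case with no change.
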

For $\omega=2$, the bound becomes $\Omega(n^{5/2+\varepsilon} \sqrt{m})$.
Combining our framework with Gusfield's algorithm for Gomory-Hu trees~\cite{Gusfield90} immediately provides an $O(n^{2.5})$ time nondeterministic algorithm for \APMF in undirected graphs with unit edge-capacities.
Although this algorithm may not be competitive given the $\tO(m)$ time nondeterministic algorithm for Gomory-Hu tree construction~\cite{AKT20},
we still find this observation meaningful.
Relatedly, a nondeterministic algorithm was the seed of inspiration for recent breakthroughs in randomized algorithms for this problem~\cite{AKLPST22,ALPS23} (see~\cite{AKT21_stoc,AKT21} for discussions), and we believe that the same could happen to the nondeterministic algorithms presented in this paper.

Finally, for directed graphs with general edge-capacities, we could not prove a subquartic nondeterministic algorithm. However, as a consequence of Theorem~\ref{thm:nodecap}, we conclude that a hypothetical deterministic fine-grained reduction from the edge-capacities setting to the node-capacities setting would imply the non-existence of an $n^{4-o(1)}$ conditional lower bound for the edge-capacities setting.

\begin{theorem}\label{thm:edgecap}
Suppose there exists a deterministic fine-grained reduction proving that an $O(n^{4-\varepsilon})$ time algorithm for \APMF on graphs with node-capacities implies an $O(n^{4-\varepsilon'})$ algorithm for \APMF on directed graphs with edge-capacities, for some fixed $\varepsilon,\varepsilon'>0$.

Then, under \NSETH, it is not possible to have an $n^{4-o(1)}$ \SETH-based lower bound for \APMF on directed graphs with edge-capacities.
\end{theorem}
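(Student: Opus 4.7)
The plan is a proof by contradiction, composing the hypothesized reduction with Theorem~\ref{thm:nodecap}. Assume, toward contradiction, that there exists a deterministic fine-grained \SETH-based reduction $R_1$ establishing an $n^{4-o(1)}$ lower bound for \APMF on directed edge-capacitated graphs. By the hypothesis of the theorem, there also exists a deterministic fine-grained reduction $R_2$ from edge-capacitated \APMF to node-capacitated \APMF, such that any $O(n^{4-\varepsilon})$-time algorithm for the node-capacitated variant yields an $O(n^{4-\varepsilon'})$-time algorithm for the edge-capacitated one, for some fixed $\varepsilon,\varepsilon'>0$.

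First, I would compose $R_1$ and $R_2$ into a single deterministic Turing reduction $R$ from the \SETH-hard source (e.g.\ $k$-SAT) to node-capacitated \APMF, by replacing every edge-capacitated \APMF oracle call made by $R_1$ with an invocation of $R_2$. Since both $R_1$ and $R_2$ are deterministic, black-box, and fine-grained, so is their composition. Its effect is: any $O(n^{4-\varepsilon})$-time algorithm for node-capacitated \APMF would, via $R_2$, yield an $O(n^{4-\varepsilon'})$-time algorithm for edge-capacitated \APMF, which via $R_1$ would refute \SETH. Hence $R$ is a deterministic \SETH-based reduction establishing an $\Omega(n^{\omega(2,1,1)+\delta})$ lower bound for \APMF on directed node-capacitated graphs, with $\delta := 4 - \varepsilon - \omega(2,1,1)$.

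Next, I would invoke Theorem~\ref{thm:nodecap}: under \NSETH, no deterministic fine-grained \SETH-based reduction can establish an $\Omega(n^{\omega(2,1,1)+\delta'})$ lower bound for node-capacitated \APMF for any fixed $\delta'>0$. Since $\omega(2,1,1)\le \omega+1 < 4$ unconditionally, $\delta$ above is strictly positive whenever $\varepsilon < 4-\omega(2,1,1)$. If the $\varepsilon$ provided by the hypothesis already lies in this range, $R$ contradicts Theorem~\ref{thm:nodecap} directly. Otherwise, the full $n^{4-o(1)}$ strength of $R_1$ (as opposed to a mere $n^{4-\varepsilon'}$ bound) lets us rescale the composed reduction via standard fine-grained padding arguments until the effective $\varepsilon$ is small enough, yielding the same contradiction.

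The main obstacle I anticipate is the bookkeeping for the composition inside the framework of~\cite{CGIMPS16}: one must carefully track instance sizes, oracle-call counts, and running-time overheads to confirm that $R$ falls within the class of ``black-box deterministic fine-grained \SETH-based reductions'' targeted by Theorem~\ref{thm:nodecap}. The strict separation $\omega(2,1,1) < 4$ supplies the quantitative slack needed for the contradiction, but the most delicate step is ensuring that the parameters $(\varepsilon,\varepsilon')$ can be brought into—or already lie in—the admissible range so that the composed lower bound exceeds the $n^{\omega(2,1,1)+\delta'}$ threshold ruled out by Theorem~\ref{thm:nodecap}.
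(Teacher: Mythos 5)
Your contrapositive strategy is logically sound and uses the same ingredients as the paper, but you take a slightly different route that introduces a soft spot the direct proof avoids. The paper never writes out a proof of Theorem~\ref{thm:edgecap}, but the intended argument (following the framework of Lemma~\ref{lemma:nondet:nodecap} and Theorem~\ref{thm:nseth_total}) is direct: plug the $O(n^{\omega(2,1,1)})$-time nondeterministic algorithm for node-capacitated \APMF into the hypothesized deterministic reduction $R_2$ to obtain a subquartic \emph{nondeterministic} algorithm for edge-capacitated \APMF, and then invoke Theorem~\ref{thm:nseth_total} directly on the edge-capacitated total function. That route needs only one observation — that a deterministic Turing reduction with a nondeterministic oracle yields a nondeterministic algorithm — rather than a proof that the composition of two fine-grained Turing reductions is again a fine-grained reduction, which is the ``bookkeeping'' you flag as your main anticipated obstacle. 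Both routes ultimately depend on the same quantitative fact that $\omega(2,1,1)<4$.

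The concrete gap in your write-up is the ``rescaling via standard fine-grained padding arguments'' step you invoke when $\varepsilon \geq 4-\omega(2,1,1)$. Padding changes instance sizes, not the exponent gap a fixed reduction achieves, and it cannot turn a reduction that only shows $\Omega(n^{4-\varepsilon})$-hardness for node-capacitated \APMF into one showing $\Omega(n^{\omega(2,1,1)+\delta'})$-hardness when $4-\varepsilon \leq \omega(2,1,1)$. As written, your argument therefore only goes through when $\varepsilon < 4-\omega(2,1,1)$. The paper's direct argument faces the same threshold (the nondeterministic algorithm must run in $O(n^{4-\varepsilon})$ time to be fed into $R_2$), so the right fix is not rescaling but reading the hypothesis as a \emph{fine-grained} reduction in the sense of Definition~\ref{prems}: such a reduction, by definition, works for every $\varepsilon>0$ with some corresponding $\varepsilon'>0$, so one can always take $\varepsilon$ small enough. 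You should either make that interpretation explicit or drop the padding claim; as stated it is the one step of your proof that does not hold up.
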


Whether such a reduction exists is a well-known open problem in the field (see e.g. Open Problem $2$ in~\cite{AKT20_soda}), and Theorem~\ref{thm:edgecap} points to a significant barrier to its positive resolution.

\subsection{Technical Overview}
Below we explain the ideas behind our randomized algorithm for undirected graphs with unit node-capacities. 
Then, we elaborate on our non-reducibility results.

\subsubsection{An algorithm for undirected graph with unit node-capacities}

\paragraph*{High level ideas in the algorithm in~\cite{AKT20}}

One of the core ideas in the $(1+\varepsilon)$-approximation algorithm for graphs with edge-capacities could be interpreted as the following: The minimum cuts between any three nodes can be captured by just two cuts. This is represented also by the following inequality for any set of nodes $u,v,w$,
\begin{align}\label{ineq:triangle}
\MF(u,v) \geq \min\{ \MF(u,w),\MF(v,w)\},
\end{align}
which translates to the algorithmic insight that the minimum cuts between a randomly picked pivot and every other node can capture many more minimum cuts in the graph.
More concretely, it was shown in~\cite{AKT20} that the minimum cuts computed from a random (source) pivot $p$ to all other (sink) nodes $u\in V$ can be used to partition $V$ as follows. 
Let $S_{u,p} \subset V$ be the side of $u$ in a minimum $pu$-cut $(S_{p,u}, S_{u,p})$. Each node $u$ is then reassigned to a set $S_{u',p}$ containing $u$, where nodes reassigned to the same $S_{u',p}$ are grouped together. This partition satisfies that its biggest part is of expected size at most $\frac{3}{4}n$, while preserving property (*) below. This allows the algorithm to continue recursively with an $O(\log n)$ bound on the recursion-tree depth.

\begin{itemize} \compactify
\item[(*)] \label{it:starproperty} 
  For every pair $s' \in S_{i,p}, t'\in S_{j,p}$ for $i \neq j$, at least one of $(S_{i,p}, S_{p,i})$ or $(S_{j,p},S_{p,j})$ corresponds in $G$ to a $(1+\eps)$-approximate minimum $s't'$-cut.
\end{itemize}

Crucially, property (*) ensures that the \MF values between node pairs of different sets are determined at the current recursion step, i.e. one of the minimum cuts computed from $p$ captures the (approximated) value of $\MF(u,v)$, while handling the remaining pairs is postponed to a later stage.

\paragraph*{Barriers in adapting~\cite{AKT20}}

While symmetry holds in our case of undirected graphs with unit node-capacities, the same intuition is not directly applicable due to two main issues.

First, the triangle inequality~\ref{ineq:triangle} does not generally hold. This is illustrated by the example of nodes $p,u,v$ satisfying $v\in C_{p,u}$ or $u\in C_{p,v}$, where $C_{p,v}$ is the separator in a minimum vertex $pv$-cut. 
This makes the presence of nodes in separators of many pairs an issue, as their own \MF to other nodes might remain undetermined for most pivot choices $p$ and through multiple recursion depths, resulting in no cuts for them.

Second, and even if the first issue is resolved, the pivot $p$ itself could be present in optimal vertex $uv$-separators. This means that even when the minimum $pu$-cut separates a pair $v,u$, it might do so in a sub-optimal way that must not include the pivot $p$ in $C_{u,v}$, and we would not have any indication for it. Moreover, there might not be any pivot choice that is absent from all separators, making the situation worse.
To address these issues, we first provide an overview of a Monte Carlo version of our algorithm, to be followed by the necessary changes to achieve a Las Vegas algorithm.

\paragraph*{Overcoming barriers in~\cite{AKT20}: A Monte Carlo algorithm}

Despite the issues mentioned, the underlying intuition is applicable in our setting, and the shared symmetry property turns out to be enough to recover an $\hO(m^2)$ running time.
Our algorithm uses a similar framework as in~\cite{AKT20}, which involves a node partition inspired by minimum cuts from a random pivot and recursion. However, we introduce reinforcements in the form of additional \MF queries to address the issues described. The main challenge is to ensure that the total number of additional queries does not exceed $\tO(m)$, thus maintaining the total time complexity of $\hO(m^2)$ using the almost linear time algorithm for \MF~\cite{CKLP22,BCG+23}, as required.

To address the first issue, we implement the following changes. When a uniformly random node $p\in V'$ is selected, in addition to the \MF calls to all nodes $u\in V'\setminus \{p\}$, we also directly query the \MF between every node $q$ in the separator $C_{p,u}$, and $u$. Recall that in our setting, it holds that $|C_{p,u}|\leq d_G(u)$ where $d_G(u)$ is the degree of node $u$ in the input graph $G$, and therefore the total number of additional queries for $V'$ is at most the sum of node degrees in $V'$. Since the node sets $V'$ in any depth of the recursion are disjoint, the total number of extra queries is $\tO(m)$.

To address the second issue, it is important to note that node pairs with a large separator, that is where both nodes have high degrees, are the most problematic in this context. Therefore, we would like to adopt a similar high degree/low degree approach to the one taken previously in~\cite{AKT20_soda,HLSW23}; solving \MF between all pairs of high degree nodes, and letting the "main" algorithm handle the remaining pairs.

However, there are a few challenges involved in doing so. First, simply querying all pairwise nodes of high degree is not sufficient for our needs. This is because our requirement is exceptionally strong: that a uniformly random pivot in \emph{any} instance $V'$ has a good chance of not being in many vertex separators, even when $|V'|$ is much smaller than $n$, or even $\sqrt{m}$.
To achieve this, we query the \MF between pairs of high degree nodes in every instance, but with a very high degree threshold compared to $|V'|$, which is $\hat{t} \geq |V'|/\polylog(n)$. By analyzing the recursion tree $T$, and that instances $V'$ in any single depth of $T$ are disjoint, we conclude that there cannot be more than $\tO(m)$ such queries in a single depth of $T$.

The second challenge is in the correctness proof for any node pair $u,v$. Similar to~\cite{AKT20}, we would like to reason about the pivot choice $p$ in the instance separating $u$ and $v$. However, vertex separators in our case introduce a worrisome dependency, as it could be for some instances that the only nodes separating $u$ and $v$ as pivots are $u$, $v$, and nodes within their separator $C_{u,v}$. In this case, a careful analysis is needed to prove that such node pairs are likely to be pushed together (i.e. not being separated) to a later stage. We do this by using a more global argument that is independent of the particular instance separating $u$ and $v$, described below.

On one hand, while $|V'|$ is large compared to $|C_{u,v}|$, there is a decent probability to pick pivots outside of $C_{u,v}$. 
On the other hand, if $|V'|$ becomes small enough while $u$ and $v$ are still together, then they would be queried as a pair of high degree nodes in $V'$. This motivates the following analysis (see Figure~\ref{Figs:prob}). Consider the path $P$ from the root of $T$ to $V'_u$, the highest recursion node in $T$ where $|V'_u|$ is smaller than a threshold set to $\tO(d_G(u))$, where we assume without loss of generality that $d(u)\leq d(v)$. We show that with constant probability, all pivot choices in instances along $P$ do not belong to $C_{u,v}$. Assuming that this event occurs, we have two cases. If $v$ has survived all the way down to $V'_u$, then $v$ and $u$'s high degrees compared to $|V'_u|$ would cause the algorithm to query them directly as a pair at this stage. Otherwise, they got separated earlier \emph{along $P$}, and consequently, the pivot chosen at the moment of separation does not belong to $C_{u,v}$.

\begin{figure}[!ht]
       \includegraphics[width=0.65\textwidth,center]{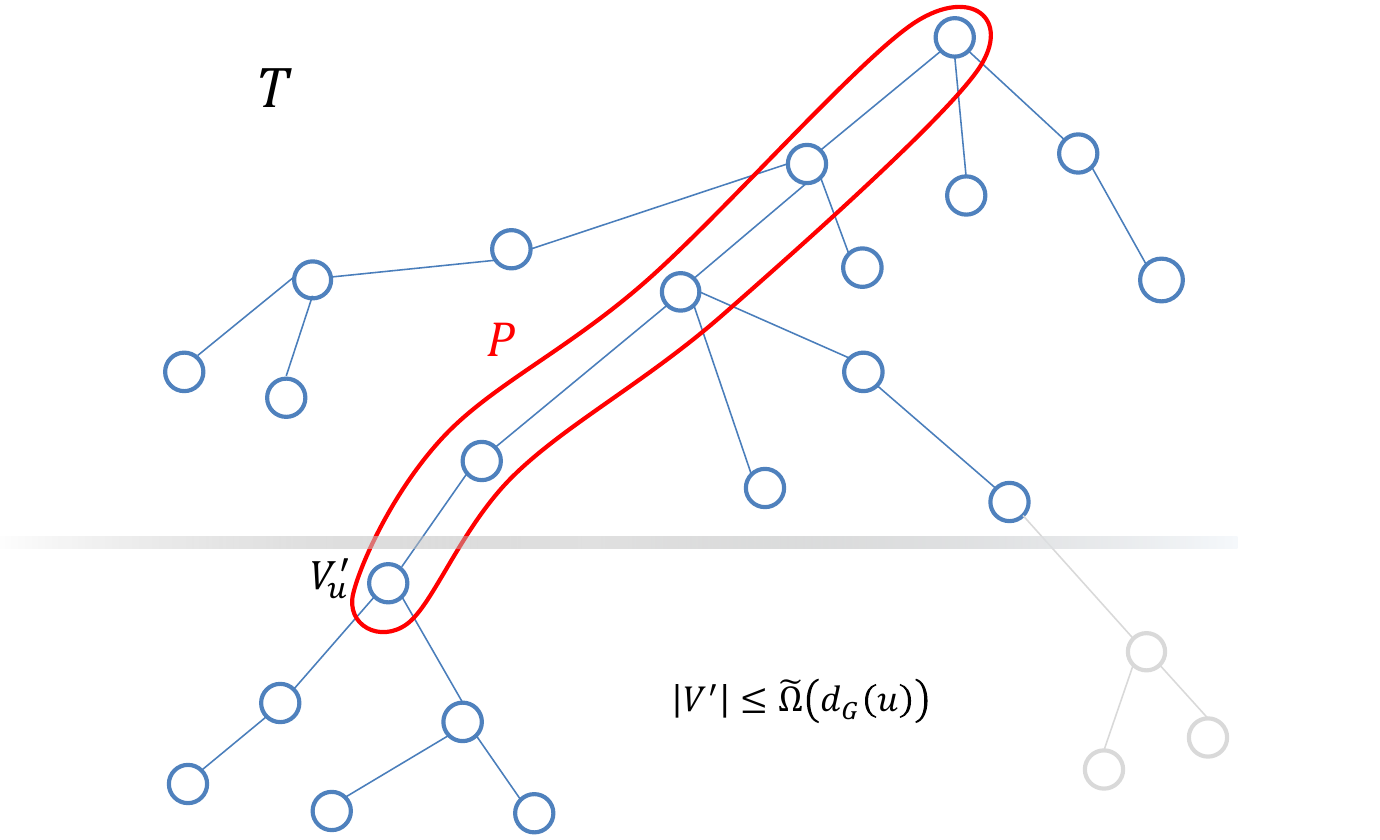}
   \caption[-]{
If $v\in V_{u}$ then both $u,v$ are queried directly as a high degree pair. Otherwise, with constant probability $v$ must have been separated from $u$ in a higher recursion node along $P$, in which case $v\notin C_{u,v}$ at that moment of separation.
   }
   \label{Figs:prob}
\end{figure}

Altogether, property (*) in~\cite{AKT20} is modified to fit our setting as follows:

\begin{itemize} \compactify
\item[(**)] \label{it:reinforced_starproperty}
  For every pair $s' \in S_{i,p}, t'\in S_{j,p}$ for $i \neq j$ such that $\MF(p,s')\leq \MF(p,t')$, with constant probability at least one of the following holds.
\begin{enumerate}
\item $t'\in C_{p,i}.$
\item $d_G(s'),d_G(t')\geq |V'|/(20\log_{4/3} n).$
\item\label{property:3} $(S_{p,i}, C_{p,i},S_{i,p})$ corresponds to a minimum $s't'$-cut in $G$.
\end{enumerate}
\end{itemize}

Finally, we run the algorithm $O(\log n)$ times, concluding that for every node pair, with high probability (**) occurs in at least one of the recursion trees.
\paragraph*{Converting to Las Vegas}
The algorithm above may, with non zero probability, produce erroneous answers. This occurs for a pair $u,v$ if the pivots separating $u$ and $v$ belong to $C_{u,v}$ in all $O(\log n)$ repetitions (and the pair $u,v$ was not queried directly).
A natural starting point to address this issue is to increase the number of repetitions to $x$ for some $x>>\log n$, instead of just $O(\log n)$ as in the Monte Carlo algorithm.
By doing so, we can argue that the \MF values that are at most $x$ will be preserved by a global analysis claim \emph{and verification} that no pivot was chosen too many times in instances of size at least $\tilde{\Omega}(x)$ (throughout all repetitions). As a result, a sufficient number of distinct pivots must have been chosen, ensuring that at least one of the pivots separating $u$ and $v$ does not belong to $C_{u,v}$.
This could make the basis of a correct Las Vegas algorithm; unfortunately, the running time becomes $xmT(m)$, which is too high as $x$ is only bounded by $n$. However, the part responsible for this high running time is the reinforcement queries rather than the basic recursion, and so there is still a hope.

Indeed, in the Las Vegas algorithm we are more strategic in performing reinforcement queries.
We would like to perform such queries only for pairs that were required in at least $x$ repetitions, thus bounding the total number of distinct queries throughout all repetitions by $\tO(m)$.
Accordingly, we would claim that if a pair $u,v$ was \emph{not} required in at least $x$ repetitions (and thus was not queried directly) then it is surely separated many times, and so by many distinct pivots, in the basic recursion structure.
Similar to before, this ensures that at least one of those pivots must not belong to $C_{u,v}$, making the corresponding answer correct.
However, even this only brings the total running time to $xnT(m)$ which equals $n^2m$ in the worst case.
To further speed up the algorithm, we operate in stages focusing on a range $[k,2k)$ of \MF values each time, thus incurring a factor $O(\log n)$ to the running time. 
Consequently, the relevant nodes, i.e,~\emph{terminals}, in each $[k,2k)$ stage are the nodes whose degrees are at least $k$, totaling at most $m/k$ nodes. At the same time, the minimum instance size we need to consider, as well as the number of repetitions required to ensure a sufficiently high number of distinct pivots for the terminals at this stage, is only $\tO(k)$. This brings the total running time to $\tO(m\cdot T(m))$, as required.

Note that unlike properties (*) and (**) detailed above, no local property can naturally be described in the Las Vegas version of our algorithm. However, a more global property is appropriate:

\begin{itemize} \compactify
\item[(***)] \label{it:reinforced_starproperty_}
  For every pair $s', t'$ such that $\MF(s',t')\in [k,2k)$, when the algorithm terminates at least one of the following holds in the $[k,2k)$ stage.
\begin{enumerate}
\item $s'\in S_{i,p}$ and $t'\in C_{p,i}$, or $s'\in C_{p,j}$ and $t'\in S_{j,p}$, in at least $k$ repetitions for some $i$'s and $j$'s. 
\item $s',t'$ are together in final instances of at least $k$ repetitions.
\item\label{property:3_} $(S_{p,i}, C_{p,i},S_{i,p})$ in at least one of the repetitions corresponds to a minimum $s't'$-cut in $G$.
\end{enumerate}
\end{itemize}

\subsubsection{Nondeterministic algorithms}\label{subsubsec:nondet}

Several papers~\cite{CGIMPS16,AKT20_soda,Li21} have proposed nondeterministic algorithms as a means to restrict the applicability of \SETH for conditional lower bounds. This method was introduced by Carmosino et al.~\cite{CGIMPS16}, who also demonstrated that several problems, including APSP and $3$SUM, are unlikely to have conditional lower bounds under \SETH, assuming that \NSETH holds.

\paragraph{A related nondeterministic algorithm in~\cite{AKT20_soda}.}
Abboud et al. \cite{AKT20_soda} presented a nondeterministic algorithm for the Gomory-Hu tree problem and used the framework of~\cite{CGIMPS16} to demonstrate that it is unlikely for this problem to have a super-linear \SETH-based conditional lower bound.
In their work, the nondeterministic algorithm simulates the Gomory-Hu algorithm~\cite{GH61} on a given input graph. Instead of using \MF routines like in the Gomory-Hu algorithm,~\cite{AKT20_soda} employs a new accelerated operation that is equivalent to determining the maximum $pv$-flows from a guessed pivot $p$ to multiple other nodes $v$ simultaneously.

To quickly verify these multiple maximum flows, they used succinct data structures that exist for their setting of undirected graphs with edge-capacities (i.e. they used a cut-equivalent tree to verify cuts and a tree packing to verify flows). By applying the min-cut max-flow theorem~\cite{FF56}, they concluded that if equality was established for each pair of guessed pivot $p$ and a corresponding node $v$, then the maximum flow guesses were correct. Consequently, the correctness of the simulation follows from the correctness of the Gomory-Hu algorithm.

\paragraph{Obstacles in accelerating current algorithms.}
In contrast, for hard variants even in unit (node or edge) capacities settings, existing algorithms~\cite{CLL13,HLSW23} have significant barriers to getting accelerated using nondeterminism.
In the $O(m^{\omega})$ time algorithm for directed graphs with unit edge-capacities~\cite{CLL13}, merely writing down (as opposed to finding) an inherent part of the algorithm -- an inverted $m$ by $m$ matrix -- takes $m^2$ time.
Another algorithm, proposed for undirected graphs with unit node-capacities~\cite{HLSW23}, utilizes an element-connectivity Gomory-Hu tree for nodes with smaller degrees and directly calls \MF for node pairs with high degrees. However, in the case of dense graphs, this algorithm performs similarly to the naive approach of computing \MF for every node pair. Moreover, even if element-connectivity Gomory-Hu trees can be found nondeterministically in linear time, the nondeterministic time complexity of \APMF in our setting would still be $O(m^2)$ due to a blow-up caused by set sampling techniques~\cite{IN12,PSY22,HLSW23} which seem necessary for element-connectivity based algorithms. As a result, alternative approaches are needed.

Let us focus on undirected graphs with unit node-capacities. Straightforward cut and flow witnesses for all pairs have a total size of $O(n^3)$
\footnote{Following previous results in similar contexts~\cite{CGIMPS16,AKT20_soda,Li21}, verifying both cuts and flows is crucial (1) to rule out \textit{Turing} reductions, and (2) as (minimum-) cuts are simply necessary for some nondeterministic algorithms to work (e.g., see~\cite{AKT20_soda}, and Theorem~\ref{thm:undirected_unit_node} here).}, 
and verifying the flows could be done easily within a similar time since a flow here is composed of internally node-disjoint paths. However, verifying the cuts seems more challenging as it requires reading all edges to ensure that for each guess cut, no edges exist between the two separated parts\footnote{For ease of exposition, we assume that there is no edge directly connecting the corresponding pair. However, this assumption is not necessary in any of our results.}, and it is not immediately apparent how to perform this task faster than the naive approach.

A natural attempt to verify the cuts could be to use connectivity oracles to test whether pairs $u,v$ are still connected after removing their guessed separator. However, this approach might not be promising either, as conditional lower bounds are known for similar problems, such as the $d$-vertex-failure connectivity and even the vertex-failure global connectivity oracle~\cite{LS22}.

\paragraph{Our approach.}
We enter all of the guessed cuts into a large binary matrix of size $n^2\times n$, and then multiply it by the incidence matrix of the input graph. The resulting matrix enables us to confirm that there are no edges between the two parts in any of the guessed cuts, indicating that they are all valid vertex-cuts. 
This process takes the time of rectangular matrix multiplication, that is $n^{\omega(2,1,1)}\leq O(n^{3.252})$~\cite{GU18}.
In fact, we consider a slightly more general version of this approach, in two ways. 
First, it can be used with any algorithm for \APMF, not just the naive approach. This is beneficial for algorithms that require fewer than $n^2$ queries to \MF on the input graph.
Second, our approach can work with any data structure for \APMF, rather than just the trivial one that outputs all $n^2$ values. This expands the set of potential algorithms to include those outputting data structures whose size is smaller than the output size of \APMF, which is $n^2$.
Moreover, this approach is robust enough to work even on directed graphs with general edge-capacities. Therefore, for subquartic algorithms it essentially settles the nondeterministic time complexity of verifying the cuts in all variants of \APMF via folklore reductions with little overhead.

Regarding flows, the $O(n)$ bound naturally applies only for unit node-capacities. Are there nontrivial upper bounds in other settings?
For general node-capacities, we answer this question affirmatively! 
Specifically, we show that for graphs with general node-capacities, even directed, $O(n)$ edges are always sufficient to ship the maximum $st$-flow. This is done by identifying a certain pattern that can be eliminated from any flow while still preserving its value and without violating the flow constraints or increasing the volume
\footnote{An alternative proof uses linear programming arguments~\cite{Che24}, as discussed later in this paper.}.
For directed (multi-) graphs with unit \textit{edge}-capacities,
we use the upper bound of $n\sqrt{\val(f_{st})}$ proved in~\cite{KL98} for the volume of $st$-flows $f_{st}$, where $\val(f_{st})$ represents the amount of flow shipped from $s$ to $t$ in $f_{st}$. 
Using these flow bounds, a nondeterministic algorithm for each variant can guess succinct maximum flows for the queried pairs whose support is bounded by the corresponding upper bounds, and then verify their feasibility and values.
Finally, we believe that our results may also be relevant in other related fields.

\subsection{Preliminaries}\label{prems}

\paragraph*{General notations.}

\begin{itemize}
\item
(\emph{Miscellaneous.})
Throughout, we only consider connected graphs with $m \geq n-1$, and we use the terms “node” and “vertex” interchangeably. 
Let $G=(V,E,c)$ be a graph with capacity function $c$ defined on either its edges or nodes. For positive integers $a$,$b$, and $c$, we denote by $\omega(a,b,c)$ the exponent such that $n^{\omega(a,b,c)}$ is the time it takes to multiply an $n^a\times n^b$ matrix by an $n^b\times n^c$ matrix.
\item
(\emph{Flows.}) For an $st$-flow $f_{st}$ in $G$, we use $\vol(f_{st})$ to denote the size of the support of $f_{st}$ in $G$, which is simply the number of edges in $G$ with nonzero flow shipped by $f_{st}$. We use $\val(f_{st})$ to denote the total flow value shipped from $s$ to $t$ in $f_{st}$, that is the (capacity of) the out-degree minus the in-degree of $s$ (equivalently, the in-degree minus the out-degree of $t$) in $f_{st}$.
We may assume without loss of generality that $f_{st}$ is acyclic (in particular, the in-degree of $s$ and the out-degree of $t$ is $0$), as removing cycles in $f_{st}$ does not change $\val(f_{st})$, which we typically attempt to maximize, and can only decrease $\vol(f_{st})$, which we typically attempt to minimize.
We denote by $V(f_{st})$ the set of nodes in the support of  $f_{st}$, i.e., nodes with adjacent edges carrying non-zero flow.
Depending on the context, we may use $f_{st}$ to denote the support of the corresponding flow.
\item
(\emph{Cuts.})
Given two node sets $A,B\subseteq V$ (which could also be a singletons), we use $E(A,B)$ to denote the set of edges between $A$ and $B$, and $\val(A,B)$ to denote the capacity of edges between $A$ and $B$, i.e., $\val(A,B)=\val(E(A,B))=\sum_{a\in A, b\in B} c(a,b)$.
An edge $uv$-cut is denoted by $(S_{u,v},S_{v,u})$, and a vertex $uv$-cut by $(S_{u,v},C_{u,v},S_{v,u})$, where $C_{u,v}$ is called a (vertex) $uv$-separator. 
In these contexts, $\val(C_{u,v})$ is defined as the capacity of the separator, given by $\sum_{w\in C_{u,v}} c(w)$.
\item
(\emph{Max-Flow.})
We use the notation $\MF_G(s,t)$ for the minimum $st$-cut, or \MC for short, in $G$. Depending on the context, this could either be the actual value or a query to an algorithm for computing it. In some cases, we may omit the graph $G$ subscript or the input $(s,t)$ if it is clear from the context.
\end{itemize}

\paragraph*{Min-Cut data structure.}
Following notations from~\cite{AKT20}, a \emph{Min-Cut data structure} for a graph family $\mathcal{F}$ is a data structure that given a graph $G\in \mathcal{F}$, after a preprocessing phase which makes $t_{pq}(m)$ \MF calls on $G$ and spends $t_{po}(m)$ time outside of these calls,
can answer Min-Cut queries for any two nodes $s,t\in V$ in amortized query time (or output sensitive time) $t_{mc}(k_{st})$, where $k_{st}$ denotes the output size (number
of edges or nodes in the output cut or separator). 
In particular, it means that the preprocessing time is at most $t_{pq}(m)\cdot \hO(m)+t_{po}(m)$ using~\cite{CKLP22,BCG+23}, and the query time is 
$O(k_{st}\cdot t_{mc}(k_{st}))$.
We may also be interested in Min-Cut data structures that support queries of \MF values instead of cuts, and we denote their query time by $t_{mf}(m)$.

\paragraph*{Non-reducibility.}
Using notations from~\cite{AKT20_soda}, a search problem $P$ is a binary relation,
and we say that $S$ is a solution to instance $x$ iff $(x,S)\in P$. 
Let $\SOL(x)=\set{S: (x,S)\in P}$ denote the set of solutions for instance $x$. 
We say that $P$ is a \emph{total function} if every instance $x$ has at least one solution, i.e. $\SOL(x)\neq \emptyset$.
Let $\bot$ be the ``don't know'' symbol and assume $\bot \notin \SOL(x)$ for all $x$.
In particular, in our context of \MCDSs, $x$ is a graph, and $\SOL(x)$ is the set of all \MCDSs with $t_{mc}(m)=\hO(1)$ for $x$ where $m$ is the number of edges in $x$.
Next, we define the nondeterministic complexity of a total function.
\begin{definition}[Definition $3.1$ from~\cite{AKT20_soda}. See also~\cite{CGIMPS16}]
We say that a total function $P$ has nondeterministic time complexity $T(n)$ if there is a deterministic Turing Machine $M$ such that for every instance $x$ of $P$ with size $\card{x}=n$:
\begin{enumerate}[label=\alph*.]
\item \label{it:time}
  $\forall g, \DTIME(M(x,g))\leq T(n)$,
  i.e., the time complexity of $M$ is bounded by $T(n)$; 
\item \label{it:a}
  $\exists g, M(x,g)\in \SOL(x)$, i.e., at least one guess leads $M$ to output a solution;
\item \label{it:b}
  $\forall g, M(x,g) \in \{\bot\}\cup \SOL(x)$, i.e., every guess leads $M$ to output either a solution or ``don't know''.
\end{enumerate}
\end{definition}

The following is a definition of (deterministic) fine-grained reductions from a decision problem to a total function. These are Turing reductions.
\begin{definition}[Definition $3.12$ from~\cite{AKT20_soda}. See also~\cite{WILL18}] 
Let $L$ be a language and $P$ be a total function, and let $T_L(\cdot)$ and $T_P(\cdot)$ be time bounds. We say that $(L,T_L)$ admits a fine-grained reduction to $(P,T_P)$ if for all $\varepsilon>0$ there is a $\gamma>0$ and a deterministic
Turing machine $M^P$ (with an access to an oracle that generates a solution to every instance of $P$) such that:
\begin{enumerate}
\item $M^P$ decides $L$ correctly on all inputs when given a correct oracle for $P$.
\item Let $\tilde{Q}(M^P,x)$ denote the set of oracle queries made by $M^P$ on input $x$ of length $n$. Then the query lengths obey the bound 
$$
\forall x, \qquad
\DTIME(M^P,\card{x}) + \sum_{q\in \tilde{Q}(M,x)}(T_P(\card{q}))^{1-\varepsilon}
\leq(T_L(n))^{1-\gamma}.
$$
\end{enumerate}
\end{definition}
Theorem $3.13$ in~\cite{AKT20_soda}, provided below, further showed a non-reducibility result under \NSETH for total functions with small nondeterministic complexity.
\begin{theorem}\label{thm:nseth_total} 
Suppose $P$ is a total function with nondeterministic time complexity $T(m)$. If for some $\delta>0$, there is a deterministic fine-grained reduction from \kSAT with time-bound $2^n$ to $P$ with time bound $T(m)^{1+\delta}$, i.e. from $(\text{\kSAT},2^n)$ to $(P,T(m)^{1+\delta})$ then \NSETH is false.
\end{theorem}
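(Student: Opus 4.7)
The plan is to start from the hypothesised deterministic fine-grained reduction $M^P$ from $k$-SAT to $(P, T(m)^{1+\delta})$ and turn it into a non-deterministic procedure $N$ for $k$-SAT, by replacing each deterministic oracle call to $P$ with a simulation of the $T(m)$-time non-deterministic algorithm that witnesses $P$'s non-deterministic complexity. Concretely, $N$ on input $x$ simulates $M^P$ step by step; whenever $M^P$ issues an oracle query $q$, $N$ runs the guessing machine for $P$ on $q$ using fresh non-deterministic bits, and if that machine returns $\bot$ then $N$ halts and outputs $\bot$ itself, while otherwise $N$ feeds the returned $S \in \SOL(q)$ back into $M^P$ and continues. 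At the end $N$ outputs whatever $M^P$ outputs (YES, NO, or $\bot$).

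For correctness I would combine the two properties of the non-deterministic algorithm for the total function $P$ -- every guess branch outputs either a valid $S \in \SOL(q)$ or $\bot$, and at least one branch outputs a valid $S$ -- with the clause of the reduction definition saying that $M^P$ decides $k$-SAT correctly under \emph{any} correct oracle, not just a fixed canonical one. Together these imply: on a YES instance of $k$-SAT, some branch of $N$ outputs YES and no branch outputs NO; on a NO instance, some branch outputs NO and no branch outputs YES. Reading ``$N$ accepts'' as ``some branch outputs NO'' yields a non-deterministic algorithm for the complement of $k$-SAT, i.e., a co-non-deterministic algorithm for $k$-SAT, which is the object \NSETH actually constrains.

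For the running time I would instantiate the fine-grained reduction definition with $T_L(n) = 2^n$, $T_P(m) = T(m)^{1+\delta}$ and $\varepsilon := \delta/(1+\delta)$, which makes $T_P(m)^{1-\varepsilon} = T(m)$; this provides a constant $\gamma > 0$ such that
\[
\DTIME(M^P, |x|) + \sum_{q \in \tilde{Q}(M^P,x)} T(|q|) \;\le\; 2^{n(1-\gamma)}
\]
on every input $x$ of length $n$. Each branch of $N$ spends at most this much time, so $N$ is a co-non-deterministic algorithm for $k$-SAT running in time $2^{n(1-\gamma/2)}$ for all sufficiently large $n$ and every $k$, refuting \NSETH. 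The one delicate ingredient -- which I would flag as the main obstacle to pin down carefully -- is the oracle-uniformity clause of the reduction definition: it is precisely because $M^P$ is guaranteed to decide correctly under \emph{any} choice of valid oracle answers that arbitrary non-deterministic guesses (rather than a single privileged answer per query) can safely be plugged in and still produce a co-non-deterministic decision of $k$-SAT.
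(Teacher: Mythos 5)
The paper does not actually prove this theorem; it cites it as Theorem 3.13 from [AKT20\_soda], which in turn is the total-function adaptation of the framework in [CGIMPS16]. Your proof reconstructs exactly the standard argument used there: replace each oracle call in the fine-grained reduction $M^P$ by the $T(\cdot)$-time nondeterministic machine for $P$, propagate $\bot$ upward, and read the composite machine as a co-nondeterministic decider for $k$-SAT (equivalently a nondeterministic algorithm for $k$-UNSAT). Your choice $\varepsilon=\delta/(1+\delta)$ so that $T_P(m)^{1-\varepsilon}=T(m)$, and the observation that the existence of a never-$\bot$ branch follows inductively from property (b) of nondeterministic complexity applied query by query, are both the right calculations. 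You also correctly flag the one subtle point — that correctness of $M^P$ under \emph{any} valid oracle (not a canonical one) is what licenses plugging in an arbitrary accepting branch of the subroutine — which is precisely why the definition is phrased that way. So this is correct and is the same approach the cited source uses.
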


\paragraph*{Non-reducibility for deterministic vs. randomized reductions.}
Our results, along with the framework for non-reducibility in~\cite{CGIMPS16}, do not address the possibility of proving lower bounds based on \SETH using a randomized fine-grained reduction. The reason is that \NSETH, upon which this framework is built, does not remain plausible when facing randomization (see~\cite{CGIMPS16,Wil16_MASETH}). That said, we are not aware of any instances where this barrier has been successfully bypassed with randomization.
Furthermore, one can use a non-uniform variant of \NSETH called \NUNSETH, under which our results also hold for randomized reductions with bounded error.

\section{Non-Reducibility Results}

In the absence of $n^{4-o(1)}$ conditional lower bounds for general algorithms for \APMF, it may be tempting to utilize the commonly used \SETH as a hardness assumption, and design a reduction from SAT to prove that the problem requires a running time of $n^{4-o(1)}$. However, we demonstrate here that in most settings of the problem, such (deterministic) reductions are not possible under \NSETH, thus proving our Theorems~\ref{thm:undirected_unit_node}, \ref{thm:nodecap}, and \ref{thm:unit_edge}.
Technically, our main results in this section are fast \emph{nondeterministic} algorithms for solving \APMF in various settings.
In fact, our nondeterministic framework is even more general, as it is applicable to \MCDSs. 
This could be significant because in some settings, a data structure for \APMF could potentially be much smaller than $n^2$ -- similar to cut-equivalent trees in the case of undirected graphs with edge capacities, and we aim to capture these scenarios as well with our framework.

Then, we reach our non-reducibility conclusions under \NSETH by an argument first made in~\cite{CGIMPS16} to rule out \SETH-based reductions to various problems, including $3$SUM and All-Pairs Shortest-Path.
The argument was later rewritten~\cite{AKT20_soda} to adapt from decision problems or functions (where each input has exactly one output) to total functions
as every graph has at least one Gomory-Hu tree, which was the focus of~\cite{AKT20_soda}. This adaptation is detailed in Section~\ref{prems}, with Theorem~\ref{thm:nseth_total}.
We use the same total function definition more generally for any \MCDS with $t_{mc}=\hO(1)$, which is a total function since every graph has at least one such data structure: just compute the $O(n^2)$ cuts naively in $O(n^2m)$ time (so in this case, $t_{pq}(m),t_{po}(m)=n^2$).
Next is our main framework, which is central to achieving our nondeterministic algorithms across all settings.
\begin{theorem}\label{thm:nondet}
Suppose there is a \MCDS $\mathcal{D}$ that given a graph $G\in\mathcal{F}$ on $m$ edges and $n$ nodes, after a preprocessing phase which, with high probability, makes $t_{pq}(m)$ \MF calls on $G$ and spends $t_{po}(m)$ time outside of these calls, can answer queries correctly in $t_{mc}(m)$ output sensitive time.

Then, the nondeterministic time complexity for constructing a \MCDS for $G$ is $\TPM=O(t_{po}(m)+n^{\omega(\log_{n}(|Q|),1,1)}+\sum_{(u,v)\in Q}\vol(f_{uv}))$ with $\TMCM=t_{mc}(m)$,
where $Q\subseteq V\times V$ is the set of pairs for which the $t_{pq}(m)$ \MF calls are made, and $f_{uv}$ is any maximum $uv$-flow in $G$.
\end{theorem}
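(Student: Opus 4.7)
The plan is to design a nondeterministic algorithm that simulates the deterministic part of $\mathcal{D}$ while replacing each of its \MF calls on a pair $(u,v)\in Q$ with a guess-and-verify step. Concretely, for every $(u,v)\in Q$ I would nondeterministically guess (a) the value $k_{uv}$; (b) a maximum $uv$-flow $f_{uv}$, encoded succinctly by listing only the edges in its support together with the flow each carries; and (c) a minimum $uv$-cut in the format appropriate to the setting (a bipartition $(S_{u,v},S_{v,u})$ for edge capacities, or a tripartition $(S_{u,v},C_{u,v},S_{v,u})$ for node capacities). If $\mathcal{D}$ is randomized, I would additionally guess a string of random bits on which the promised $t_{pq}(m)$ and $t_{po}(m)$ bounds hold, so that the set $Q$ becomes deterministic given the guess and the simulation of $\mathcal{D}$'s deterministic portion can be driven by this string.

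The verification splits into two essentially independent pieces. The flows are checked locally, pair by pair: for each $(u,v)\in Q$ one walks the support of $f_{uv}$ and confirms capacity feasibility on every edge, flow conservation at every node touched, acyclicity, and that the net out-flow at $u$ equals $k_{uv}$. This costs $\tilde{O}(\vol(f_{uv}))$ per pair and $\tilde{O}\bigl(\sum_{(u,v)\in Q}\vol(f_{uv})\bigr)$ in total. The cuts, by contrast, are verified in one batch via rectangular matrix multiplication: stack the guessed cuts as rows of a $|Q|\times n$ indicator matrix $A$ encoding the side assignment of each vertex, and multiply $A$ by the $n\times n$ (capacity-weighted) adjacency matrix $B$ of $G$. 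For node capacities, $(AB)[i,v]$ records the weight of edges from $S_{u_i,v_i}$ incident to $v$, so it suffices to confirm that this entry vanishes for every $v\in S_{v_i,u_i}$; for edge capacities an analogous product involving the sink-side indicator reads off the total crossing capacity, which is then compared against $k_{u_i v_i}$. The matrix product takes $O(n^{\omega(\log_n|Q|,1,1)})$ time, and a single $O(|Q|\cdot n)$ sweep of the result finishes the certification of the cut values.

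With every queried \MF value certified, by the min-cut/max-flow theorem the guessed $k_{uv}$ is indeed $\MF_G(u,v)$ whenever a flow and a cut of matching value $k_{uv}$ have both been accepted; I would then run $\mathcal{D}$'s deterministic portion in $t_{po}(m)$ time and return its output as the required \MCDS, whose query time is $t_{mc}(m)$ by construction of $\mathcal{D}$. The totals match the claimed $O\bigl(t_{po}(m)+n^{\omega(\log_n|Q|,1,1)}+\sum_{(u,v)\in Q}\vol(f_{uv})\bigr)$. The main (and essentially only new) obstacle is the batched cut-verification step: one has to shape the indicator matrices so that a single rectangular product handles all of the directed/undirected, edge/node, unit/general variants uniformly, and argue that the resulting exponent is $\omega(\log_n|Q|,1,1)$ rather than the trivial $|Q|\cdot m$; the local flow checks and the black-box simulation of $\mathcal{D}$ are otherwise routine.
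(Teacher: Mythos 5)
Your proposal is correct and follows essentially the same approach as the paper: guess the random string, the queried pairs $Q$, the values, succinct flow supports, and cut partitions; verify flows locally in $\sum\vol(f_{uv})$ time, verify all cuts in one batch via a $|Q|\times n$ by $n\times n$ rectangular matrix product; then simulate $\mathcal{D}$ in $t_{po}(m)$ time feeding it the verified answers. The only cosmetic difference is that the paper presents the matrix-based cut check only for directed general edge-capacities and dispatches the remaining settings to folklore reductions, whereas you sketch the node-capacitated check directly; both are standard and yield the same bound.
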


The main idea in the proof of Theorem~\ref{thm:nondet} is to first verify all the minimum cuts that $\mathcal{D}$ will use on $G$ in hindsight. Then, we simulate $\mathcal{D}$ and whenever we encounter a \MF query, we assert the corresponding guess. The correctness follows immediately from the correctness of $\mathcal{D}$. Before proving Theorem~\ref{thm:nondet}, we state a technical lemma regarding the nondeterministic time it takes to verify maximum flows for a given set of node pairs $Q$.

\begin{lemma}\label{lemma:nondetmaxflows}
Given a set of node pairs $Q\subseteq V\times V$, 
the nondeterministic complexity of solving \MF for all $(u,v)\in Q$ is 
$O\left(n^{\omega(\log_{n}(|Q|),1,1)}+\sum_{(u,v)\in Q}\vol(f_{uv})\right)$.
\end{lemma}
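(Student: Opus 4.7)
The plan is to certify, for each $(u,v) \in Q$, a maximum $uv$-flow together with a minimum $uv$-cut of matching value, which jointly establish $\MF(u,v)$ via the max-flow min-cut theorem. The nondeterministic guess for each pair consists of an acyclic flow $f_{uv}$, described by its support (of size $O(\vol(f_{uv}))$), together with a minimum $uv$-cut of the appropriate edge- or node-capacitated type.

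The flow half of the verification is local and cheap. For each $(u,v) \in Q$ we walk over the support of $f_{uv}$, check the capacity constraint on every element appearing in it, verify flow conservation at every internal vertex of $V(f_{uv})$, and simultaneously accumulate $\val(f_{uv})$ as the net flow leaving $u$. Because the guessed flow is acyclic, this costs $O(\vol(f_{uv}))$ time per pair and contributes exactly the $\sum_{(u,v) \in Q} \vol(f_{uv})$ term in the target bound.

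The main obstacle, and the heart of the proof, is cut verification: we must rule out for every guessed cut the existence of any edge crossing directly from the source side to the sink side, a task that naively costs $\Omega(|Q|\cdot m)$. I would batch all $|Q|$ guesses into a single rectangular matrix product. Let $M \in \{0,1\}^{|Q| \times n}$ be the Boolean matrix whose $i$-th row is the indicator of $S_{u_i,v_i}$, and let $A$ denote the (weighted) $n \times n$ adjacency matrix of $G$. Computing $MA$ costs $n^{\omega(\log_n |Q|,\,1,\,1)}$ time, and from its $i$-th row we can, in $O(n)$ time, (a) certify that no entry indexed by a vertex of $S_{v_i,u_i}$ is nonzero, so that no forbidden edge crosses the cut (in the node-capacitated case, the guessed separator $C_{u_i,v_i}$ is excluded from $S_{v_i,u_i}$, so it is transparently allowed to contain endpoints of edges), and (b) read off the total capacity of edges the cut carries, which must match $\val(f_{uv})$. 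The aggregate $O(|Q|\cdot n)$ post-processing is absorbed into the cost of $MA$.

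Putting the two halves together yields the claimed nondeterministic bound $O(n^{\omega(\log_n |Q|,1,1)} + \sum_{(u,v)\in Q} \vol(f_{uv}))$. The subtle points I expect to spend the most care on are purely modelling-level: uniformly describing the cut across the three regimes (directed/undirected, edge/node, unit/general capacities) so that one or two indicator-matrix products of the same asymptotic cost suffice in all of them, and verifying that the acyclic guessed flow together with an equal-value cut really certifies maximality even when the graph has parallel edges or high-capacity nodes. None of these introduce any cost beyond what is already stated, so the lemma follows.
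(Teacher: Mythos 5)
Your proposal matches the paper's proof essentially step for step: verify the guessed flows locally by walking their supports in $O(\vol(f_{uv}))$ each, and batch all cut verifications into a single rectangular product of a $|Q|\times n$ indicator matrix with the $n\times n$ weighted adjacency matrix, followed by $O(|Q|\cdot n)$ per-row postprocessing that is absorbed into the multiplication cost. The only cosmetic differences are that the paper explicitly guesses values $p_{uv}$ and checks the two-sided inequalities against them (whereas you derive the target value from the guessed flow), and the paper writes out only the directed edge-capacitated case and defers the node-capacitated variants to folklore reductions (whereas you inline a zero-crossing check); neither changes the argument.
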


\begin{proof}[Proof of Theorem~\ref{thm:nondet}]
A nondeterministic algorithm first guesses the 
random bits of $\mathcal{D}$ (if $\mathcal{D}$ is randomized Las Vegas) and a
set of node pairs to be queried $Q$.
It then solves $Q$ nondeterministically using Lemma~\ref{lemma:nondetmaxflows}.
Finally, it simulates $\mathcal{D}$ (on a deterministic Turing machine) with the guessed random bits, and whenever encountering a \MF query, it asserts the validity of the corresponding stored answer for the node pair.
By Lemma~\ref{lemma:nondetmaxflows}, verifying the \MF for $Q$ takes a total time of 
$O\left(n^{\omega(\log_{n}(|Q|),1,1)}+\sum_{(u,v)\in Q}\vol(f_{uv})\right)$, and otherwise (for some guess) the total running time is bounded by $t_{po}(m)$.
The correctness follows immediately from the correctness of $\mathcal{D}$, and so the proof of Theorem~\ref{thm:nondet} is concluded.
\end{proof}

\begin{proof}[Proof of Lemma~\ref{lemma:nondetmaxflows}]
First, guess a list $\{p_{uv}\}_{(u,v)\in Q}$ for the node pairs in $Q$.
We want to verify quickly two things for each pair $(u,v)$:
\begin{enumerate}
\item\label{flows} that there exists a $uv$-flow $f_{uv}$ such that $\val(f_{uv})\geq p_{uv}$, and
\item\label{cuts} that there exists a $uv$-cut $(S_{uv},T_{uv})$ such that $\val(S_{uv},T_{uv})\leq p_{uv}$.
\end{enumerate}
By the min-cut max-flow theorem~\cite{FF56}, it would imply that the guessed $p_{uv}$'s are maximum $uv$-flow values.
But how can we verify these conditions quickly? Verifying the cut or flow values explicitly for each pair would be too costly, as it might take at least $m$ time using a straightforward approach. Thus, the total time would be $O(|Q|\cdot m)$, which does not provide any significant advantage to nondeterminism. Therefore, we need to employ a more careful strategy.

\paragraph{Verifying lower bounds.}
For the flow lower bounds, verifying the guessed values $\{p_{uv}\}_{(u,v)\in Q}$ in the proper time of $\sum_{(u,v)\in Q} \vol(f_{uv})$ is simple; just guess succinct flows $f_{uv}$ for all pairs $(u,v)\in Q$, and then verify their validity and values $p_{uv}$, i.e., that the flow conservation and capacity constraints hold, and that $\val(f_{uv})\geq p_{uv}$. If any verification step returned "no", reject, otherwise accept. 
The verification time per flow is proportional to the number of edges in its support, that is $O(\vol(f_{uv}))$, as required.

As a side note, the challenge in making this part of Lemma~\ref{lemma:nondetmaxflows} useful 
will be to show for any maximum $uv$-flow $f_{uv}$ that \emph{there exists} a (perhaps alternative) succinct flow $f'_{uv}$ shipping the same amount of flow in $G$ while keeping the flow constraints, or to find a proper algorithm with few calls to \MF.
This will be elaborated on in Section~\ref{subsec:succinct_flows}.

\paragraph{Verifying upper bounds.}
For the cut upper bounds, the proof is similar across all variants, and it can even be applied to directed graphs with general \textit{edge} capacities.
Therefore, we will only present a proof for this particular variant, while noting that folklore reductions from the other variants are sufficient for our purposes.

Technically, the nondeterministic algorithm guesses a partition $(S_{uv},T_{uv}=V\setminus S)$ for every pair $(u,v)\in Q$, and then it utilizes fast matrix multiplication to simultaneously verify that the cut values are smaller than the guessed values $\{p_{uv}\}_{(u,v)\in Q}$, i.e. it checks that for all $(u,v)\in Q$, it holds that $\val(S_{uv},T_{uv})\leq p_{uv}$.
In more detail, we define a binary matrix $P_{|Q|\times n}$, where the rows correspond to the pairs in $Q$ and the columns correspond to the nodes in the graph. In this matrix, the entry $P[g_{uv},w]$ is set to $1$ if $w\in S_{uv}$, and to $0$ otherwise, where $g_{uv}$ is the index of the pair $(u,v)\in Q$ (by some arbitrary order).
Lastly, let $M_G$ be the incidence matrix of the graph $G$, where an entry $M_G[u,v]$ is the weight of the edge from $u$ to $v$ in $G$, and is $0$ if no such edge exists. With these notations, we are now ready to describe the cuts verification algorithm.

\paragraph{Algorithm: Cuts Verification}
\begin{enumerate}
\item perform the multiplication $P'=PM_G$.
\item for every pair $(u,v)\in Q$ verify that 
$\sum_{z\in T_{uv}} P'[g_{uv},z]\leq p_{uv}.$
\item if at least one verification step returned "no", reject, otherwise accept.
\end{enumerate}

\paragraph{Correctness.}
Observe that for every node $z$, $P'[g_{uv},z]$ is precisely 
$\sum_{w\in S_{uv}} c(w,z)$, and thus for each pair $(u,v)\in Q$, the verified values are precisely what we are looking for:  
$$
\sum_{z\in T_{uv}} P'[g_{uv},z]=
\sum_{w\in S_{uv},z\in T_{uv}} c(w,z)= \val(S_{uv},T_{uv}).
$$

\paragraph{Running time.}
The total size of $|Q|$ cut guesses (by their nodes) is $O(|Q|\cdot n)$. Multiplying an $|Q|\times n$ matrix with an $n\times n$ matrix takes $n^{\omega(\log_{n}(|Q|),1,1)}$ time, and counting $O(n)$ entries per pair of nodes in $Q$ takes $O(|Q|\cdot n)$ time, which brings the total verification time
to $O\left(n^{\omega(\log_{n}(|Q|),1,1)}+|Q|\cdot n\right)\leq O\left(n^{\omega(\log_{n}(|Q|),1,1)}\right),$
concluding Lemma~\ref{lemma:nondetmaxflows}.
\end{proof}
\subsection{Succinct $st$-flows in Various Settings}\label{subsec:succinct_flows}
In this section, we detail the volume upper bounds for flows in various settings, that we either prove or reference.

\subsubsection{Unit node-capacities}

\begin{lemma}\label{lemma:flows:unit_node}
Let $G=(V,E)$ be an undirected graph with unit node capacities, $s,t\in V$ a pair of nodes, and $f_{st}$ an $st$-flow in $G$. Then $\vol(f_{st})\leq 2n$.
\end{lemma}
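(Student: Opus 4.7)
The plan is to apply the classical flow decomposition after invoking the standard node-splitting reduction. By the preliminaries, I may assume $f_{st}$ is acyclic. I would then replace each $v\in V$ by a pair $v_{\text{in}},v_{\text{out}}$ joined by an edge of capacity $1$, and replace each undirected edge $uv$ by the two directed edges $u_{\text{out}}\to v_{\text{in}}$ and $v_{\text{out}}\to u_{\text{in}}$ of infinite capacity. This turns $f_{st}$ into an acyclic edge-flow in a directed graph, which decomposes into a set of edge-disjoint directed $s$-$t$ paths. The key observation is that in the original graph these correspond to paths $P_1,\ldots,P_k$ that are \emph{internally node-disjoint}: any internal node $v$ shared by two of them would force the capacity-$1$ interior edge $v_{\text{in}}\to v_{\text{out}}$ to carry at least $2$ units of flow, a contradiction.

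Having the internally node-disjoint decomposition, I would count edges directly. Since $G$ is simple, each edge of $G$ appears in at most one $P_i$: if two paths shared an edge $uv$, then the endpoint not in $\{s,t\}$ would be an internal vertex common to both, contradicting internal disjointness. Hence $\vol(f_{st}) = \sum_{i=1}^{k}|P_i|$. Writing $|P_i| = (\text{number of internal vertices of }P_i) + 1$ and using that the internal vertices across all paths are distinct and lie in $V\setminus\{s,t\}$, I get
$$
\sum_{i=1}^{k}\bigl(|P_i|-1\bigr) \;\leq\; n-2.
$$
Next I bound $k$: at most one path can have length $1$ (the direct edge $st$, if it exists in the simple graph $G$), and every longer path uses at least one internal vertex, all of which are pairwise distinct and drawn from $V\setminus\{s,t\}$, so $k \leq (n-2)+1 = n-1$.

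Combining the two bounds,
$$
\vol(f_{st}) \;=\; \sum_{i=1}^{k}|P_i| \;=\; \sum_{i=1}^{k}\bigl(|P_i|-1\bigr) + k \;\leq\; (n-2) + (n-1) \;=\; 2n-3 \;\leq\; 2n,
$$
which is the desired inequality. There is no real obstacle: the only subtlety is justifying that unit node-capacities force the decomposition to be internally node-disjoint, which is precisely what the node-splitting reduction accomplishes.
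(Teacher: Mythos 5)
Your proposal is correct and follows essentially the same approach as the paper: establish that a unit-node-capacity $st$-flow decomposes into internally node-disjoint $st$-paths, then count the edges in the support. The paper simply asserts the decomposition and counts by observing each internal node has degree $2$ in the support, while you justify the decomposition via the standard node-splitting reduction and count via path lengths; these are equivalent, with your version supplying the details the paper leaves implicit.
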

\begin{proof}
Observe that $f_{st}$ consists of internally node-disjoint paths. Therefore, each node in $V(f_{st})$ except for $s$ and $t$, has degree $2$ in $f_{st}$, and consequently we have $\vol(f_{st})\leq 2n$, as required.
\end{proof}

\subsubsection{Unit edge-Capacities}
In contrast to the case of unit node-capacities, here, nodes in $V(f_{st})$ can have arbitrarily high degrees in $f_{st}$, and it is not immediately evident that the total volume of $f_{st}$ is bounded by anything significantly smaller than $m$.
We use a general lemma from~\cite{KL98} that works even for arbitrary integer capacities, in our context of (multi-)graphs with unit edge-capacities:

\begin{lemma}[Theorem $7.1$ in~\cite{KL98}, rephrased]\label{lemma:flows:unit_edge}
Let $G=(V,E)$ be a directed graph with integer edge-capacities, $s,t\in V$ a pair of nodes, and $f_{st}$ an $st$-flow in $G$. Then $\vol(f_{st})\leq \tO\left(n\sqrt{\val(f_{st})}\right)$. 
\end{lemma}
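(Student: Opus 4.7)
The plan is to invoke Theorem~$7.1$ of Karger--Levine~\cite{KL98} as a black-box. As worded, the lemma bounds $\vol(f_{st})$ for a given $st$-flow $f_{st}$, but $\vol$ is clearly not invariant under adding cycles or choosing a wasteful path decomposition (e.g., two parallel unit edges in a $2$-node graph already give $\vol = 2 > n\sqrt{\val}$), so the content of the claim is really that there is a flow of value $\val(f_{st})$ with small support and that $f_{st}$ in the lemma is to be taken as such a representative. This is precisely how the lemma is used: in every application of this lemma in the paper (namely in the proof of Lemma~\ref{lemma:nondet:unit_edge}), $f_{st}$ is a nondeterministically guessed witness, so the algorithm is free to guess the Karger--Levine-style sparse representative.

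To carry out the proof, I would first reduce integer edge-capacities to unit edge-capacities by splitting each edge of capacity $c$ into $c$ parallel unit-capacity edges. Splitting preserves the maximum flow value, and projecting a unit-capacity flow back to the original graph weakly decreases its support (saturated parallel copies collapse to a single edge in the original graph). So it suffices to establish the bound for an appropriately chosen unit-capacity $st$-flow on the split multigraph.

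Next, I would apply the Dinic-style argument at the heart of Karger--Levine. Run the blocking-flow algorithm on the unit-capacity multigraph. By the Even--Tarjan potential argument, after $L$ phases every augmenting $st$-path in the residual has length at least $L$, so the remaining unaugmented flow is bounded via a layered-cut argument by $O(n^{2}/L)$. Choosing $L \approx \sqrt{\val(f_{st})}$ balances the two regimes: the first $O(\sqrt{\val(f_{st})})$ blocking-flow phases each contribute at most $O(n)$ new edges to the support (through a layer-cut bound), while the tail contributes $O(\sqrt{\val(f_{st})})$ augmenting paths of length at most $n$. Summing yields total support $\tilde O(n\sqrt{\val(f_{st})})$, giving the desired flow on the split graph (and hence, by projection, on the original).

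The main obstacle is the accounting across phases: a single blocking flow may touch many more than $n$ edges, so the naive bound ``$O(n)$ per phase $\times$ $\sqrt{F}$ phases'' is not immediate. Karger and Levine resolve this by a more refined layered argument that charges each support edge either to a small-size layer cut or to a short augmenting path, absorbing the extra factor into the $\tilde O(\cdot)$. Since this is precisely their Theorem~$7.1$, I would treat the accounting as a black-box; the only step we perform explicitly is the integer-to-unit reduction and the observation that the lemma should be read as asserting the existence of such an $f_{st}$ of value $\val(f_{st})$, which is exactly the freedom the nondeterministic guess provides in Lemma~\ref{lemma:nondet:unit_edge}.
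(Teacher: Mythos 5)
The paper offers no proof of this lemma at all: it is invoked purely as a citation of Theorem~$7.1$ of~\cite{KL98}, which (as the surrounding text notes) already applies to arbitrary integer capacities, so your ultimate reliance on Karger--Levine as a black box coincides with what the paper does. The problems are in the two steps you claim to carry out explicitly. First, your motivating ``counterexample'' is not one: for two parallel unit edges on two nodes, any flow with support $2$ has value $2$, and $2\le 2\sqrt{2}=n\sqrt{\val(f_{st})}$; moreover, the cycle issue is already dispatched in the paper's preliminaries, where flows are assumed acyclic without loss of generality (removing cycles preserves $\val$ and only decreases $\vol$). In fact no existential re-reading is needed: for an \emph{integral} acyclic flow in a \emph{simple} integer-capacitated digraph the bound holds for every such flow. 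Fix a topological order of the support; every support edge carries at least one unit and has span at least $1$, each unit of flow contributes total span at most $n$, and a simple graph has at most $n$ edges of any given span, so a support of size $E$ forces $\Omega(E^2/n)\le \val(f_{st})\cdot n$, i.e.\ $E=O\bigl(n\sqrt{\val(f_{st})}\bigr)$. The operative hypotheses are integrality and simplicity, not a cleverly chosen representative.

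Second, and more seriously, the reduction you do perform --- splitting a capacity-$c$ edge into $c$ parallel unit edges --- is a step that fails: the split graph is a multigraph, and the $\tO\bigl(n\sqrt{\val}\bigr)$ support bound is simply false for multigraphs (two nodes joined by $v$ parallel unit edges: every flow of value $v$ has support $v\gg n\sqrt{v}$), so ``establishing the bound for an appropriately chosen flow on the split multigraph'' is impossible in general. Only after projecting back and collapsing parallel copies could the bound hold, and proving that amounts to redoing the capacitated argument (as above), not black-boxing a unit-capacity statement. Relatedly, the blocking-flow/Even--Tarjan phase accounting you sketch is not the Karger--Levine argument for this volume bound (their proof is a path-decomposition/shortest-path argument in the spirit of the span computation above), and you concede the accounting gap yourself. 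Since Theorem~$7.1$ of~\cite{KL98} is already stated for integer capacities, the clean course is to drop both the reinterpretation and the splitting reduction and cite it directly, exactly as the paper does.
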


\subsubsection{General node-capacities}
We will now prove that the total volume necessary to ship any flow value in a node-capacitated network never exceeds $2n$.

\begin{lemma}\label{lemma:flows:nodecap}
Let $G=(V,E)$ be a directed graph with node-capacities, $s,t\in V$ be a pair of nodes, and $f_{st}$ be an $st$-flow in $G$. Then there exists an $st$-flow $f'_{st}$ in $G$ of the same value $\val(f'_{st})=\val(f_{st})$ such that $\vol(f'_{st})\leq 2n$.
\end{lemma}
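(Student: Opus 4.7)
The plan is to argue via an extreme-point argument on the flow polytope. Let $P \subseteq \mathbb{R}^E$ be the polytope of non-negative $st$-flows in $G$ of value exactly $\val(f_{st})$ that respect all node-capacity constraints $\sum_{e \text{ entering } v} x_e \leq c(v)$. Since $f_{st} \in P$ and $P$ is bounded (each edge-flow is bounded by the capacity of either endpoint, and any circulation is bounded by the smallest node capacity on it), $P$ is nonempty and bounded and thus has at least one vertex $f'_{st}$, which by construction satisfies $\val(f'_{st}) = \val(f_{st})$.

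I would then bound $\vol(f'_{st})$ by counting tight constraints at the vertex. $P$ lives in $\mathbb{R}^m$, and the $n$ flow-conservation equalities (one per node, with appropriate excess $\pm\val(f_{st})$ at $s, t$) have rank $n-1$, since they sum to zero. At a vertex of $P$, at least $m$ linearly independent constraints are tight; of these, $n-1$ come from the equalities, so at least $m - n + 1$ tight inequalities remain. Each such tight inequality is either of the form $x_e = 0$ or a tight node-capacity. Writing $k \leq n$ for the number of the latter at $f'_{st}$, the number of edges with $f'_e = 0$ is at least $m - n + 1 - k$, giving
\[
\vol(f'_{st}) \;=\; m - \big|\{e : f'_e = 0\}\big| \;\leq\; (n-1) + k \;\leq\; 2n - 1 \;\leq\; 2n.
\]

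The ``pattern'' whose elimination the paper describes can be realised as follows: while $f$ is not a vertex of $P$, the subspace of directions $d \in \mathbb{R}^E$ supported on $\{e : f_e > 0\}$ that are annihilated by all currently tight conservation and capacity constraints is nontrivial, so one shifts $f$ along $\pm d$ until some new inequality becomes tight -- which either removes an edge from the support (decreasing $\vol$) or makes a new node-capacity tight (leaving $\vol$ unchanged). Both effects are monotone, so after finitely many pattern-elimination steps one reaches a vertex $f'_{st}$ satisfying the bound above. The main subtlety to verify is that the shift direction can always be chosen to be supported within the current support of $f$, which holds because $\{e : f_e > 0\}$ is exactly the set of coordinates on which the constraint $x_e \geq 0$ is not currently tight; beyond this the argument is a routine application of polyhedral theory, and I do not expect a genuine obstacle.
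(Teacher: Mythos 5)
Your proof is correct but takes a genuinely different route from the paper. You give an LP/extreme-point argument: view the set of $st$-flows of prescribed value as a polytope $P\subseteq\mathbb{R}^m$ and count tight constraints at a vertex, getting $\vol(f')\leq (n-1)+k\leq 2n-1$ where $k\leq n$ is the number of tight node-capacity constraints. The paper instead gives an explicit combinatorial reduction rule: it identifies a pattern (the \emph{anti-directed cycle}) in the flow support, shows how to reroute flow along it so that some edge's flow drops to zero while conservation and node-capacities are preserved, and then observes that a flow with no anti-directed cycle corresponds to a forest in the standard bipartite ``split-node'' graph $G_{1,2}$, giving $\vol(f')\leq 2|V(f')|\leq 2n$. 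The two arguments are morally the same --- the anti-directed cycle is exactly a nontrivial direction in the support that is annihilated by the conservation constraints, i.e.\ the witness that the current point is not a vertex of $P$ --- and indeed you sketch precisely this correspondence at the end. What the polyhedral route buys is generality and brevity (no need to define the combinatorial pattern, and it ports immediately to other constraint systems); what the paper's route buys is a concrete, self-contained structure theorem about flow supports and an explicit uncrossing algorithm, which is useful since the paper wants to exhibit a witness that can be guessed and verified.

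Two small points worth tightening if you wrote this out fully. First, boundedness of $P$ is not quite automatic if $s$ and $t$ are uncapacitated (a $2$-cycle $s\leftrightarrow t$ could be an unbounded direction); but you don't actually need boundedness, only that $P$ is pointed, which follows because $P\subseteq\mathbb{R}^m_{\geq 0}$ contains no line. Second, the phrase ``of these, $n-1$ come from the equalities'' should be ``the equalities contribute at most $n-1$ to any independent family'', since the maximal independent tight family need not contain all equalities; the inequality $m\leq (n-1)+k+|\{e:f'_e=0\}|$ is what you actually use, and it holds by subadditivity of rank.
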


The high-level idea of the proof is that any valid $st$-flow $f_{st}$ can be transformed into a succinct $st$-flow $f'_{st}$ by eliminating all copies of a certain pattern from $f_{st}$, while preserving the flow value and meeting the flow constraints.
The pattern that will be eliminated is called an \emph{anti-directed closed walk} (see Figure~\ref{Figs:anti_directed_cycle}). It is, in our context, a directed graph whose underlying graph forms a \textit{simple} closed walk, that is where no edge appears twice, and consecutive edges have alternate directions, or in other words, no pair of consecutive edges forms a directed path.
Note that an anti-directed closed walk must have an even number of edges.
\begin{definition}[anti-directed closed walk]
For any integer $\ell$, an \emph{anti-directed closed walk} is a simple closed walk $C=(u_1,u_2,\ldots,u_{2\ell})$ where for each $i\in [1,\ldots,2\ell]$, the edges between $u_{i-1},u_i$ and between $u_i,u_{i+1}$ are oriented in one of two ways: either (1) towards $u_i$, that is $(u_{i-1},u_i), (u_{i+1},u_i)\in E(C)$, or (2) away from $u_i$, that is $(u_i,u_{i-1}), (u_i,u_{i+1})\in E(C)$.
\end{definition}

For an anti-directed closed walk $C$ we define $P_{out}$ (respectively, $P_{in}$) as the set of all node occurrences $u_i\in C$ whose adjacent edges are oriented in way (1) (respectively, in way (2)). 
Note that $|P_{in}|=|P_{out}|$, $P_{in}\cap P_{out}=\emptyset$, and $P_{in}\cup P_{out}=C$. 

The related concept of anti-directed \textit{cycles} has been studied before in other contexts, such as edge partitioning~\cite{AHR81} and the existence of anti-directed hamiltonian cycles~\cite{DM15}.
We are now ready to prove Lemma~\ref{lemma:flows:nodecap}.
\begin{proof}[Proof of Lemma~\ref{lemma:flows:nodecap}]
Let $f_{st}$ be an $st$-flow in $G$. To modify $f_{st}$, we iteratively find an anti-directed closed walk $C$ in $f_{st}$ and reroute the flow of $f_{st}$ that is shipped through $C$. 
The rerouting is done in such a way that results in a zero flow through at least one edge of $C$, causing the edge to be removed from $f_{st}$. This is achieved while ensuring that the total flow entering each node in $P_{in}$ and leaving each node in $P_{out}$ remains exactly the same before and after the operation (refer to Figure~\ref{Figs:anti_directed_cycle}).

\begin{figure}[!ht]
       \includegraphics[width=1.0\textwidth,left]{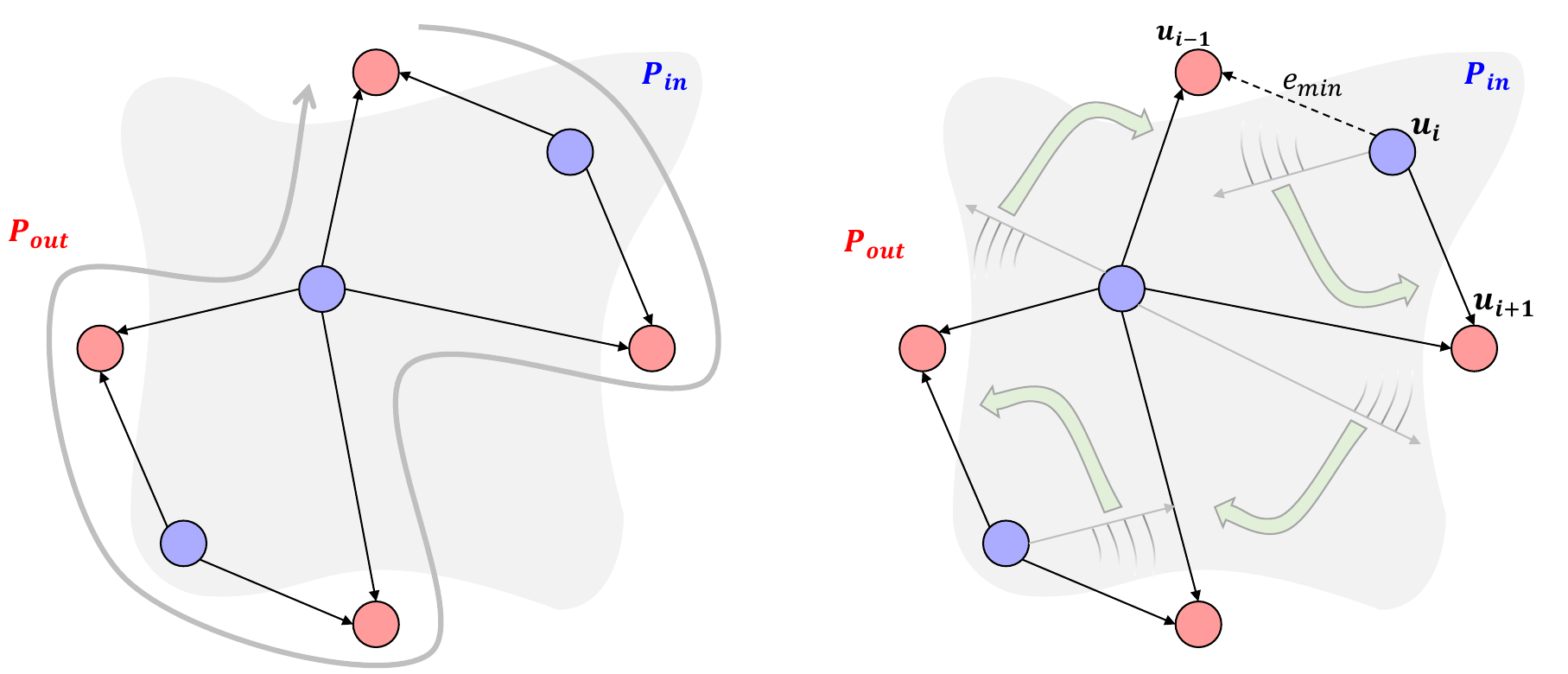}
   \caption[-]{
The figure on the left shows an anti-directed walk of length $8$, while the figure on the right illustrates its elimination.
   }
   \label{Figs:anti_directed_cycle}
\end{figure}

More formally, let $e_{min}$ be an edge in $C$ such that the flow of $f_{st}$ through it $f_{st}(e_{min})$ is minimal among all edges of $C$, and assume without loss of generality that $e_{min}$ is oriented from a node $u_i$ to its predecessor $u_{i-1}$ in $C$ for some $i\in [|C|]$ (where indices are calculated modulo $|C|$).
For every node occurrence $u_j\in P_{in}$, we reroute a flow of value $f_{st}(e_{min})$ that was previously shipped through the edge $(u_j,u_{j-1})$ to flow through the edge $(u_j,u_{j+1})$ instead.
By induction on the number of rerouting operations, $f_{st}$ remains valid after this operation, meaning that it (1) respects the edge directions, and (2) adheres to the node capacities and flow conservation constraints.
This is because while $u_{j+1}\in P_{out}$ receives additional flow of $f_{st}(e_{min})$ from its predecessor $u_{j}\in P_{in}$, it also receives less flow by the exact same amount from its successor $u_{j+2}\in P_{in}$.
Furthermore, after the operation, the flow shipped through $e_{min}$ in $f_{st}$ becomes zero, and therefore $e_{min}$ does not exist in $f_{st}$ anymore, and $C$ is no longer an anti-directed closed walk there.

Finally, we demonstrate that when the operation can no longer be performed in $f_{st}$, i.e., no more anti-directed closed walks exist in $f_{st}$, then the resulting $st$-flow $f'_{st}$ satisfies $\vol(f'_{st})\leq 2n$. Since each operation reduces $\vol(f_{st})$ by at least $1$, this must occur at some point.
To begin with, we consider the standard reduction of $f'_{st}$ to a bipartite graph $G_{1,2}=(V_1\cup V_2, E_{1,2})$. $V_1$ and $V_2$ are two copies of $V(f'_{st})$, and there is an edge in $E_{1,2}$ from $u_1\in V_1$ to $v_2\in V_2$ iff the edge $(u,v)$ exists in $f'_{st}$. Note that any cycle in the undirected graph underlying $G_{1,2}$ corresponds to an anti-directed closed walk in $f'_{st}$.
Since no more anti-directed closed walks exist in $f'_{st}$, we conclude that the undirected graph underlying $G_{1,2}$ is a forest, and therefore, $\vol(f'_{st})\leq |E_{1,2}|\leq 2|V(f'_{st})|\leq 2n$, as required by Lemma~\ref{lemma:flows:nodecap}.
\end{proof}

Below, we sketch an alternative proof that uses linear programming~\cite{Che24}. First, we write \MF as a linear program, as usual, but adjusted to vertex capacities. 

\begin{equation*}
\begin{array}{ll@{}ll}
\text{maximize}  & \displaystyle\sum_{e \in \delta^+(s)} f(e) - \sum_{e \in \delta^-(s)} f(e) &\\
\text{subject to}
&\displaystyle\sum_{e \in \delta^-(v)} f(e)=\sum_{e \in \delta^+(v)} f(e), &\hspace{1cm}\forall v \in V\setminus\{s,t\}\\
&\displaystyle\sum_{e \in \delta^-(v)} f(e) \le c(v), &\hspace{1cm}\forall v \in V\setminus \{s, t\}\\
& f(e)\geq 0, &\hspace{1cm}\forall e\in E
\end{array}
\end{equation*}

In the above, $\delta^-(v)$ is the set of incoming edges to $v$, and $\delta^+(v)$ is the set of outgoing edges from $v$.
Note that there are $m$ variables, $n-2$ flow conservation constraints and $n-2$ node-capacity constraints (since $s$ and $t$ do not have capacity constraints), and $m$ non-negativity constraints.
From standard polyhedral theory (see the proof of Proposition $2.4$ in~\cite{CK04}, as well as~\cite{Raj90}), the linear program above admits an optimal solution with as many tight-constraints as variables, namely $m$. 
Since the total number of nontrivial constraints is $2n-4$, there exists an optimal solution with at most this many non-zero variables.

\subsection{Nondeterministic Algorithms in Various Settings}
Finally, we will use Theorem~\ref{thm:nondet} to show nondeterministic algorithms with time tailored to each setting; Theorems \ref{thm:undirected_unit_node}, \ref{thm:nodecap}, \ref{thm:unit_edge} will immediately follow from Lemmas \ref{lemma:nondet:undirected_unit_node}, \ref{lemma:nondet:nodecap}, \ref{lemma:nondet:unit_edge}, respectively, by applying Theorem~\ref{thm:nseth_total}.

\begin{lemma}\label{lemma:nondet:undirected_unit_node}\emph{(Undirected graphs with unit node-capacities.)}
The nondeterministic time complexity of \MCDS on undirected graphs with unit node-capacities is 
$\TPM=O\left(n^{\omega(\log_n m,1,1)}\right)$ with $\TMC=O(1)$.
\end{lemma}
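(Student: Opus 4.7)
}
The plan is to invoke Theorem~\ref{thm:nondet} using the randomized Las Vegas algorithm of Theorem~\ref{thm:algorithm:main} as the underlying \MCDS. In that algorithm we have $t_{pq}(m) = \tO(m)$ \MF calls, $t_{po}(m) = \tO(mn)$ extra preprocessing time, and $t_{mc}(m) = \tO(1)$ output-sensitive query time via the stored pointers to cuts in $\mathcal{C}$. So the set $Q$ of pairs on which the underlying algorithm invokes \MF satisfies $|Q| = \tO(m)$, and it only remains to control the flow-volume sum $\sum_{(u,v)\in Q}\vol(f_{uv})$ and compare it against $n^{\omega(\log_n|Q|,1,1)}$.

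The central step is to observe that on an undirected graph with unit node-capacities, every acyclic maximum $uv$-flow admits an integral decomposition into internally vertex-disjoint $uv$-paths (by the node-capacitated flow-path decomposition). Since each internal vertex of $V\setminus\{u,v\}$ may be used by at most one such path, the paths in the decomposition together traverse at most $n-2$ internal vertices, and the number of paths is at most $\min\{d_G(u),d_G(v)\}\le n-1$; summing $\ell_i = (\ell_i-1)+1$ over the paths gives a total edge count $\sum_i\ell_i \le (n-2)+(n-1) = O(n)$. Hence $\vol(f_{uv}) = O(n)$ for every $(u,v)\in Q$, so
\[
\sum_{(u,v)\in Q}\vol(f_{uv}) \;\le\; \tO(m)\cdot O(n) \;=\; \tO(mn).
\]

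Plugging these bounds into Theorem~\ref{thm:nondet} yields a total nondeterministic preprocessing time of
\[
\tO(mn) \;+\; n^{\omega(\log_n |Q|,\,1,\,1)} \;+\; \tO(mn).
\]
Since $|Q| \le m\cdot n^{o(1)}$ we have $\log_n|Q| \le \log_n m + o(1)$, so $n^{\omega(\log_n|Q|,1,1)} \le n^{\omega(\log_n m,1,1)+o(1)}$. Finally, because the output of multiplying an $n^{\log_n m}\times n$ matrix by an $n\times n$ matrix already contains $n^{1+\log_n m} = mn$ entries, we have $n^{\omega(\log_n m,1,1)} \ge mn$, so the $\tO(mn)$ terms are absorbed and the overall bound is $\TPM = O\!\left(n^{\omega(\log_n m,1,1)}\right)$. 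The query time is inherited directly from the Las Vegas data structure, giving $\TMC = \tO(1)$, as required.

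The main obstacle I expect is a purely bookkeeping one: making sure that the $\hO$/$\tO$ slack introduced both by the Las Vegas algorithm (in $t_{pq}$, $t_{po}$, and $t_{mc}$) and by the use of $\omega(\log_n|Q|,1,1)$ (as opposed to $\omega(\log_n m,1,1)$) can be absorbed cleanly into the subpolynomial factors hidden in the statement; everything else is a straightforward combination of the flow-path decomposition bound with Theorem~\ref{thm:nondet} and Theorem~\ref{thm:algorithm:main}.
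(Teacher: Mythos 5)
Your proposal is correct and follows essentially the same route as the paper: invoke Theorem~\ref{thm:nondet} with the Las Vegas algorithm of Theorem~\ref{thm:algorithm:main} (so $|Q|=\tO(m)$, $t_{po}=\tO(mn)$, $t_{mc}=\tO(1)$) and bound $\vol(f_{uv})=O(n)$ via the internally vertex-disjoint path decomposition, which is exactly the content of the paper's Lemma~\ref{lemma:flows:unit_node}. The only addition on your side is the explicit bookkeeping that $n^{\omega(\log_n m,1,1)}\ge mn$ absorbs the $\tO(mn)$ terms, which the paper leaves implicit; this is a helpful clarification but not a different argument.
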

Assuming $\omega=2$, the time becomes $\TPM=\tO(mn)$.
\begin{proof}[Proof of Lemma~\ref{lemma:nondet:undirected_unit_node}]
By combining Theorem~\ref{thm:nondet} with the flow bound in Lemma~\ref{lemma:flows:unit_node} and our Las Vegas algorithm in Theorem~\ref{thm:algorithm:main} which has $t_{pq}(m)=\tO(m),t_{po}=\tO(mn)$, and $t_{mc}=\tO(1)$, we get a nondeterminstic time of $\TPM = O\left(n^{\omega\left(\log_n m,1,1\right)}\right)$ with $\TMC=\tO(1)$, as required by lemma~\ref{lemma:nondet:undirected_unit_node}.
\end{proof}

\begin{lemma}\label{lemma:nondet:nodecap}\emph{(Directed graphs with general node-capacities.)}
The nondeterministic time complexity of \MCDS on directed graphs with general node-capacities is $\TPM=O\left(n^{\omega(2,1,1)}\right)$ with $\TMC=O(1)$.
\end{lemma}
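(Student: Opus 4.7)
The plan is to instantiate Theorem~\ref{thm:nondet} with the trivial \MCDS obtained from the naive all-pairs \MF algorithm: the preprocessing issues one \MF call for each ordered pair $(u,v)\in V\times V$ and stores the resulting min-cut, giving $t_{pq}(m)=n^2$, $t_{po}(m)=O(n^2)$, and an output-sensitive query time of $t_{mc}(m)=O(1)$ via a direct lookup of the precomputed cut. In the notation of Theorem~\ref{thm:nondet} this fixes the query set to be $Q=V\times V$ with $|Q|=n^2$, so $\log_n|Q|=2$; this is what produces the $\omega(2,1,1)$ exponent in the final bound.

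The key nontrivial ingredient, and the step I expect to be the main obstacle of the overall argument, is an $O(n)$ flow-volume bound for maximum $uv$-flows in directed graphs with general node-capacities. As sketched in Section~\ref{subsubsec:nondet}, this would be established separately by identifying a local pattern in any acyclic max-flow on such a graph that can be eliminated while preserving the flow value and respecting all capacity and conservation constraints, and without increasing the support; iterating the elimination leaves a support of size at most $2n$. Granting $\vol(f_{uv})\le 2n$ for a suitable choice of maximum $uv$-flow $f_{uv}$, the volume sum appearing in Theorem~\ref{thm:nondet} is bounded by $\sum_{(u,v)\in Q}\vol(f_{uv})\le 2n^3$.

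Plugging these estimates into Theorem~\ref{thm:nondet} then yields
\[
\TPM = O\left(t_{po}(m)+n^{\omega(\log_n|Q|,1,1)}+\sum_{(u,v)\in Q}\vol(f_{uv})\right) = O\left(n^2+n^{\omega(2,1,1)}+n^3\right),
\]
together with $\TMC=O(1)$. Since the product of an $n^2\times n$ matrix by an $n\times n$ matrix has $n^3$ entries, we have $\omega(2,1,1)\ge 3$, so the middle term dominates, and the bound simplifies to $\TPM=O\left(n^{\omega(2,1,1)}\right)$, matching the statement of Lemma~\ref{lemma:nondet:nodecap}. Everything beyond the volume bound is a direct substitution into the master framework, so the conceptual work is concentrated in that one auxiliary lemma.
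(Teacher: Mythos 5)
Your proof is correct and follows exactly the paper's argument: combine Theorem~\ref{thm:nondet} with the naive algorithm ($t_{pq}=t_{po}=O(n^2)$, $t_{mc}=O(1)$, $Q=V\times V$) and the $\vol(f_{uv})\le 2n$ bound of Lemma~\ref{lemma:flows:nodecap}, obtained by the anti-directed-cycle elimination you describe. Your added observation that $\omega(2,1,1)\ge 3$ absorbs the $n^3$ flow-verification term is a small clarification the paper leaves implicit.
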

Assuming $\omega=2$, the time becomes $\TPM=O(n^3)$.
With current algorithms, the time is $\TPM=O(n^{3.252})$ using rectangular matrix multiplication~\cite{GU18}. 
\begin{proof}[Proof of Lemma~\ref{lemma:nondet:nodecap}]
By combining Theorem~\ref{thm:nondet} with the flow bound in Lemma~\ref{lemma:flows:nodecap} and the naive algorithm which has $t_{pq}(m),t_{po}(m)=O(n^2)$ and $t_{mc}(m)=O(1)$, we get a nondeterminstic time of $\TPM=O\left(n^{\omega(2,1,1)}\right)$ with $\TMC=\tO(1)$, as required by Lemma~\ref{lemma:nondet:nodecap}.
\end{proof}
\begin{lemma}\label{lemma:nondet:unit_edge}\emph{(Directed graphs with unit edge-capacities.)}
The nondeterministic time complexity of \MCDS on directed graphs with unit edge-capacities is $\TPM=O\left(n^{5/2}\sqrt{m} + n^{\omega(2,1,1)}\right)$ with $\TMC=O(1)$.
\end{lemma}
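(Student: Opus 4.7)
The plan is to apply Theorem~\ref{thm:nondet} with the \emph{naive} algorithm, which computes \MF on every one of the $n^2$ ordered pairs and writes down the resulting table. This gives $|Q|=n^2$, $t_{pq}(m),t_{po}(m)=O(n^2)$, and $t_{mc}(m)=O(1)$, so $\log_n|Q|=2$ and the only nontrivial term in Theorem~\ref{thm:nondet} is $\sum_{(u,v)\in Q}\vol(f_{uv})$. The flow-volume ingredient comes from the Karger--Levine bound~\cite{KL98}: for any maximum $st$-flow $f_{st}$ in a directed graph with unit edge-capacities, there is a realization satisfying $\vol(f_{st})\le n\sqrt{\val(f_{st})}$. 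The nondeterministic algorithm then guesses, for each pair $(u,v)$, such a succinct flow together with a minimum $uv$-cut; cuts are verified collectively via the rectangular matrix multiplication trick described in Section~\ref{subsubsec:nondet}, accounting for the $n^{\omega(2,1,1)}$ term, and flows are verified individually in time proportional to their volume.

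First I would reduce the total work of flow verification to the quantity $\sum_{(u,v)\in Q}\vol(f_{uv})$, which by Karger--Levine is at most $n\sum_{(u,v)}\sqrt{\val(f_{uv})}$. Second, applying Cauchy--Schwarz across the $n^2$ pairs I would obtain
$$\sum_{(u,v)\in Q}\sqrt{\val(f_{uv})}\;\le\;\sqrt{|Q|\cdot\sum_{(u,v)}\val(f_{uv})}\;=\;n\sqrt{\sum_{(u,v)}\val(f_{uv})}.$$
Third, to control the total max-flow value I would use the trivial out-degree bound $\val(f_{uv})\le d_{\mathrm{out}}(u)$, yielding $\sum_{(u,v)}\val(f_{uv})\le n\sum_u d_{\mathrm{out}}(u)=nm$. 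Combining the three steps gives
$$\sum_{(u,v)\in Q}\vol(f_{uv})\;\le\;n\cdot n\cdot\sqrt{nm}\;=\;n^{5/2}\sqrt{m},$$
and plugging back into Theorem~\ref{thm:nondet} produces the desired bound $\TPM=O\!\left(n^{5/2}\sqrt{m}+n^{\omega(2,1,1)}\right)$ with $\TMC=O(1)$.

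The one place that needs care is the transition from the per-pair Karger--Levine bound to a global sum: a naive per-pair estimate using $\val(f_{uv})\le n$ would give only $n^2\cdot n^{3/2}=n^{7/2}$, which is already weaker than the target whenever $m\ll n^2$. The Cauchy--Schwarz step is therefore essential, and it is what lets the two ``loose'' factors (one from the number of pairs, one from the individual flow magnitudes) be amortized against the tight global budget $\sum_u d_{\mathrm{out}}(u)=m$, yielding the claimed $n^{5/2}\sqrt{m}$ term. Everything else follows mechanically from Theorem~\ref{thm:nondet} exactly as in the proofs of Lemmas~\ref{lemma:nondet:undirected_unit_node} and~\ref{lemma:nondet:nodecap}.
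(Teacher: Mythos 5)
Your proposal is correct and follows essentially the same route as the paper: apply Theorem~\ref{thm:nondet} with the naive algorithm ($|Q|=n^2$), bound $\vol(f_{uv})$ via the Karger--Levine lemma (Lemma~\ref{lemma:flows:unit_edge}), bound $\val(f_{uv})$ by a degree, and amortize over all pairs to reach $n^{5/2}\sqrt m$. The only cosmetic differences are that you use $\val(f_{uv})\le d_{\mathrm{out}}(u)$ where the paper writes $\min\{d_{\mathrm{out}}(u),d_{\mathrm{in}}(v)\}$, and you spell out the Cauchy--Schwarz step that the paper compresses into the single line $n^2\cdot n\cdot\sqrt{m/n}$.
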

Assuming $\omega=2$, the time becomes $\TPM=O\left(n^{5/2}\cdot \sqrt{m}\right)$.
With current algorithms, the time is $\TPM=O\left(n^{5/2}\cdot \sqrt{m} + n^{3.252}\right)$ using an algorithm for rectangular matrix multiplication~\cite{GU18}.
\begin{proof}[Proof of Lemma~\ref{lemma:nondet:unit_edge}]
According to Lemma~\ref{lemma:flows:unit_edge}, the flow part in the nondeterminstic time $\TPM$ in Theorem~\ref{thm:nondet} is
\begin{align*}
\sum_{(u,v)\in Q} \vol(f_{uv}) \leq 
\sum_{(u,v)\in Q} n\sqrt{\val(f_{uv})} \leq 
\sum_{(u,v)\in Q} n\cdot \sqrt{\min\{d_{out}(u),d_{in}(v)\}}.
\end{align*} 
By combining Theorem~\ref{thm:nondet} with the flow bound in Lemma~\ref{lemma:flows:unit_edge} and the naive algorithm which has $t_{pq}(m),t_{po}(m)=O(n^2)$, we get a nondeterministic time bound of
\begin{align*}
\TPM= O\left(n^2\cdot n\cdot \sqrt{m/n} + n^{\omega(2,1,1)}\right)=
O\left(n^{5/2} \sqrt{m} + n^{\omega(2,1,1)}\right),
\end{align*}
with $\TMC=\tO(1)$, as required by Lemma~\ref{lemma:nondet:unit_edge}.
\end{proof}

\section{All-Pairs Max-Flow for Unit Node-Capacities}

In this section, we prove a generalized version of Theorem~\ref{thm:algorithm:main} that accounts for node-capacities. We use parameter $\alpha:V\rightarrow [n]$ to represent for a node $v\in V$ an upper bound on the size of the minimum $uv$-separator $|C_{u,v}|$, for all nodes $u\in V$.\footnote{Via folklore reductions it captures, in particular, minimum vertex-cuts that in addition have smallest separator sizes.}
$\alpha$ is defined recursively, such that for any subset of nodes $A\subseteq V$, $\alpha(A)$ is equal to the sum of $\alpha(u)$ over all nodes $u\in A$. It is important to note that in the case of unit node-capacities, we have $\alpha(u)\leq d(u)$. This implies that $\alpha(V)\leq 2m$, which concludes Theorem~\ref{thm:algorithm:main}.

\begin{theorem}\label{thm:algorithm:main:gen}
There is a randomized Las Vegas algorithm that given an undirected graph $G=(V,E)$ with unit node-capacities, makes $t_{pq}(m)=\tO(\alpha(V))$ calls to \MF in $G$ and spends $t_{po}(m)=\tO(n\alpha(V))$ time outside of these calls, and can:
\begin{enumerate}
\item\label{DS:with} construct a data structure of size 
$\tO\left(\alpha(V)^{3/2}\right)$ that stores a set $\mathcal{C}$ of $\tO(\alpha(V))$ vertex cut-separators, such that given a queried pair $u,v\in V$ outputs in $t_{mc}=\tO(1)$ time a pointer to an optimal $uv$-separator in $\mathcal{C}$.
\item\label{DS:without} construct a data structure of size $\tO(\alpha(V))$ that given a queried pair $u,v\in V$ outputs in $t_{mf}=\tO(\MF(u,v))$ time the value $\MF(u,v)$ (or simply in $t_{mf}=\tO(1)$ time in an alternative Monte Carlo version).
\end{enumerate}
\end{theorem}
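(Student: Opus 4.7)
The plan is to generalize the pivot-and-partition recursion of~\cite{AKT20} from undirected edge-capacitated graphs to undirected unit node-capacitated graphs, and augment it with \emph{reinforcement queries} and a \emph{flow-value stratification} that together overcome the two barriers highlighted in the technical overview: (i) the triangle inequality $\MF(u,v)\ge \min\{\MF(u,w),\MF(v,w)\}$ fails when one node lies in the separator of the other pair, and (ii) the random pivot $p$ may itself belong to the separator $C_{u,v}$ for the pair that a recursion instance is meant to resolve. The overall recursion will look as follows: at a recursion node on $V'\subseteq V$, sample a uniformly random pivot $p\in V'$, compute $\MF_G(p,u)$ for every $u\in V'\setminus\{p\}$ together with the associated vertex-cut $(S_{p,u},C_{p,u},S_{u,p})$, re-map every $u$ to the unique set $S_{u',p}$ containing it, and recurse on each part. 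As in~\cite{AKT20}, the grouping guarantees that the largest part has size at most $\tfrac34|V'|$, so the recursion depth is $O(\log n)$.

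To address (i) I would add, whenever a cut $C_{p,u}$ is computed, an extra \MF call $\MF_G(q,u)$ for every $q\in C_{p,u}$. Since the graph has unit node-capacities we have $|C_{p,u}|\le \alpha(u)$, and instances $V'$ at a fixed recursion depth are disjoint, so the total number of reinforcement queries per depth is $\sum_{u\in V}\alpha(u)=\alpha(V)$ and $\tO(\alpha(V))$ across the whole tree. To address (ii) I would run the basic recursion $\tO(1)$ times in a Monte Carlo variant, and apply the global probabilistic argument sketched in the overview (property (**) and Figure~\ref{Figs:prob}) to show that with constant probability, for every pair $s',t'$, along the root-to-leaf path in the recursion tree ending at the first instance of size at most $\tO(\alpha(s'))$ containing $s'$, some pivot that separates $s'$ and $t'$ lies outside $C_{s',t'}$. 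Together with the direct queries performed on high-degree pairs inside small instances, this guarantees property (**) and hence the correctness of the Monte Carlo version.

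To turn this into a Las Vegas algorithm without losing the time budget, I would stratify by flow value: for each $k$ that is a power of $2$, a stage processes pairs with $\MF$ in $[k,2k)$, restricting attention to the terminal set $T_k=\{v:\alpha(v)\ge k\}$, which has $|T_k|\le \alpha(V)/k$. Within a stage I would repeat the basic recursion $x=\tO(k)$ times, keeping, for each pair $(s',t')\in T_k\times T_k$, counters tracking (a) how many repetitions yielded a valid min-cut from some pivot, (b) how many placed $s',t'$ together in an instance of size $\tO(k)$, and (c) how many placed one of them in the separator side of the other. Any pair for which (a) does not reach the required threshold is then queried directly via \MF; the global counting argument bounds the number of such directly-queried pairs per stage by $\tO(|T_k|)=\tO(\alpha(V)/k)$, so summed against the per-stage cost $\tO(k)$ the total extra cost is $\tO(\alpha(V))$ \MF calls, matching $t_{pq}(m)=\tO(\alpha(V))$ and $t_{po}(m)=\tO(n\alpha(V))$. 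The data structure of Part~\ref{DS:with} stores the $\tO(\alpha(V))$ separators produced across all stages; since each separator for a pair of nodes with degrees $\ge k$ has size $O(k)$, a balanced-depth pointer/AVL index over the stages yields the claimed $\tO(\alpha(V)^{3/2})$ total space and $\tO(1)$ query time, and Part~\ref{DS:without} drops the separator contents to obtain $\tO(\alpha(V))$ space.

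The main obstacle I anticipate is establishing property (***) and, in particular, proving the Las Vegas guarantee: that whenever the min-cut witness for a pair $(s',t')$ is \emph{not} produced by any of the $x$ repetitions through the basic recursion, the counters (b) or (c) necessarily accumulate to trigger a direct \MF query, so that the algorithm never outputs an incorrect answer. This requires a verifiable, non-resampling global argument showing that if for every repetition the separating pivot lay inside $C_{s',t'}$, then the pair either survived together in $\Omega(x)$ final instances of size $\tO(k)$ or appeared $\Omega(x)$ times in one of the separator sides. Combined with the stratification parameters $x=\tO(k)$ and $|T_k|\le \alpha(V)/k$, this must simultaneously yield correctness for every pair and the claimed $\tO(\alpha(V))$ bound on the number of \MF calls. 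The rest of the analysis, namely the depth bound, the bookkeeping of the cut data structure, and the $\tO(n\alpha(V))$ time outside \MF calls, follows from the disjointness of instances at each depth and standard amortization.
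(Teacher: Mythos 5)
Your high-level roadmap matches the paper's: generalize the pivot-and-partition recursion of~\cite{AKT20}, add reinforcement queries $\MF(q,u)$ for $q\in C_{p,u}$, stratify by flow-value ranges $[k,2k)$ restricted to terminals $T_k=\{v:\alpha(v)\ge k\}$, and run $\tO(k)$ repetitions per stage. But you explicitly flag the Las Vegas guarantee as the ``main obstacle'' you have not resolved, and that guarantee is precisely the heart of the paper's argument, so what you have is a plan, not a proof.

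The piece you are missing is the \emph{overused-pivot} mechanism. In the paper's Las Vegas algorithm (step~\ref{step:algLV:4}), a pivot that has already been selected $8\log n$ times in previous repetitions is declared failed and re-drawn; this forces the multiset of pivots encountered by any fixed pair $(u,v)$ to contain $\Omega(k\log^2 n)$ \emph{distinct} nodes. Combined with $|C_{u,v}|<2k$, this deterministically bounds the number of ``invalid'' repetitions (those in which the separating pivot lies inside $C_{u,v}$) by $16k\log n$ per depth, hence $O(k\log^2 n)$ total, leaving $\geq 20k\log^3 n$ good repetitions. That counting is what makes cases~\ref{case:B}, \ref{case:C}, \ref{case:proper_separation} exhaustive and gives certainty rather than high probability. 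Without a cap on pivot reuse, nothing stops the adversarial event that the same (bad) pivot is resampled every repetition, and your three-way case split has no forcing argument behind it. Relatedly, your proposal triggers the direct $\MF$ query when counter (a) — the number of repetitions yielding a valid min-cut — falls short; but (a) is unobservable (the algorithm cannot recognize validity without already knowing $C_{u,v}$). The paper triggers on the \emph{observable} events (b) and (c) hitting the threshold $k$, and then argues by elimination that if neither triggers, some good repetition exists. You would need to invert your trigger condition accordingly.

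Two smaller gaps: the $\tO(\alpha(V)^{3/2})$ space bound is not a consequence of ``a balanced-depth AVL index''; it comes from taking $\min\{(\alpha(V)/k)^2,\alpha(V)\}\cdot 2k$ per stage (bounding the number of stored separators two ways and their sizes by $2k$) plus the observation that for $k\gtrsim\sqrt{\alpha(V)}$ all terminal pairs are queried directly, so the recursion only needs to run for $k\lesssim\sqrt{\alpha(V)}$. And for the size-$\tO(\alpha(V))$ structure of Part~\ref{DS:without}, you simply ``drop the separator contents,'' but then the query cannot distinguish $v\in S_{p,f(u)}$ from $v\in C_{p,f(u)}$; the paper handles this by retaining the list of directly-queried pairs and answering by taking a majority over $\Theta(k\log^3 n)$ repetition answers, which is what yields query time $\tO(\MF(u,v))=\tO(k)$ rather than $\tO(1)$. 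Your proposal does not explain where that $\tO(\MF(u,v))$ query time comes from.
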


For simplicity, we start with our Monte Carlo version of our algorithm, and later we discuss our Las Vegas algorithm.

\subsection{Our Reinforced Tree-Like Data Structure: A Monte Carlo Version}
\label{Algorithm:Monte_Carlo}

In the remainder of this section, we demonstrate how to construct a data structure $\mathcal{D}$ that employs a tree structure to efficiently answer \MF queries for an input graph $G=(V,E)$. For a more conceptual grasp of the framework, we recommend the reader to become familiar with the original approximation algorithm designed for the edge-capacities case~\cite{AKT20}.

\paragraph*{Preprocessing}
At a high level, the basic framework involves recursion. We begin by partitioning an instance $V'\subseteq V$ using an \emph{expansion} operation into a few sets $S_i \subseteq V'$, on which we will apply the operation recursively until they have a size of 1. 
This partition will (almost) satisfy property (**) that was discussed in Section~\ref{it:reinforced_starproperty}.
To help us keep track of these partitions, we maintain a recursion-tree $T$. Each node $t_{V'}$ in the tree corresponds to an expansion operation, and stores $V'$ along with some auxiliary information, such as a mapping from each node $v \in V'$ to a cut $(S_{p,f(v)}, C_{p,f(v)}, S_{f(v),p})$, as well as to other cuts computed directly for it.
To perform a query on a pair $u,v\in V$ we first check whether $\MF(u,v)$ was computed in one of the direct queries throughout the algorithm run. If not, we immediately gain information about the structure of the minimum vertex $uv$-cut. Then, we search for the recursion-node in $T$ that separated $u$ and $v$, i.e. the last $V'$ that contains both $u$ and $v$, and output the smaller of (at most) two cuts stored in that node.

We will now shift our focus to the recursion tree and later prove that its depth is bounded by $O(\log{n})$. At each depth of the tree, expansion operations are executed on disjoint subsets $V'$. This is crucial for bounding the total number of \MF queries involved in the expansion operations at that depth, which we will show is at most $\tO(\alpha(V))$.
More precisely, we run this algorithm $O(\log n)$ times on $V'=V$, thus constructing $O(\log n)$ recursion trees and demonstrate that with high probability, for every $u,v$ query the correct answer can be obtained from at least one of the recursion trees.

The following is the expansion operation on a subset $V' \subseteq V$. If $V'=V$ (i.e., in the beginning) then repeat $4\log n$ times.
\begin{enumerate}
\item\label{step:alg:1} let $V'_H:=\{u\in V':\alpha(u)\geq|V'|/20\log_{4/3} n\}$, and compute directly $\MF(u,v)$ for every pair of nodes $u,v\in V'_H$.
\item\label{step:alg:2} pick a pivot node $p\in V'$ uniformly at random. 

\item\label{step:alg:3} for every node $u\in V'\setminus \{p\}$ perform a query $\MF(p,u)$ to get a vertex-cut $(S_{p,u}, C_{p,u}, S_{u,p})$ where $p \in S_{p,u}$, $u \in S_{u,p}$, and the vertex-cut separating $u$ and $v$ (i.e. a $uv$-separator) is $C_{p,u}$.
Then, compute the intersection with $V'$ of the side of $u$ in the cut, that is $S_{u,p}\cap V'$, and denote this set by $S'_{u,p}$, and similarly define $S'_{p,u}$ and $C'_{p,u}$.

\item\label{step:alg:4}
the following is a \emph{reassignment} process, which is designed for our case of vertex-cuts. We aim to satisfy three main criteria when reassigning nodes to cuts.

\begin{enumerate}
\item we may only assign a node $v$ to a cut $(S_{p,u}, C_{p,u}, S_{u,p})$ with the same value as $\val(C_{p,v})$, which is an optimal cut for $v$. This is necessary to eventually output optimal cuts.

\item we prioritize assigning $v$ to an optimal $pu$-cut $(S_{p,u}, C_{p,u}, S_{u,p})$ that separates $v$ from $p$ and has a \emph{small cardinality} $|S'_{u,p}|$. This ensures that the sets significantly decrease in size with each recursive step and thus upper bounds the total recursion depth by $O(\log{n})$.

\item we can only assign two nodes $u,v$ to different sets if we have evidence for doing so in the form of a (likely optimal) vertex-cut $C$ that separates one but not the other from $p$, and thus separates them.

\end{enumerate}
Satisfying all of the above criteria at the same time requires the following process. We define a reassignment function $f:V'\rightarrow V'\cup \{\bot\}$ such that for every node $u\in V'\setminus \{p\}$ with cut $(S_{p,u}, C_{p,u}, S_{u,p})$, we reassign $u$ to $v$, denoted $f(u)=v$ with the cut $(S_{p,f(u)}, C_{p,f(u)}, S_{f(u),p})$ as follows.
Denote by $V_{small}, V'_{small}$ two initially identical sets, each containing all nodes $u$ such that $\card{S'_{u,p}}\leq n'/2$, and denote by $V_{big},V'_{big}$ two sets, both are initially equal to $V'\setminus V_{small}$.

Sort $V_{small}$ by $\MF(p,u)$, and for all $u\in V_{small}$ from low $\MF(p,u)$ to high and for every node $v\in S'_{u,p}\cap V'_{small}$, set $f(v)=u$ and then remove $v$ from $V'_{small}$. 
Furthermore, for every node $v\in S'_{u,p}\cap V'_{big}$, if $\MF(p,u) = \MF(p,v)$ then set $f(v)=u$ and then remove $v$ from $V'_{big}$.
Finally, set $f(v)=\bot$ for every node $v$ for which $f$ was not assigned a value (including $p$). 

To get the partition, let $IM(f)$ be the image of $f$ (excluding $\bot$) and for each ${i\in IM(f)}$ let $f^{-1}(i)$ be the set of all nodes $u$ that were reassigned by $f$ to the cut $(S_{p,f(i)}, C_{p,f(i)}, S_{f(i),p})$. 
Notice that the nodes in $V'_{big}$, which include $p$, were not assigned to any set. Thus, we get the partition of $V'$ into $V'_{big}$ and each set in $\{f^{-1}(i)\}_{i\in IM(f)}$. The latter sets may satisfy property (**), but $V'_{big}$ may not; in particular, (**).\ref{property:3} could be violated because $V'_{big}$ does not correspond to a minimum cut, and therefore it will be handled separately shortly.

\item\label{step:alg:5} if $|V_{small}|<n/4$ then $p$ is a failed pivot. In this situation, re-start the expansion operation at step~\ref{step:alg:1} and continue choosing new pivots until $|V_{small}|\geq n/4$, or until the number of repetitions is more than $4\log n$, in which case output ``failure''. We show that the latter occurs with low probability.

\item\label{step:alg:6} for every $u\in V'\setminus \{p\}$ and $q\in C_{p,f(u)}$, query directly $\MF(q,u)$. 

\item\label{step:alg:7} finally, we recursively compute the expansion operation on each of the sets of the partition, while their sizes are bigger than $1$.
Let us describe what we store at the recursion node $t_{V'}$ corresponding to the just-completed expansion operation on $V'$ with pivot $p$.
We store $(|V'|-1)+\alpha(V')$ cuts in $t_{V'}$, including: 
\begin{enumerate}
\item for each $u,v\in V'_H$ we store $C_{u,v}$.
\item for each node $u_{big} \in V'_{big}\setminus \{p\}$ we store the cut $(S_{p,u_{big}}, C_{p,u_{big}}, S_{u_{big},p})$.
\item for each node $u$ in one of the other sets of the partition $\{f^{-1}(i)\}_{i\in IM(f)}$ we store the cut it was reassigned to $(S_{p,f(u)}, C_{p,f(u)}, S_{f(u),p})$.
\item for each $u\in V'$ and $q\in C_{p,f(u)}$ we store $C_{q,u}$. 
\end{enumerate}
We also keep an array of pointers from each node to its corresponding cuts and also their values, and we call this array $A$.
We also store, for each node of $V'$, the name of the set in the partition that it belongs to, in an array $B$.
\end{enumerate}

\paragraph*{Queries}
To answer a query for a pair of nodes $u$ and $v$, we first check if $\MF(u,v)$ was computed directly for them. If so, we output the corresponding separator. Otherwise, we do the following for each recursion tree $T$ and pick the answer with minimal value: we go to the recursion depth that separated $u$ and $v$, which corresponds to some node $t_{V'}$ in $T$, and output a pointer to one of the two corresponding cuts $(S_{p,f(u)}, C_{p,f(u)}, S_{f(u),p})$ or $(S_{p,f(v)}, C_{p,f(v)}, S_{f(v),p})$ -- the one with smaller capacity among the (at most) two that separate $u$ and $v$. We prove that at least one of the two cuts separates them.
To find out which recursive node separates $u$ and $v$, we start from the root and continue down, guided by the array $B$, to the nodes that contain both $u$ and $v$ until we reach a node that separates them. The query time depends on the depth of the recursion, which we will show later to be logarithmic for all trees.
\paragraph*{Correctness}
The claim below proves that the cut returned for each pair $u,v$ is optimal with high probability. The main idea is that either both $\alpha(u)$ and $\alpha(v)$ are high with respect to $n':=|V'|$ in the instance $V'$ that separates them, in which case the algorithm computes $\MF(u,v)$ directly, or at least one of $\alpha(u)$ and $\alpha(v)$ is small, in which case their separator is small, and $p$ is unlikely to be in it. Then, assume without loss of generality that $\MF(p,u)\leq \MF(p,v)$. Consequently, either $C_{p,u}$ is optimal for $\MF(u,v)$, or $v\in C_{p,u}$, and we computed $\MF(u,v)$ directly. More formally,

\begin{lemma}\label{alg1:correctness}
With high probability, the separator returned for any pair of nodes $u,v$ is a minimum $uv$-separator.
\end{lemma}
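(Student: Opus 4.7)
The plan is to show that in any single recursion tree $T$, for any fixed pair $u, v \in V$, the algorithm returns an optimal $uv$-separator with probability $1 - O(1/\log n)$; amplification over the $4\log n$ independent trees (the ``repeat $4\log n$ times'' at the root in step~\ref{step:alg:1}) followed by a union bound over all $O(n^2)$ pairs then delivers the claimed high-probability correctness.

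First I would assume without loss of generality that $\alpha(u) \leq \alpha(v)$, so $|C_{u,v}| \leq \alpha(u)$. Let $V'_u$ be the topmost recursion node on the root-to-$u$ path in $T$ satisfying $|V'_u| \leq 20\log_{4/3} n \cdot \alpha(u)$, and let $P$ denote the path from the root down to $V'_u$. The enforcement of $|V_{small}| \geq n'/4$ in step~\ref{step:alg:5}, together with the $|S'_{w,p}| \leq n'/2$ bound for $w \in V_{small}$, guarantees that each child block has size at most $(3/4)|V'|$, so instance sizes along $P$ form a geometric sequence with ratio at most $3/4$ ending at $|V'_u|$. The main probabilistic step is then to analyze the \emph{good event} $\mathcal{E}$: no pivot chosen along $P$ lies in $C_{u,v} \cup \{u, v\}$. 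At each node $t_{V'}$ on $P$ the pivot is uniform on $V'$, so it hits $C_{u,v} \cup \{u,v\}$ with probability at most $(\alpha(u)+2)/|V'|$. A union bound along $P$, combined with the geometric decrease of $|V'|$, telescopes the sum to $O(\alpha(u)/|V'_u|) = O(1/\log n)$; hence $\Pr[\mathcal{E}] \geq 1 - O(1/\log n)$.

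Conditioning on $\mathcal{E}$, I split into two cases. In \textbf{Case A}, $u$ and $v$ stay together all the way down to $V'_u$; then $\alpha(u), \alpha(v) \geq \alpha(u) \geq |V'_u|/(20 \log_{4/3} n)$, so both lie in $V'_H$ at $V'_u$, and $\MF(u,v)$ is computed directly in step~\ref{step:alg:1}. In \textbf{Case B}, $u, v$ separate at some node $t_{V^*}$ on $P$ under a pivot $p \notin C_{u,v} \cup \{u, v\}$; WLOG $\MF(p,u) \leq \MF(p,v)$. Since $p \notin C_{u,v}$, the minimum $uv$-separator $(S_{u,v}, C_{u,v}, S_{v,u})$ places $p$ on one side and is therefore also a $pu$- or $pv$-vertex-cut, yielding $\MF(u,v) \geq \min\{\MF(p,u), \MF(p,v)\} = \MF(p,u)$. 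Because step~\ref{step:alg:4} reassigns by processing $V_{small}$ in increasing order of $\MF(p,\cdot)$, we have $\MF(p, f(u)) \leq \MF(p, u)$. Since $u$ and $v$ land in distinct blocks of the partition, $v \notin S'_{f(u),p}$: either $v \in C_{p, f(u)}$, so $\MF(u,v)$ is obtained via the direct query in step~\ref{step:alg:6}, or $v \in S_{p, f(u)}$, in which case the stored cut $(S_{p, f(u)}, C_{p, f(u)}, S_{f(u), p})$ is a $uv$-separator of capacity $\MF(p,f(u)) \leq \MF(p,u) \leq \MF(u,v)$, hence optimal.

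The main obstacle I expect is the careful accounting in Case B around the reassignment function $f$ and the block $V'_{big}$, whose nodes may have $f = \bot$ and are assigned the direct cut $(S_{p,u_{big}}, C_{p,u_{big}}, S_{u_{big},p})$. One must verify in all sub-cases---$u \in V'_{big}$ versus $u$ processed inside $V_{small}$, and symmetrically for $v$---that the stored cut at $t_{V^*}$ genuinely separates $u$ from $v$ in $G$ whenever step~\ref{step:alg:6} does not already resolve the pair, and that the $\MF$-ordering used by the reassignment indeed yields a stored cut of capacity no larger than $\MF(p,u)$. The mild conditioning introduced by the pivot-failure retry in step~\ref{step:alg:5} must also be shown to cost only a constant factor in the per-level probability bound, which should follow from the fact that the retry succeeds with constant probability per attempt.
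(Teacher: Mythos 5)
Your proposal follows the paper's approach (good-event on pivots avoiding $C_{u,v}$ keyed to the node of smaller $\alpha$, amplification over $4\log n$ trees plus a union bound, then a case split on direct-query vs.\ separation), but two genuine gaps remain.

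First, the probability estimate as stated does not hold. You let $P$ run from the root down to $V'_u$ \emph{inclusive} and claim the telescoped sum is $O(\alpha(u)/|V'_u|) = O(1/\log n)$. But the algorithm gives no lower bound on the size of a child instance, so $|V'_u|$ can be as small as $2$ while $\alpha(u)$ is large, making the last summand close to $1$. The fix is the one the paper uses: sum only over the strict ancestors of $V'_u$, each of size $> 20\log_{4/3}n\cdot\alpha(u)$ by the definition of $V'_u$, so the geometric sum is $O(1/\log n)$; the pivot at $V'_u$ itself is irrelevant because Case A (direct query on $V'_H$) holds there regardless. You also need to account for the conditioning from step~5's retries not by ``constant success probability per attempt'' but by the fact that the chosen pivot is uniform over the at least $|V'|/2$ \emph{successful} pivots (Corollary~\ref{Corollary:Tournament}), which contributes the factor-$2$ loss.

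Second, the $V'_{big}$ case is a genuine missing piece, not a routine check. Your Case~B reasons entirely through $f(u)$, but if $u$ ends up in $V'_{big}$ then $f(u)=\bot$ and no reassigned cut is stored for it --- only the direct cut $(S_{p,u},C_{p,u},S_{u,p})$. The paper's proof splits into the three sub-cases $\MF(p,u_{big})>,=,<\MF(p,f(u))$: in the first two the cut stored for the $V_{small}$ endpoint is returned and one must argue $u_{big}\notin S'_{f(u),p}$ (the equality sub-case requires an argument about the sorted reassignment order and how $V'_{big}$ nodes are only claimed on exact ties); in the third, the cut stored directly for $u_{big}$ is returned, and optimality needs a separate argument using~\eqref{eq_p} and~\eqref{eq_cross}. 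Similarly, your one-line claim ``distinct blocks $\Rightarrow v\notin S'_{f(u),p}$'' hides a sub-case split between $\MF(p,u)<\MF(p,v)$ and $\MF(p,u)=\MF(p,v)$ (the latter needs the observation that two $V_{small}$ nodes in the same side would have been grouped together). Without these, the proof is not complete.
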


\begin{proof}
Let $T$ be a recursion tree, and let $V' \subseteq V$ be the instance where $u$ and $v$ belong to $V'$ but were sent to different sets during the expansion operation. We will use the following claim to make the assumption that the pivot $p$ in $V'$ satisfies $p\notin C_{u,v}$.
\begin{claim}\label{claim:p}
With high probability, for at least one recursion tree $T$ either $\MF(u,v)$ is computed in step~\ref{step:alg:1}, or $p\notin C_{u,v}$ for the instance separating $u$ and $v$.
\end{claim}
\begin{proof}
For a node $r\in V$, let $V'_r$ be an instance in $T$ that contains $r$, and is the first instance satisfying $\alpha(r) \geq |V'_r|/(20\log_{4/3} n)$. Let $V'_{u,v}$ be $V'_{w}$ for $w=\arg_{u,v}\min\{\alpha(u),\alpha(v)\}$, and let $\mathcal{X}_{u,v}$ be the event that $p\notin C_{u,v}$ for all instances $V'$ along the path from the root to the parent of $V'_{u,v}$ in $T$. 
Note that $\mathcal{X}_{u,v}$ trivially occurs for $V'_{u,v}=V$.
The probability of $\mathcal{X}_{u,v}$ is at least $(1-1/(10\log_{4/3} n))^{\log_{4/3}n}\geq 1/2$, as follows.
A uniformly random node in one such $V'$ does not belong to $C_{u,v}$ with probability $\frac{|V'|-|C_{u,v}|}{|V'|}\geq (1-1/20\log_{4/3} n)$, then recall that only successful pivots could be chosen, and by Corollary \ref{Corollary:Tournament} below, their number is at least $|V'|/2$. Lastly, we raise to the power of $\log_{4/3} n$, as we will show later that the depth of $T$ is at most this value.
Now, if $\mathcal{X}_{u,v}$ occurs in $T$ then either $u$ and $v$ both survived the expansions and belong to $V'_{u,v}$ and then they are queried directly in step~\ref{step:alg:1}, or $z=\arg_{u,v}\max\{\alpha(u),\alpha(v)\}$ was partitioned away in a higher instance $V'$ in $T$, in which case $p\notin C_{u,v}$ for $V'$.

The probability that $\mathcal{X}_{u,v}$ does not occur in any tree is at most $(1/2)^{4\log n}\leq 1/n^4$. By applying a union bound over all $\leq n^2$ node pairs, we can conclude that with high probability, $\mathcal{X}_{u,v}$ occurs in at least one of the recursion trees for every pair of nodes $u,v$. This completes the proof of Claim~\ref{claim:p}.
\end{proof}

Thus, henceforth we assume that
\begin{equation}\label{eq_p}
p\notin C_{u,v}.
\end{equation}

An additional assumption we make is that for any pair $u,v$ considered,
\begin{equation}\label{eq_cross}
u\notin C_{p,v}\wedge v\notin C_{p,u}.
\end{equation}
This can be justified since otherwise the minimum $uv$-cut is computed directly in step~\ref{step:alg:6} of the algorithm. 
Our final assumption is that without loss of generality,
\begin{equation}\label{eq_u_v}
\MF(p,u)\leq \MF(p,v).
\end{equation}
Next, there are a few cases, depending on whether any of $u,v$ is in $V'_{big}$ or not.

\begin{enumerate}
\item
First, consider the case when none of $u,v$ are in $V'_{big}$. There are two sub-cases, depending on whether the values of the two cuts are equal or not.
\begin{enumerate}
\item
If
$$
\MF(p,u)<\MF(p,v),
$$
then 
$$\MF(u,v)=\MF(p,u).$$ 
Observe that $v\notin S'_{f(u),p}$ because otherwise, we would have a smaller cut than the minimum cut found for $u$. Together with~\ref{eq_cross} we conclude that $v\in S'_{p,f(u)}$, and so $\MF(u,v)\leq\MF(p,u)$.
Furthermore, if it was the case that 
$\MFV(u,v)<\MFV(p,u)$
then by~\ref{eq_p} it would be that $p\in S'_{u,v}$ or $p\in S'_{v,u}$ which would refute the minimum cut value found for the pairs $p,v$ or $p,u$, respectively.
Indeed, $(S_{p,f(u)}, C_{p,f(u)}, S_{f(u),p})$ is the cut returned.
\item
Otherwise,
$$
\MFV(p,u) = \MFV(p,v).
$$
First, we see that $v\notin S'_{f(u),p}$. Otherwise, as $u,v\in V_{small}$ they must have been sent to the same recursion instance, contradicting our assumption on the expansion operation on $V'$. Like before, it implies that $\MF(u,v)=\MF(p,u)$, and so for this pivot the algorithm can output $(S_{p,f(u)}, C_{p,f(u)}, S_{f(u),p})$ as the optimal cut for $u,v$, as required.
\end{enumerate}

\item Second, we consider the case when one of the nodes is in $V'_{big}$ and the other is in $V'_{small}$, and we separate into three sub-cases according to their Max-Flow value to $p$. 
More specifically, let $u_{big}\in V'_{big}$ and $u\notin V'_{big}$ be nodes where the minimum cut $(S_{p,u_{big}}, C_{p,u_{big}}, S_{u_{big},p})$ is the cut corresponding to $u_{big}$.

\begin{enumerate}
\item 
If
$$
\MF(p,u_{big})>\MF(p,f(u)),
$$ 
then $u_{big}\notin S'_{f(u),p}$ as otherwise $u_{big}$ would have been assigned to $f(u)$, as $f(u)$ is in $V_{small}$.
Together with~\ref{eq_cross}, we have that 
$u_{big}\in S'_{p,f(u)}$.
We conclude that $(S_{p,f(u)},C_{p,f(u)}S_{f(u),p})$ is a minimum $u_{big}v$-cut.
\item 
If
$$
\MF(p,u_{big})=\MF(p,f(u)),
$$
then it must be that
$u_{big}\notin S'_{f(u),p},$
since otherwise, as $u_{big}\in V'_{big}$ and when the algorithm examined $f(u)$ it did not set $f(u_{big}):= f(u)$, it must have been the case for a node $x$ that was either $f(u)$ or before $f(u)$ in the order (i.e. such that $\val(C_{p,x})\leq \val(C_{p,f(u)})$) that $u_{big}$ was tested for the first time, with $\val(C_{p,x})> \val(C_{p,u_{big}})$, and so 
$\val(C_{p,f(u)})> \val(C_{p,u_{big}})$.
However, by our assumption it holds that $\val(C_{p,f(u)})= \val(C_{p,u_{big}})$, in contradiction.
Together with~\ref{eq_cross} we get $u_{big}\in S'_{p,f(v)}$, and like before, the returned cut $(S_{p,f(u)}, C_{p,f(u)}, S_{f(u),p})$ is optimal for $u_{big},u$.
\item
Finally, if 
$$
\MF(p,u_{big}) < \MF(p,f(u)),
$$ 
then $(S_{p,u_{big}}, C_{p,u_{big}}, S_{u_{big},p})$ separates $u_{big}$ and $u$, and with~\ref{eq_cross} and~\ref{eq_p} providing an optimal minimum cut, concluding the claim.
\end{enumerate}
\end{enumerate}
\end{proof}

\paragraph*{Running time}
Here, we prove the upper bounds on the preprocessing time, by proving that with high probability, the algorithm terminates after $\tO(\alpha(V))$ calls to \MF and $\tO(\alpha(V)n)$ time outside of these calls.

At a high level, our aim is to bound the size of each of the sets in the partition in an expansion operation by $3|V'|/4$, which implies a bound of $O(\log n)$ on the height of the recursion tree. This bound is immediate for the sets $\{f^{-1}(i)\}_i$ since they are subsets of cuts $S_{u,p}$ of nodes $u$ in $V_{small}$ and, by definition, satisfy that $|S'_{u,p}| \leq n'/2$. Therefore, we only need to worry about $V'_{big}$.
However, any node $u$ that is initially in $V_{small}$ will end up reassigned to one of the sets $\{f^{-1}(i)\}_i$ and not to $V'_{big}$. Thus, it suffices to argue that $|V_{small}|\geq \card{V'}/4$. To this end, we cite a lemma from~\cite{AKT20} which shows that for a randomly chosen $p$, at least $1/4$ of the nodes $u \in V'$ satisfy that the side of $u$ is smaller than the side of $p$, and so they will end up in $V_{small}$. We provide a more formal proof of this below, starting with a general lemma about tournaments.

\begin{lemma}[Lemma $2.4$ in~\cite{AKT20}]\label{Lemma:Tournament}
Let $Y=(V_Y,E_Y)$ be a directed graph on $n$ nodes and $m$ edges that contains a tournament on $V_Y$. Then $Y$ contains at least $n/2$ nodes with an out-degree at least $n/4$.
\end{lemma}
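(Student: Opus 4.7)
The plan is to prove the statement by contradiction via a simple double-counting/averaging argument restricted to the sub-tournament induced on the low-out-degree vertices. First I would reduce from $Y$ to the tournament $T \subseteq Y$ it contains: since every edge of $T$ is also an edge of $Y$, for every $v \in V_Y$ we have $d^{out}_Y(v) \ge d^{out}_T(v)$. Hence it suffices to show that $T$ alone has at least $n/2$ vertices of out-degree at least $n/4$.

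Next, assume for contradiction that $|H| < n/2$, where $H := \{v \in V_Y : d^{out}_T(v) \ge n/4\}$, and let $L := V_Y \setminus H$, so $|L| > n/2$. The key observation is to restrict attention to the induced sub-tournament $T[L]$. Being itself a tournament on $|L|$ vertices, $T[L]$ satisfies
$$\sum_{v \in L} d^{out}_{T[L]}(v) \;=\; \binom{|L|}{2},$$
so the average out-degree in $T[L]$ equals $(|L|-1)/2$. Since $|L| > n/2$, this average is at least $n/4$ (up to an integer-rounding argument handled below), so by averaging there exists $v \in L$ with $d^{out}_{T[L]}(v) \ge n/4$. Because $T[L] \subseteq T$, we get $d^{out}_T(v) \ge d^{out}_{T[L]}(v) \ge n/4$, contradicting $v \in L$; thus $|H| \ge n/2$, and the reduction in the first step lifts the statement back to $Y$.

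The main obstacle is minor and technical: ensuring that the averaging bound $(|L|-1)/2 \ge n/4$ is actually attained by some integer out-degree in $L$. For even $n$, the inequality $|L| > n/2$ forces $|L| \ge n/2+1$, so $(|L|-1)/2 \ge n/4$ immediately and the argument is tight. For odd $n$, a brief case analysis on parity suffices: interpreting the thresholds in the lemma with the appropriate ceilings (since out-degrees are integers) and using that the maximum of $|L|$ integer values is at least the ceiling of their average, $\lceil (|L|-1)/2 \rceil \ge \lceil n/4 \rceil$, so the contradiction still goes through. This parity bookkeeping is the only part of the proof that needs care; the conceptual content is entirely captured by the averaging step on $T[L]$.
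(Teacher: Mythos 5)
The paper does not prove this lemma itself — it cites it directly as Lemma 2.4 from [AKT20] — so there is no in-paper proof to compare against. Evaluating your argument on its own merits: the core idea is sound and elegant. Reducing to a tournament $T \subseteq Y$, assuming $|H| < n/2$, passing to the induced sub-tournament $T[L]$ on $L := V_Y \setminus H$, and applying the averaging identity $\sum_{v\in L} d^{\mathrm{out}}_{T[L]}(v) = \binom{|L|}{2}$ is exactly the right move, and it does yield a vertex of $L$ whose out-degree in $T$ (hence in $Y$) is at least the average $(|L|-1)/2$.

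However, your parity bookkeeping is wrong in a way that matters if the lemma is read literally. You claim $\lceil (|L|-1)/2 \rceil \ge \lceil n/4 \rceil$; for $n=5$ and $|L|=3$ this reads $1 \ge 2$, which is false. In fact the lemma with a literal reading of $n/4$ fails for $n=5$: the tournament with out-degree sequence $(4,3,1,1,1)$ (vertex $a$ beats all, $b$ beats $c,d,e$, and $c,d,e$ form a $3$-cycle) has only two vertices of out-degree $\ge 5/4$, short of $n/2 = 2.5$. The correct resolution is the opposite rounding direction: interpret the out-degree threshold as $\lfloor n/4\rfloor$. Then $|L| > n/2$ forces $|L| \ge \lfloor n/2\rfloor + 1$, so the average out-degree in $T[L]$ is at least $\lfloor n/2\rfloor / 2 \ge \lfloor n/4\rfloor$, and since the maximum is an integer at least the average, some vertex of $L$ has out-degree $\ge \lfloor n/4\rfloor$, giving the desired contradiction for every $n$. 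Since this lemma is only ever used up to constant factors (to guarantee that a random pivot is ``good'' with probability $\Omega(1)$), the slack is harmless, but you should correct the ceiling to a floor and drop the claim that the ceiling interpretation makes the argument go through.
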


The following is a corollary of Lemma~\ref{Lemma:Tournament} concerning cuts between every pair of nodes, and it is an adaptation of Corollary $2.5$ in~\cite{AKT20} to vertex-cuts.

\begin{corollary}\label{Corollary:Tournament}
Let $F=(V_F,E_F)$ be a graph where each pair of nodes $u,v\in V_F$ is associated with a cut $(S_{u,v}, C_{u,v}, S_{v,u})$ where $u\in S_{u,v}, v\in S_{v,u}$ (possibly more than one pair of nodes are associated with each cut), and let $V'_F\subseteq V_F$. 
Then there exist $\card{V'_F}/2$ nodes $p'$ in $V'_F$ such that at least $\card{V'_F}/4$ of the other nodes $w \in V'_F{\setminus} \{p'\}$ satisfy $\card{S_{p',w}\cap V'_F}\geq\card{ S_{w,p'} \cap V'_F }$. 
\end{corollary}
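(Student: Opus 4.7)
\textbf{Proof proposal for Corollary~\ref{Corollary:Tournament}.}
The plan is to reduce the statement to Lemma~\ref{Lemma:Tournament} by constructing an appropriate tournament on $V'_F$ whose out-edges encode exactly the ``my side is at least as large as yours'' relation. First, I would build an auxiliary directed graph $Y=(V'_F,E_Y)$ as follows: for each unordered pair $\{u,v\} \subseteq V'_F$ with associated cut $(S_{u,v},C_{u,v},S_{v,u})$, place the arc $(u,v)\in E_Y$ if $|S_{u,v}\cap V'_F|\geq |S_{v,u}\cap V'_F|$, and otherwise place the arc $(v,u)$. In case of equality, insert both arcs (or break ties arbitrarily; either works for the application of Lemma~\ref{Lemma:Tournament}, which only requires that a tournament be \emph{contained} in~$Y$). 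By construction, for every pair of distinct nodes in $V'_F$ at least one of the two arcs is present, so $Y$ contains a tournament on $V'_F$.

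Next, I would apply Lemma~\ref{Lemma:Tournament} to $Y$ with $n=|V'_F|$. The lemma guarantees at least $|V'_F|/2$ nodes $p'\in V'_F$ whose out-degree in $Y$ is at least $|V'_F|/4$. Unwinding the definition of $Y$, an arc $(p',w)\in E_Y$ with $w\in V'_F\setminus\{p'\}$ means precisely that
\[
|S_{p',w}\cap V'_F|\ \geq\ |S_{w,p'}\cap V'_F|.
\]
Thus each such $p'$ witnesses at least $|V'_F|/4$ distinct nodes $w\in V'_F\setminus\{p'\}$ with the desired inequality, which is exactly the conclusion of the corollary.

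The main (and only real) subtlety is noting that the statement is about \emph{vertex}-cuts rather than edge-cuts: the associated object now has three parts $(S_{u,v},C_{u,v},S_{v,u})$ rather than two, and the separator $C_{u,v}$ is intentionally excluded from both ``sides.'' Consequently, $|S_{u,v}\cap V'_F|+|S_{v,u}\cap V'_F|$ need not equal $|V'_F|$ (one loses the separator mass and a potential contribution from $\{u,v\}$). This is harmless for our argument since we only compare the two sides to each other, never to a fixed total; the tournament construction is therefore still well-defined, and Lemma~\ref{Lemma:Tournament} applies verbatim. I do not foresee any other obstacle, since the rest is a direct translation of Corollary~$2.5$ in~\cite{AKT20} with ``edge-cut sides'' replaced by ``vertex-cut sides intersected with $V'_F$.''
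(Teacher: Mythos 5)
Your proof is correct and takes essentially the same route as the paper: build a directed helper graph on $V'_F$ whose arcs encode the comparison of $|S_{u,v}\cap V'_F|$ against $|S_{v,u}\cap V'_F|$, observe that this graph contains a tournament on $V'_F$, and invoke Lemma~\ref{Lemma:Tournament}. You are in fact slightly more careful than the paper's one-line proof, which places the arc $u\to v$ only when $|S_{u,v}\cap V'_F| > |S_{v,u}\cap V'_F|$ (a strict inequality under which a tied pair would receive no arc, so the helper graph need not contain a tournament), whereas your use of $\geq$ together with the tie-breaking remark guarantees tournament containment and still yields exactly the non-strict inequality demanded by the corollary's conclusion.
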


\begin{proof}
Let~$H_{F}(V'_F)$ denote the \emph{helper graph} of~$F$ on~$V'_F$, where there is a directed edge from $u\in V'_F$ to $v\in V'_F$ iff $\card{S_{u,v}\cap V'_F}>\card{S_{v,u}\cap V'_F}$.
By Lemma~\ref{Lemma:Tournament}, since $H_{F}(V'_F)$ contains a tournament on~$V'_F$,  Corollary~\ref{Corollary:Tournament} holds.
\end{proof}

Next, apply Corollary~\ref{Corollary:Tournament} on $G$, and let $H=H_G(V')$ be the helper graph of $G$ on $V'$ with the reassigned cuts.
As a result, with probability at least~$1/2$, the pivot $p$ is one of the nodes with out-degree at least $n'/4$.
In that case, when the algorithm partitions~$V'$, it must be that $\max_i \card{f^{-1}(i)}\leq n'/4$, and $\card{f^{-1}(\bot)}\leq 3n'/4$, that is, the largest set created is of size at most~$3n'/4$.
This is because, while the separated parts $S_{u,v}$ and $S_{v,u}$ for each $uv$-cut do not cover the entire node set as the case in~\cite{AKT20}, the condition $\card{S_{p,v}\cap V'_F}\geq\card{S_{v,p}\cap V'_F}$ for adding an edge in the tournament still implies that $\card{S_{f(v),p}\cap V'_F}\leq n'/2$, meaning that $v$ remains in $V_{small}$ after the reassignment process in our setting too.

After making $O(n\log n)$ successful pivot choices, the algorithm terminates with the total depth of the recursion tree being at most $\log_{4/3} n$. Note that in step~\ref{step:alg:5}, the algorithm verifies the choice of $p$ and only proceeds with a successful pivot choice. Therefore, it suffices to bound the running time of the algorithm given only successful pivot choices. The probability of having $4\log n$ consecutive failures in a single instance is bounded by $(1/2)^{4\log n} = 1/n^4$, and by applying the union bound over the total $\tO(n)$ instances in all recursion trees, the probability of having at least one instance that takes more than $4\log n$ attempts until a successful choice of $p$ is bounded by $1/n^2$.

We now turn our attention to bounding the total number of \MF queries. The total number of queries performed for pairs in step~\ref{step:alg:1}, i.e. where both nodes have degrees at least $n'/(20\log_{4/3} n)$,  in a single instance $V'$ is at most:
\begin{align*}
|V'_H|^2 &=
|\{u\in V' : \alpha(u)>|V'|/(20\log_{4/3} n) \}|^2 \leq
\sum_{u\in V'_H}\alpha(u)20\log_{4/3} n \leq 
\alpha(V'_H) \cdot 20\log_{4/3} n.
\end{align*}

The total number of queries performed in steps~\ref{step:alg:3} and~\ref{step:alg:6} in one instance $V'$ is at most
$$
|V'|+\sum_{u\in V'} \alpha(p,u)=|V'|+\alpha(V').
$$
Thus, the total number of queries performed in one depth $i$ is 
$$
\sum_{V'\text{ in depth }i}\left(\alpha(V'_H)\cdot 20\log_{4/3} n + 
|V'|+\alpha(V')\right)\leq \alpha(V)\cdot20\log_{4/3} n
+n+\alpha(V)\leq \tO(\alpha(V)),
$$
and summing over all depths, repetitions, and trees, we get a total number of \MF queries of $t_{pq}(m)=\tO(\alpha(V)).$
Note that most of the time spent outside of \MF calls is in steps~\ref{step:alg:3} and~\ref{step:alg:4}, and it is bounded by $O(n)$ per cut computed, thus a total of $t_{po}(m)=\tO(\alpha(V)n)$, as required.
\paragraph{Space usage}
We aim to bound the total space used by the data structure, focusing on the more demanding steps and beginning with item~\ref{DS:with} of Theorem~\ref{thm:algorithm:main:gen}.
\begin{itemize}
\item
To bound the space occupied by cuts in both step~\ref{step:alg:1} and step~\ref{step:alg:6}, 
we focus on a $[k,2k)$ range of separator sizes each time, incurring an $O(\log n)$ factor in the space bound.
On one hand, at most $(\alpha(V)/k)^2$ node pairs can have separators with size in $[k,2k)$, and on the other hand, as demonstrated earlier, at most $\tO(\alpha(V))$ reinforcement queries are made throughout the algorithm.
This yields a space bound of
$$ 
\tO\left(\min \left\lbrace 
(\alpha(V)/k)^2,\alpha(V) 
\right\rbrace \cdot 2k\right)\leq
\tO(\alpha(V)^{3/2}).$$
\item
For cuts in step~\ref{step:alg:3}, the occupied space is at most
$\tO\left(\sum_{V'} \alpha(V')+|V'|\right)\leq  \tO(\alpha(V)).$
\end{itemize}
Altogether, the bound is $\tO(\alpha(V)^{3/2})$, as required.

In order to obtain the data structure~\ref{DS:without} of Theorem~\ref{thm:algorithm:main:gen}, we do not keep the separators from steps~\ref{step:alg:1} and~\ref{step:alg:6} as they were only used as witnesses. 
The total space bound is thus at most $\tO(\alpha(V))$, as required.

\subsection{Our Reinforced Tree-Like Data Structure: A Las Vegas Version}\label{Algorithm:Las_Vegas}

In this section, we present a Las Vegas version of our algorithm, focusing on the differences from the Monte Carlo version presented above.

\paragraph*{Preprocessing}
At a high level, the basic framework is similar and involves recursion. However, here we have $O(\log n)$ stages, and in each stage we focus on \MF values that are in the range $[k,2k)$. In each stage the number of terminals, i.e., nodes $u$ such that $\alpha(u)\geq k$, is at most $\alpha(V)/k$. At the same time, the size of the optimal separators between node pairs in this stage is smaller than $2k$. 
Consequently, if such pairs are in instances of size $\geq 60k\log^3 n$ then many pivots can be chosen to separate them properly, which motivates making $\tO(k)$ repetitions of the recursion tree (we may use the terms repetitions and recursion trees interchangeably) and arguing that a high enough number of pivots will be chosen throughout the repetitions. 
This will guarantee that each pair $u,v$ has at least one repetition where the corresponding pivot does not belong to $C_{uv}$.
The resulting partitions after running the $\tO(k)$ repetitions will (almost) satisfy property (***) that was discussed in Section~\ref{it:reinforced_starproperty_}.
Indeed, we demonstrate that when the algorithm terimnates, with \textit{certainty} for every $u,v$ query the correct answer can be obtained from one of the recursion trees, or that we have queried the pair $u,v$ directly.

Let us focus on a range $[k,2k)$, and set $V_{\geq k}=\{u\in V: \alpha(u)\geq k\}$ as the set of terminals for the current stage.
The following is the expansion operation on a subset of terminals $V' \subseteq V_{\geq k}$.
If $V'=V_{\geq k}$ (i.e., in the beginning of the current stage) then repeat $60k\log^3 n$ times.
\begin{enumerate}
\item\label{step:algLV:1} pick a pivot node $p\in V'$ uniformly at random. 
\item\label{step:algLV:2} for every node $u\in V'\setminus \{p\}$ perform a query $\MF(p,u)$ to get a vertex-cut $(S_{p,u}, C_{p,u}, S_{u,p})$ where $p \in S_{p,u}$, $u \in S_{u,p}$, and the vertex-cut separating $u$ and $v$ (i.e. a $uv$-separator) is $C_{p,u}$.
Then, compute the intersection with $V'$ of the side of $u$ in the cut, that is $S_{u,p}\cap V'$, and denote this set by ${S'}_{u,p}$, and similarly define ${S'}_{p,u}$ and $C'_{p,u}$.
\item\label{step:algLV:3}
perform a reassignment process $f:V'\rightarrow V'\cup \{\bot\}$ on the nodes $V'$ and the corresponding cuts, similar to the Monte Carlo algorithm above, and set $n'=|V'|$.
\item\label{step:algLV:4} 
if either 
\begin{enumerate}
\item\label{case:succ_pivot_old} $|V_{small}|<n'/4$, or
\item\label{case:succ_pivot_new} $p$ is an overused pivot, i.e., $p$ was selected as a pivot in $8\log n$ previous repetitions, then
\end{enumerate}  
$p$ is a failed pivot. In this situation, re-start the expansion operation at step~\ref{step:alg:1} and continue choosing new pivots until both $|V_{small}|\geq n'/4$ and $p$ not an overused pivot.
We will show that with high probability, the number of pivot choices is $O(\log n)$.
\item\label{step:algLV:5} recursively compute the expansion operation on each set within the resulting partition whose size is larger than $60k\log^3 n$.
Let us describe what we store at the recursion node $t_{V'}$ corresponding to the just-completed expansion operation on $V'$ with pivot $p$.
We store $(|V'|-1)+\alpha(V')$ cuts in $t_{V'_{\geq k}}$, including: 
\begin{enumerate}
\item for each $u,v\in V'$ that we computed directly, we store $C_{u,v}$.
\item for each node $u_{big} \in V'_{big}\setminus \{p\}$ we store the cut $(S_{p,u_{big}}, C_{p,u_{big}}, S_{u_{big},p})$.
\item for each node $u$ in one of the other sets of the partition $\{f^{-1}(i)\}_{i\in IM(f)}$ we store the cut it was reassigned to $(S_{p,f(u)}, C_{p,f(u)}, S_{f(u),p})$.
\end{enumerate}
We also keep an array of pointers from each node to its corresponding cuts and also their values, and we call this array $A$.
We also store, for each node of $V'$, the name of the set in the partition that it belongs to, in an array $B$.
\item\label{step:algLV:6}
finally, for every pair of nodes $u,v$ such that 
\begin{enumerate}
\item\label{step:algLV:6:together}
$u\in S_{f(v),p}$ (i.e., $u$ and $v$ are not separated) in the final instances of at least $k$ repetitions, or
\item\label{step:algLV:6:in_separator}
$u\in C_{p,f(v)}$ or $v\in C_{p,f(u)}$ in at least $k$ repetitions,
\end{enumerate}
query $\MF(u,v)$ directly.
\end{enumerate}

\paragraph{Queries}
For data structure~\ref{DS:with} of Theorem~\ref{thm:algorithm:main:gen}, to answer a query for a pair of nodes $u$ and $v$, we operate similarly to the Monte Carlo algorithm; first check if $\MF(u,v)$ was computed directly. If so, we output the corresponding separator, and otherwise we pick the answer with minimal value in the recursion trees that separated $u$ and $v$ in all relevant stages.
The number of terminals in a $[k,2k)$ stage is at most $t_k=\min\{\alpha(V)/k,n\}$, and so the total time to preprocess all node pairs in all stages and repetitions is $\tO(\max_k\left\lbrace t_k^2\cdot k\right\rbrace )\leq \tO(\alpha(V)n)$, bringing the query time to $\tO(1)$, as required.
On the bounds in data structure~\ref{DS:without} of Theorem~\ref{thm:algorithm:main:gen} we elaborate in the \textit{Space usage} paragraph, later in this section.

\paragraph{Correctness}
Let $u,v$ be any pair of nodes. We claim that if the pair $u,v$ is not queried directly, then it must be separated by a pivot $p$ such that $p\notin C_{u,v}$, and then the correctness follows from the correctness proof in Section~\ref{Algorithm:Monte_Carlo}.
Next, we bound the amount of times a single pivot is chosen throughout the algorithm.

\begin{lemma}\label{lemma:pivots_bound}
With high probability, each pivot is chosen at most $8\log n$ times throughout all repetitions and depths.
\end{lemma}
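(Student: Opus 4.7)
The plan is to fix a stage with \MF range $[k, 2k)$ containing $R := 60k\log^3 n$ repetitions, and to bound, for each node $u \in V$, the count $N_u$ of times $u$ is chosen as pivot across all depths of all $R$ recursion trees of the stage. A union bound over the $n$ nodes and the $O(\log n)$ stages will then complete the argument. As a preparatory reduction, I would first analyze the hypothetical ``no-check'' variant of the algorithm in which step~\ref{step:algLV:4} omits the overused-pivot test; by a coupling on the random bits, if no pivot is ever chosen $8\log n$ times in the no-check variant, then the actual algorithm's trajectory coincides with the no-check trajectory, so it suffices to prove the bound for the latter.

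Two structural facts will drive the analysis. First, by step~\ref{step:algLV:5} we only recurse on parts of size exceeding $60k\log^3 n$, so every instance $V'$ on which an expansion operation is performed satisfies $|V'| \geq 60k\log^3 n$. Moreover, by Corollary~\ref{Corollary:Tournament}, at least half of $V'$ consists of successful pivots and the accepted pivot is uniform over them, so the conditional probability that $u$ is accepted at any given instance is at most $2/|V'|$. Second, the same tournament analysis together with the reassignment step forces every successful partition to have maximum part size $\tfrac{3}{4}|V'|$; thus within one repetition, the instances $V'_0 \supseteq V'_1 \supseteq \cdots \supseteq V'_{L-1}$ along $u$'s root-to-leaf path satisfy $|V'_{i+1}| \leq \tfrac{3}{4}|V'_i|$ and $|V'_{L-1}| \geq 60k\log^3 n$.

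The key computation is a geometric telescope that holds pathwise:
\[
\sum_{i=0}^{L-1} \frac{2}{|V'_i|} \;\leq\; \frac{2}{|V'_{L-1}|}\sum_{j=0}^{\infty}(3/4)^j \;=\; \frac{8}{|V'_{L-1}|} \;\leq\; \frac{8}{60k\log^3 n}.
\]
Summing the per-instance conditional expectations along $u$'s path and then over all $R$ repetitions yields $\mathbb{E}[N_u] \leq R \cdot 8/(60k\log^3 n) = 8$. To promote this constant-expectation bound to a high-probability tail bound, I would linearize all pivot draws across the stage into a single sequence and apply the standard Chernoff/Azuma-Bernstein inequality for sums of $[0,1]$-valued random variables with bounded conditional expectations; for $\mathbb{E}[N_u]=O(1)$ this yields $\Pr[N_u \geq 8\log n] \leq n^{-\omega(1)}$, and a union bound over the $n$ nodes and the $O(\log n)$ stages closes the argument.

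The main technical obstacle is the dependence structure: both the instances $V'_i$ on $u$'s root-to-leaf path and their sizes are themselves functions of earlier random choices. The crucial trick I rely on is that the geometric-decay bound holds \emph{pathwise} (using only the deterministic $\tfrac{3}{4}$-shrinkage from Corollary~\ref{Corollary:Tournament}), so $\sum_i 2/|V'_i|$ is almost-surely bounded by a deterministic constant and no independence assumption is needed for the expectation step. For concentration, a martingale-Chernoff variant handles the conditioning cleanly, since each conditional probability $\Pr[u \text{ drawn at step } i \mid \text{history}]$ is uniformly bounded by $1/(60k\log^3 n)$, as required by such a bound.
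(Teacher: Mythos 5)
Your proof is correct, and it takes a genuinely different route from the paper's.

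The paper argues \emph{per depth}: at a fixed recursion depth within a $[k,2k)$ stage there are at most $R = 60k\log^3 n$ instances containing a given node $p$ (one per repetition), each of size exceeding $60k\log^3 n$, so the expected number of times $p$ is accepted at that depth is $O(1)$; a Chernoff bound then gives $\leq 8\log n$ at that depth with probability $1-1/n^6$, and a union bound is taken over nodes and depths. Note that the downstream usage in the correctness proof (``at every depth of the recursion at most $16k\log n$ recursion trees become invalid'') consumes exactly this per-depth form of the bound.

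You instead exploit the deterministic $\tfrac{3}{4}$-shrinkage of instance sizes along a single root-to-leaf path to telescope $\sum_i 2/|V_i'|$ to a geometric series, obtaining a \emph{total} expectation of $\leq 8$ across all depths of all $R$ repetitions. This is a stronger statement that matches the literal wording of the lemma; a naive summation of the per-depth expectations would instead give $\Theta(\log n)$, which would not leave enough slack for the $8\log n$ threshold to concentrate polynomially. The coupling to the ``no-check'' execution is also a genuine improvement in rigor: once the overused-pivot test is part of the acceptance criterion, the accepted pivot is no longer uniform over the $\geq |V'|/2$ tournament-successful nodes, and the unconditioned $2/|V'|$ bound only holds on the coupled trajectory where the test never fires — which is exactly the event you are establishing. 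The paper elides this circularity. One microscopic nit: after the telescope you need both $\sum_i 2/|V_i'| \leq 8/(60k\log^3 n)$ pathwise \emph{and} the trivial per-trial bound $2/|V'_i| \leq 2/(60k\log^3 n)$ to invoke the martingale Chernoff/Freedman inequality; you state both, so the argument closes.
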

\begin{proof}
Pivots are only chosen in instances of size greater than $60k\log^3 n$, and each recursion depth has always at most $60k\log^3 n$ instances containing $p$. As a result, the probability that the same pivot $p$ is chosen more than $8\log n$ times in the same recursion depth is at most $1/n^6$.
By the union bound over all nodes, repetitions, and depths, with high probability no node is selected as a pivot more than $8\log n$ times, as required.
\end{proof}

We say that a recursion tree becomes \emph{invalid} for a pair $u,v$ if $u$ and $v$ are separated in the tree ($u,v$ were assigned to different sets, i.e., $f(u)\neq f(v)$) but in a sub-optimal way, due to a pivot choice $p$ that belonged to $C_{u,v}$.
By Lemma~\ref{lemma:pivots_bound}, at every depth of the recursion at most $16k\log n$ recursion trees become invalid for the pair $u,v$. This is because the size of optimal separators in this stage is bounded by $2k$.
Since the height of the recursion tree is at most $\log_{4/3} n$ (see below), at the end of the recursion at least $60k\log^3 n-16\cdot (1/\log(4/3))\cdot k\log^2 n\geq 20k\log^3 n$ repetitions must not be invalid for $u,v$.
In those $20k\log^3 n$ remaining repetitions, one of the following three options must occur.
\begin{enumerate}[label=(\roman*)]
\item \label{case:B}
$u$ and $v$ are not separated in the final instances of at least $k$ repetitions.
\item\label{case:C}
$u\in C_{p,v}$ (or $v\in C_{p,u}$, symmetrically) in at least $k$ repetitions.
\item\label{case:proper_separation}
$u$ and $v$ are separated $\left(\text{i.e., }u\in S_{f(u),p}\text{ and }v\in S_{p,f(u)}\right)$ in at least one repetition by a pivot $p$ that does not belong to $C_{u,v}$.
\end{enumerate}

In cases~\ref{case:B} and~\ref{case:C} we must have queried the pair $u,v$ directly in steps~\ref{step:algLV:6:together} and~\ref{step:algLV:6:in_separator}, respectively. 
Otherwise, as the at most $2(k-1)$ repetitions where $u$ and $v$ fail to satisfy the requirements of step~\ref{step:algLV:6} can be identified, case~\ref{case:proper_separation} occurs and we immediately have the right answer in a recursion tree.

To bound the height of the recursion trees, observe that Corollary $2.5$ in~\cite{AKT20}, especially its adapted Corollary~\ref{Corollary:Tournament} mentioned above, is robust enough to accommodate the terminals, an aspect previously leveraged in those algorithms to handle the arbitrary node subsets resulting from the recursion.
Thus, the same argument from the Monte Carlo algorithm on the height of the recursion trees being at most $\log_{4/3} n$ follows here as well.
This concludes the correctness proof.

\paragraph{Running time}
A successful choice of a pivot $p$ in step~\ref{step:algLV:4} means that both (\ref{case:succ_pivot_old}) $|V_{small}|\geq n'/4$, and (\ref{case:succ_pivot_new}) $p$ is not an overused pivot.
The first case (\ref{case:succ_pivot_old}) is handled similarly to the Monte Carlo algorithm; the probability of $8\log n$ consecutive failures in a single instance is bounded by $(1/2)^{8\log n} = 1/n^8$, and by applying the union bound over the at most $\tO(n^2)$ instances in all repetitions, 
with probability at most $1/n^6$ at least one instance takes more than $8\log n$ attempts until a successful choice of $p$.
The second case (\ref{case:succ_pivot_new}) was handled in the correctness proof in Lemma~\ref{lemma:pivots_bound}. Again, by the union bound over the two cases, with high probability no instance takes more than $8\log n$ times to select a pivot, and no pivots are overused.
To bound the total number of \MF calls, consider the total number of terminal pairs in all repetitions (counting multiplicities) for which we could potentially directly call \MF.
\begin{itemize}
\item
for case~\ref{case:B}, this number is at most $\tO((\alpha(V)/k)/k \cdot k^2 \cdot k)\leq \tO(\alpha(V)k)$.
This is explained by having $(\alpha(V)/k)$ terminals in the $[k,2k)$ stage, considering only final (small) $\leq \tO(k)$ sized instances in this case, the at most $\tO(k^2)$ pairs of terminals within a small instance, and the $\tO(k)$ repetitions.
\item
for case~\ref{case:C}, this number is simply bounded by $\tO(\sum_{u\in V} \alpha(u)\cdot k)=\tO(\alpha(V)k)$.
\end{itemize}
Since we only call \MF directly for a node pair if it is required by at least $k$ repetitions of a certain type (either \ref{case:B}, or~\ref{case:C}), the total number of queries in both cases is bounded by $\tO(\alpha(V))$.
The basic recursion takes $\tO(k\cdot (\alpha(V)/k))\leq \tO(\alpha(V))$ calls to \MF, concluding the running time proof.

\paragraph{Space usage}
We aim to bound the total space used by the data structure, starting with item~\ref{DS:with} of Theorem~\ref{thm:algorithm:main:gen}.
\begin{itemize}
\item
For cuts in step~\ref{step:algLV:2}, an initial bound for the occupied space is at most
$$
\tO\left(\max_{k} \left\lbrace k\cdot\left(\min\{ \alpha(V)/k,n \}\cdot 2k\right)\right\rbrace \right)\leq  \tO(\alpha(V)n),
$$ 
taking into account that for a cut $\left(S_{p,f(u)},C_{p,f(u)},S_{f(u),p}\right)$ in this step and in a $[k,2k)$ stage, we only store the separator $C_{p,f(u)}$ if its size is smaller than $2k$, and the set $\{f^{-1}f(u)\}$ (i.e., the nodes assigned to the side opposite of $p$). 
However, this can be further improved as all node pairs $u,v\in V_{\geq k}$ are anyway queried directly for $k\geq \sqrt{\alpha(V)/(60\log^3 n)}$ in step~\ref{step:algLV:6:together} due to the halting condition in step~\ref{step:algLV:5}.
Consequently, we may perform those queries directly in a preprocessing step, performing the recursion only for stages where $k\leq \tO\left(\sqrt{\alpha(V)}\right)$, and so the space occupied in step~\ref{step:algLV:2} becomes $\tO\left(\alpha(V)^{3/2}\right)$, as required.
\item
For cuts in step~\ref{step:algLV:6}, recall that any query we make in this step is required by at least $k$ repetitions. Using the bound from the Monte Carlo algorithm, the occupied space here is at most
$$\tO\left(k\cdot\alpha(V)^{3/2}/k\right) \leq \tO\left(\alpha(V)^{3/2}\right).$$
\end{itemize}
Altogether, the bound is $\tO\left(\alpha(V)^{3/2}\right)$, as required.

In order to obtain data structure~\ref{DS:without} of Theorem~\ref{thm:algorithm:main:gen}, we do not store the separators from step~\ref{step:algLV:6} as they were only used as witnesses. However, we do maintain the list of $\tO(\alpha(V))$ node pairs we directly called \MF for in step~\ref{step:algLV:6}, together with the corresponding answers.
We further give up storing the separators from step~\ref{step:algLV:2}, notwithstanding the immediate consequence that for a cut $\left(S_{p,f(u)},C_{p,f(u)},S_{f(u),p}\right)$ and a node pair $u,v$ we cannot anymore immediately distinguish between the case where $u$ and $v$ are separated in the cut, i.e., $v\in S_{p,f(u)}$, and the case where $v\in C_{p,f(u)}$.
However, we may assume that a direct $\MF(u,v)$ call was not made in step~\ref{step:algLV:6:in_separator} (as we store the list of direct \MF queries), and so it must be that this $v\in C_{p,f(u)}$ case covered by step~\ref{step:algLV:6:in_separator}, and also the case where $v\in S_{f(u),p}$ in final instances covered by step~\ref{step:algLV:6:together}, happened in at most $k-1$ repetitions each.
As established in the correctness proof, at least $20k\log^3 n$ repetitions are not invalid for $u,v$ at the end of the recursion, and so at least $20k\log^3 n - 2(k-1)\geq 18k\log^3 n$ recursion trees must have the optimal answer for $u,v$.
To find the optimal answer, simply take the majority among the smallest $19k\log^3 n$ answers, i.e., in repetitions where $u$ and $v$ do not belong together in final instances, concluding a query time of $\tO(k)$ (as discussed before, node pairs in $[k,2k)$ stages with $k\geq \tilde{\Omega}\left(\sqrt{\alpha(V)}\right)$ can be answered faster, in $\tO(1)$ time).
The total space for data structure~\ref{DS:without} is thus at most $\tO\left(\alpha(V)\right)$, as required.

\section{Conditional Lower Bound for Undirected Graphs with Unit Node-Capacities}\label{sec:CLB}

Here, we show that the $n^{3-o(1)}$ conditional lower bound~\cite{AKT20_soda} for \APMF on undirected graphs with general node-capacities can be extended, using the source-sink isolating gadget presented in~\cite{HLSW23}, to prove a conditional lower bound for undirected graphs with unit node-capacities.

\begin{lemma}\label{lemma:CLB}
\TOV over vector sets of size $n$ and dimension $d$ can be reduced to \APMF on undirected graphs with $\Theta(nd)$ nodes, $\Theta(n^2d^2)$ edges, and unit node-capacities.
\end{lemma}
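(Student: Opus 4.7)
The plan is to start from the $n^{3-o(1)}$ reduction of~\cite{AKT20_soda}, which reduces \TOV to \APMF on undirected graphs with \emph{general} node-capacities, and then apply the source-sink isolating gadget of~\cite{HLSW23} to each weighted node so as to produce an equivalent instance that uses only unit node-capacities and meets the claimed size bounds.

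First, I would invoke the AKT20 construction: given vector sets $A, B, C$ of size $n$ and dimension $d$, it produces an undirected node-capacitated graph $G^*$ together with a list of designated query pairs such that the minimum \MF value over these pairs drops below a fixed threshold $\tau$ if and only if $(A, B, C)$ admits an orthogonal triple. The graph $G^*$ has $\poly(n,d)$ nodes whose capacities are bounded by $\poly(d)$, with total node-capacity $\Theta(nd)$ and an edge layout compatible with the target bound $\Theta(n^2 d^2)$.

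Second, I would apply the \cite{HLSW23} isolating gadget node-by-node: each node $v$ of capacity $w(v)$ is replaced by a cluster $K_v$ of $\Theta(w(v))$ unit-capacity nodes whose internal structure is engineered so that (i) the minimum vertex-cut that isolates $K_v$ from the rest of the graph equals exactly $w(v)$, matching the original capacity, and (ii) $K_v$ does not serve as an intermediary creating new shortcut $st$-paths between clusters whose originals were non-adjacent. Each original edge $\{u, v\}$ is then realized as a unit-capacity bipartite connection between $K_u$ and $K_v$ as prescribed in~\cite{HLSW23}. Summing, the node count becomes $\sum_v w(v) = \Theta(nd)$ as desired, and the edge count, taken over the expanded connections plus internal gadget edges, stays within $\Theta(n^2 d^2)$ provided the gadget is applied only to those nodes of $G^*$ whose capacity actually exceeds one.

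Third, I would check correctness: any feasible $st$-flow in $G^*$ lifts to one of the same value in the resulting unit-capacity graph by rerouting inside each $K_v$, while any minimum $st$-cut in the unit-capacity graph can be projected back to a no-larger cut in $G^*$ by the isolation property of each cluster—so the threshold $\tau$ continues to separate yes- from no-instances of \TOV. The main obstacle will be verifying property (ii) above, i.e.\ that the internal clique-like structure of each $K_v$ does not inadvertently create new flow paths bypassing the intended bottleneck coordinates of the AKT20 construction; closing this gap requires a careful invocation of the isolation guarantee of the~\cite{HLSW23} gadget, which was designed precisely for this purpose. A secondary, bookkeeping challenge is restricting the gadget application to the high-capacity nodes only, in order to keep the final edge count at $\Theta(n^2 d^2)$ rather than a higher polynomial in $n$ and $d$.
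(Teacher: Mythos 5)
Your proposal misidentifies both what the \cite{HLSW23} gadget does and where the real difficulty lies, and this leads to a genuine gap.

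The per-node replacement you describe—replacing a node of capacity $w(v)$ by $w(v)$ unit-capacity copies, with edges realized as bicliques between clusters—is just the standard node-splitting transformation and requires no gadget at all; it preserves max-flows exactly. What \cite{HLSW23}'s Lemma~3.2 actually provides is a \emph{source-sink isolating gadget}: a single auxiliary graph $Q$ attached once to two designated vertex sets $X,Y$, with the property that $\MF_{R\cup Q}(x,y)=\MF_{R_{xy}}(x,y)+|X|+|Y|$ where $R_{xy}$ deletes the \emph{other} sources and sinks in $(X\cup Y)\setminus\{x,y\}$. It is not a capacity-simulation gadget applied node-by-node, and your ``property~(ii)'' worry about bicliques creating shortcuts is not the issue—node-splitting is exact.

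The obstacle you have not addressed is the one that actually motivates the gadget. In the construction of~\cite{AKT20_soda}, an $st$-flow for one query pair can zigzag through other sources and sinks in $V_1\cup V_3$, corrupting the value; \cite{AKT20_soda} neutralizes this by scaling all other node capacities by $2n$, which makes the at most $2(n-1)$ zigzagging paths negligible. After node-splitting, that scaling would multiply the node count by $\Theta(n)$, yielding $\Theta(n^2 d)$ nodes rather than the claimed $\Theta(nd)$. The paper's actual fix is to drop the scaling entirely, perform plain node-splitting on the \emph{intermediate} (unscaled) graph $G_I$ to get a unit-capacity graph of the right size, and then attach the \cite{HLSW23} source-sink isolating gadget once with $X=V_1$, $Y=V_3$, so that flows are automatically measured in $R_{st}$ (with other sources/sinks removed) up to the additive constant $|V_1|+|V_3|=2n$. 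Your proposal never invokes the gadget in this role and never explains how the zigzag problem is eliminated within the $\Theta(nd)$ node budget, so as written the reduction does not go through.
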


To see how Theorem~\ref{thm:CLB} follows from Lemma~\ref{lemma:CLB}, one can simply add isolated nodes and recall that the output is always of size $\Theta(n^2)$.

\begin{proof}[Proof of Lemma~\ref{lemma:CLB}]
We generally follow the reduction from \TOV presented in~\cite{AKT20_soda}, with minor modifications. As an intermediate construction in this reduction, a layered graph with undirected edges and a few directed edges was created. The following lemma describes the properties of this reduction, where \STMF is a restricted version of \APMF focusing only on $\MF(s,t)$ with $s\in S$ and $t\in T$, used here for convenience.

\begin{lemma}[From the proof of Lemma $4.2$ in~\cite{AKT20_soda}]\label{lemma:AKT_soda}
\TOV over vector sets of size $n$ and dimension $d$ can be reduced to $\STMF$ with $|S|=|T|=n$ in graphs $G$ with $\Theta(n\cdot d)$ nodes, $\Theta(n\cdot d)$ edges, and with the following properties.
\begin{itemize}
\item all nodes in $S\cup T$ have capacity $1$, and also $\sum_{u\in V(G)}c(u)=O(nd)$.
\item each edge $uv\in E(G)$ satisfies $c(u)\cdot c(v)\leq nd^2$.
\item the edges adjacent to $S$ are all directed away from $S$, and the edges adjacent to $T$ are all directed towards $T$. The rest of the edges in $G$ are undirected.
\end{itemize} 
\end{lemma}

It was shown that if an \STMF algorithm could distinguish between the following two cases, then the input \TOV instance can be solved. In the first case, all pairs of nodes $s\in S$ and $t\in T$ satisfy $\MF_{G}(s,t)=nd$, implying that there exists no orthogonal triple in the \TOV instance, i.e. it is a ``NO'' instance. In the second case, there exists at least one pair of nodes $s'\in S$ and $t'\in T$ such that $\MF_{G}(s',t')=nd-1$, indicating the presence of an orthogonal triple in the \TOV instance, i.e. it is a ``YES'' instance.
In the final construction, all edges must become undirected. To prevent flows from zigzagging through $S$, and similarly through $T$, all node-capacities are multiplied by $2n$, except for nodes in $S$ and $T$. It was then proved that the number of zigzagging flow paths in every $st$-flow is at most $|(S\cup T)\setminus \{s,t\}|\leq 2(n-1)$. Therefore, these paths can be disregarded after the node-capacities are multiplied by $2n$.

We will show that an alternative way to increasing the capacities in $G$ is to use the source-sink gadget. However, before doing so, we need to modify the graph so that it only has unit node-capacities. This can be easily done by replacing each node $u$ with $c(u)$ copies of itself, and by then adding a complete biclique between the new copies of $u$ and the new copies of $v$ for each edge $uv\in G$. The resulting graph is denoted by $G'$.

The number of nodes in $G'$ is clearly $\sum_{u\in V(G)}c(u)=O(nd)$, and the number of edges in $G'$ is $\sum_{uv\in E(G)}c(u)\cdot c(v)\leq O(nd\cdot nd^2)=O(n^2d^3)$ (in fact, a more careful analysis leads to a bound of $O(n^2d^2)$). Importantly, for every $s\in S$ and $t\in T$, every $st$-flow in $G$ naturally corresponds to an $st$-flow of the same value in $G'$, and vice versa; to transform an $st$-flow in $G'$ to an $st$-flow in $G$, simply contract nodes that correspond to the same node in $G$. Although the resulting flow may contain cycles, they can be disregarded.

Below is the source-sink isolating lemma, adapted to our notations.

\begin{lemma}[Lemma $3.2$ in~\cite{HLSW23}]\label{lemma:CleanupGadget}
Given an undirected graph $R$ and two disjoint groups of vertices $X,Y\subseteq V(R)$, there is a graph $Q$ with $V(Q)\cap V(R)=X\cup Y$ such that for any $x\in X,y\in Y$ with $(x,y)\notin E(R)$,
\[\MF_{R\cup Q}(x,y)=\MF_{R_{xy}}(x,y)+|X|+|Y|,\]
where $R_{xy}=R\setminus((X\cup Y)\setminus\{x,y\})$. Such a graph $P$ is called a source-sink isolating gadget and there is an algorithm that constructs $Q$ with $|V(Q)|=O(|X|+|Y|)$.
\end{lemma}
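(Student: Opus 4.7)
My plan is to prove the lemma by explicitly constructing $Q$ together with (i) a flow in $R\cup Q$ from $x$ to $y$ of value $\MF_{R_{xy}}(x,y)+|X|+|Y|$, and (ii) an $xy$-vertex-cut in $R\cup Q$ of the same size. The design principle is that $Q$ must play two roles simultaneously: it should supply exactly $|X|+|Y|$ internally vertex-disjoint fresh $xy$-paths (accounting for the $+|X|+|Y|$ additive term), and it should \emph{isolate} every other terminal in $(X\cup Y)\setminus\{x,y\}$ so that it cannot serve as a useful intermediate node in an $xy$-flow of $R\cup Q$. The size constraint $|V(Q)|=O(|X|+|Y|)$ is tight enough to rule out naive per-terminal expander blowups, so the gadget has to be a single shared structure that works for every choice of $(x,y)\in X\times Y$.

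The first construction I would try is a biclique-like hub: introduce $|X|+|Y|$ fresh hub vertices $w_1,\dots,w_{|X|+|Y|}$, each joined in $Q$ to every vertex of $X\cup Y$. This trivially satisfies $V(Q)\cap V(R)=X\cup Y$ and $|V(Q)|=O(|X|+|Y|)$. It supplies $|X|+|Y|$ length-two paths $x\text{-}w_k\text{-}y$ and simultaneously gives every pair of terminals many parallel length-two routes through the hubs. For the flow lower bound I would invoke Menger's theorem in the unit node-capacitated graph $R_{xy}$ to obtain $\MF_{R_{xy}}(x,y)$ internally vertex-disjoint $xy$-paths, and union them with the $|X|+|Y|$ bridge paths; the $R_{xy}$-paths live in $V(R)\setminus((X\cup Y)\setminus\{x,y\})$ while the bridge paths use only fresh hub vertices internally, so the combined family is internally vertex-disjoint in $R\cup Q$, certifying $\MF_{R\cup Q}(x,y)\geq\MF_{R_{xy}}(x,y)+|X|+|Y|$.

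For the matching upper bound I would take a min $xy$-vertex-cut $C$ in $R\cup Q$ and argue three steps: (a) all $|X|+|Y|$ hubs must lie in $C$, else a surviving $w_k$ gives a direct $x\text{-}w_k\text{-}y$ path; (b) using the parallel connectivity the hubs provide between terminals, placing any vertex of $(X\cup Y)\setminus\{x,y\}$ into $C$ is strictly wasteful and can be avoided; and (c) once the hubs are removed and no other terminal is in $C$, the residual cut $C\setminus\{w_1,\dots,w_{|X|+|Y|}\}$ must $xy$-separate in the remainder of $R\cup Q$, which, since every $Q$-route has been killed, is precisely an $xy$-separator of $R_{xy}$, so $|C|\geq\MF_{R_{xy}}(x,y)+|X|+|Y|$. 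The explicit cut $C^\ast\cup\{w_1,\dots,w_{|X|+|Y|}\}$ for any optimal $R_{xy}$-separator $C^\ast$ realises this bound.

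\textbf{Main obstacle.} I expect steps (b)--(c) to be the crux. The plain biclique-hub does not by itself stop an $xy$-path in $R$ that routes through some $x'\in X\setminus\{x\}$; such a path survives the removal of hubs and can force $C$ to absorb extra vertices beyond the claimed budget, which would break the equality $\MF_{R\cup Q}(x,y)=\MF_{R_{xy}}(x,y)+|X|+|Y|$ from above. If the biclique alone is insufficient, I would strengthen $Q$ by adding, within the hub layer, short auxiliary gadget paths that \emph{simulate} each other terminal's potential passer role and charge it against the $|X|+|Y|$ budget, together with a Menger-type rerouting argument exchanging any $R$-path through $x'\in X\setminus\{x\}$ (resp.\ $y'\in Y\setminus\{y\}$) for an equivalent gadget route. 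Getting this exchange to balance exactly, so the count comes out to $\MF_{R_{xy}}(x,y)+|X|+|Y|$ and not more or fewer, is the main technical hurdle.
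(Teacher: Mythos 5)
There is a genuine gap, and it sits exactly where you predicted but is fatal to the construction as proposed. The biclique hub does nothing to neutralize the terminals in $(X\cup Y)\setminus\{x,y\}$: any vertex-disjoint $x$--$y$ paths of $R$ that pass through those terminals survive alongside the $|X|+|Y|$ hub paths, so your gadget yields $\MF_{R\cup Q}(x,y)=\MF_{R}(x,y)+|X|+|Y|$ rather than $\MF_{R_{xy}}(x,y)+|X|+|Y|$, and $\MF_{R}(x,y)$ can strictly exceed $\MF_{R_{xy}}(x,y)$. A three-vertex counterexample: let $R$ be the path $x - x' - y$ with $X=\{x,x'\}$, $Y=\{y\}$, so $\MF_{R_{xy}}(x,y)=0$ and the lemma demands value $3$; with your three hubs the graph $R\cup Q$ has the four internally disjoint paths $x\text{-}w_1\text{-}y$, $x\text{-}w_2\text{-}y$, $x\text{-}w_3\text{-}y$, $x\text{-}x'\text{-}y$, so the value is $4$. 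Consequently steps (b)--(c) of your cut argument cannot be repaired for this $Q$: every minimum cut must contain all hubs, and what remains must separate $x$ from $y$ in all of $R$, not in $R_{xy}$.

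The fix you sketch (``auxiliary gadget paths that simulate each other terminal's passer role and charge it against the budget'') is not only unworked-out, it points in the wrong direction: a path that merely \emph{simulates} a terminal leaves that terminal's own unit capacity free, so it still carries extra $R$-flow. What the isolating gadget must do is \emph{spend} the capacity of each terminal in $(X\cup Y)\setminus\{x,y\}$ --- i.e., route gadget flow through those terminals themselves (equivalently, force them into every minimum $xy$-cut), so that the residual flow in $R$ is confined to $R_{xy}$; the bookkeeping then comes out as $(|X|+|Y|-2)$ absorbed terminals plus two direct gadget connections at $x$ and $y$, matching the additive $|X|+|Y|$, and all of this must be achieved by a single pair-independent $Q$ of size $O(|X|+|Y|)$. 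Note also that the paper you are reading does not prove this lemma at all --- it imports it verbatim from~\cite{HLSW23} --- so your proposal is being measured against that external construction, which is built around saturating the off-pair terminals rather than around a shared hub; without that ingredient your argument establishes only the lower bound $\MF_{R\cup Q}(x,y)\geq \MF_{R_{xy}}(x,y)+|X|+|Y|$, not the claimed equality.
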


Constructing $Q$ can be easily done in linear time with respect to its size. We use Lemma~\ref{lemma:CleanupGadget}, taking $R=G'$, $X=S$ and $Y=T$. The lemma guarantees that $\MF_{G'\cup Q}(s,t)=\MF_{R_{st}}(s,t)+|S|+|T|$ for any $s\in S$ and $t\in T$. Therefore, we can determine whether there exists an orthogonal triple by checking whether $\MF_{G'}(s,t)=nd+2n$ for all $s\in S$ and $t\in T$, or if there exists $s'\in S$ and $t'\in T$ such that $\MF_{G'}(s',t')=nd-1+2n$.

The size of the graph remains asymptotically the same after adding the source-sink isolating gadget. Since \APMF is a more general problem than \STMF, we can conclude Lemma~\ref{lemma:CLB}.

\end{proof}

\section{Open Problems}
After our work, many open problems on the fine-grained complexity of \APMF remain. Here are a few:
\begin{enumerate}
\item Can we break the $O(n^4)$ bound for any hard setting? With our new non-reducibility results, (non-combinatorial) subquartic algorithms seem more attainable.
\item Can we establish a conditional lower bound of $n^2m^{1-o(1)}$  for all $m$, against combinatorial algorithms? Such a lower bound is not known even for directed graphs with general edge capacities.
\item Can we design an $\hO(m^2)$ time combinatorial algorithm for directed graphs with unit edge-capacities? Our adaptation of the framework in~\cite{AKT20} may hint its robustness.
\item Finally, the most significant open question after this work is: Can we design a subquartic nondeterministic algorithm for directed graphs with edge-capacities? Since our proof for the cut upper bounds in Lemma~\ref{lemma:nondetmaxflows} holds for this setting too, it is enough to develop a fast nondeterministic algorithm that proves flow lower bounds.
\end{enumerate}

\section{Acknowledgments}
We thank Julia Chuzhoy for valuable discussions and feedback, and Chandra Chekuri, Karthik Srikanta, and anonymous reviewers for helpful comments on this manuscript.
{\small
\ifprocs
\bibliographystyle{alpha}
\else
\bibliographystyle{alphaurlinit}
\fi
\bibliography{references}
}

\end{document}